\newcommand{\old}[1]{{}}
\newcommand{\todo}[1]
{
    \marginpar
        {\bf {[!!!]}}
        {\bf {#1}}
}
\newcommand{\wo}{\setminus}
\newcommand{\agp}{\textnormal{AGP}}
\newcommand{\agr}{\textnormal{AGR}}
\newcommand{\bigO}{\operatorname{O}}
\newcommand{\bigOmega}{\operatorname{\Omega}}
\newcommand{\conv}{\textnormal{conv}}
\newcommand{\dcup}{\mathbin{\dot{\cup}}}
\newcommand{\kernel}{\textnormal{kernel}}
\newcommand{\N}{\mathds{N}}
\newcommand{\one}{\mathds{1}}
\newcommand{\R}{\mathds{R}}
\newcommand{\V}{\mathcal{V}}
\newcommand{\zero}{{\bf 0}}
\newcommand{\NP}{NP\xspace}
\newcommand{\ie}{i.\,e.\xspace}
\newcommand{\OBdA}{Without loss of generality\xspace}
\def\vis#1{\mathcal{V}(#1)}
\renewenvironment{proof}{\noindent{\it Proof.}}{\qed\bigskip}
\algnewcommand\Input{\item[\textbf{Input:}]}
\acrodef{AG}{Art Gallery}\acused{AG}
\acrodef{AGP}{Art Gallery Problem}\acused{AGP}
\acrodef{AGR}{Art Gallery Relaxation}\acused{AGR}
\acrodef{AGR2}[AGR\textsubscript{2}]{2-Guard Art Gallery Relaxation}\acused{AGR2}
\acrodef{CLIQUE}{Clique Decision Problem}\acused{CLIQUE}
\acrodef{EC}{Edge Cover}\acused{EC}
\acrodef{FC}{Full Circulant Decision Problem}\acused{FC}
\acrodef{FEC}{Fractional Edge Cover}\acused{FEC}
\acrodef{FECG}{Fractional Edge Cover Graph}\acused{FECG}
\acrodef{FECSG}{Fractional Edge Cover Support Graph}\acused{FECSG}
\acrodef{IP}{Integer Program}\acused{IP}
\acrodef{LP}{Linear Program}\acused{LP}
\acrodef{Q0}[Q\textsubscript{0}]{Minimum}\acused{Q0}
\acrodef{Q1}[Q\textsubscript{1}]{First Quartile}\acused{Q1}
\acrodef{Q2}[Q\textsubscript{2}]{Second Quartile}\acused{Q2}
\acrodef{Q3}[Q\textsubscript{3}]{Third Quartile}\acused{Q3}
\acrodef{Q4}[Q\textsubscript{4}]{Maximum}\acused{Q4}
\acrodef{SC}{Set Cover}\acused{SC}
\acrodef{SCP}{Set Covering Problem}\acused{SCP}
\begin{document}

\title{Facets for Art Gallery Problems\thanks{%
	Alexander Kr\"oller was partially supported by DFG project Kunst!, KR 3133/1-1.
	Research was conducted while Stephan Friedrichs and Christiane Schmidt were affiliated with TU Braunschweig.
	Christiane Schmidt is supported by the Israeli Centers of Research Excellence (I-CORE) program (Center No. 4/11).}}

\author{S{\'a}ndor P.~Fekete \and Stephan Friedrichs \and Alexander Kr\"oller \and Christiane Schmidt}

\authorrunning{Fekete, Friedrichs, Kr\"oller, and Schmidt} 

\institute{S{\'a}ndor P.~Fekete / Alexander Kr\"oller \at
Technische Universit\"at Braunschweig \\
Institut f\"ur Betriebssysteme und Rechnerverbund \\
\email{\{s.fekete, a.kroeller\}@tu-bs.de}
\and
Stephan Friedrichs \at
Max Planck Institute for Informatics, Saarbr\"ucken, Germany \\
\email{sfriedri@mpi-inf.mpg.de}
\and
Christiane Schmidt \at
The Rachel and Selim Benin School of Computer Science and Engineering\\
The Hebrew University of Jerusalem\\
\email{cschmidt@cs.huji.ac.il}
}

\date{Received: date / Accepted: date}

\maketitle

\begin{abstract}
The {\sc Art Gallery Problem} (AGP) asks for placing a minimum number of stationary
guards in a polygonal region $P$, such that all points in $P$ are guarded.
The problem is known to be NP-hard, and its inherent continuous structure
(with both the set of points that need to be guarded and the set of
points that can be used for guarding being uncountably infinite)
makes it difficult to apply a straightforward formulation as an Integer
Linear Program.  We use an iterative primal-dual relaxation approach
for solving AGP instances to optimality.
At each stage, a pair of LP relaxations for a finite candidate subset of
primal covering and dual packing constraints and variables is
considered; these correspond to possible guard positions and
points that are to be guarded.

Particularly useful are cutting planes for eliminating fractional solutions.
We identify two classes of facets, based on {\sc Edge Cover} and {\sc Set Cover} (SC) inequalities.
Solving the separation problem for the latter is NP-complete, but exploiting
the underlying geometric structure, we show that large subclasses of
fractional SC solutions cannot occur for the AGP.
This allows us to separate the relevant subset of facets in polynomial time.
We also characterize all facets for finite AGP relaxations with coefficients
in $\{0, 1, 2\}$.

Finally, we demonstrate the practical usefulness of our approach.
Our cutting plane technique yields a significant improvement in terms of speed and solution quality due to considerably reduced integrality gaps as compared to the approach by Kröller et~al.~\cite{kbfs-esbgagp-12}.

\keywords{Art Gallery Problem \and geometric optimization \and algorithm engineering \and solving NP-hard problem instances to optimality \and art gallery polytope \and set cover polytope \and facets \and cutting planes}
\end{abstract}
\section{Introduction}\label{sec:intro}

The {\sc Art Gallery Problem} (AGP) is one of the classical problems of geometric optimization:
given a polygonal region $P$ with $n$ vertices, find as few stationary guards as possible, such that any
point of the region is visible by at least one of the guards. As first proven by Chv\'atal~\cite{c-actpg-75}
and then shown by Fisk~\cite{f-spcwt-78}
in a beautiful and concise proof (which is highlighted in the shortest chapter in ``Proofs from THE BOOK''~\cite{thebook}),
$\left\lfloor\frac{n}{3}\right\rfloor$ guards are sometimes necessary and always sufficient
when $P$ is a simple polygon. Worst-case bounds of this type are summarized under the name ``Art-Gallery-type
theorems'', and used as a metaphor even for unrelated problems; see O'Rourke~\cite{r-agta-87} for
an early overview, and Urrutia~\cite{u-agip-00} for a more recent survey.

Algorithmically, the AGP is closely related to the {\sc Set Cover} (SC) problem:
All points in $P$ have to be covered by star-shaped subregions of $P$.
The AGP is NP-hard, even for a simply connected polygonal region $P$~\cite{ll-ccagp-86}.
However, the SC problem has no underlying geometry, and it is well known that
geometric variants of problems may be easier to solve or approximate than their
discrete, graph-theoretic counterparts, so it is natural to explore ways to
exploit the geometric nature of the AGP.
But the AGP is far from being easily discretized, as both the set to be covered
(all points in $P$) as well as the covering family (all star-shaped subregions
around some point of $P$) usually are uncountably infinite.


It is natural to consider more discrete versions of the AGP. Ghosh~\cite{g-aaagp-10} showed that restricting
possible guard positions to the $n$ vertices, \ie, the AGP with vertex guards, allows an
$\bigO(\log n)$-approximation algorithm of complexity $\bigO(n^5)$; conversely, Eidenbenz et al.~\cite{esw-irgpt-01}
showed that for a region with holes, finding an optimal set of vertex guards is at least as hard as SC, so there is little
hope of achieving a better approximation guarantee than $\bigOmega(\log n)$.
While these results provide tight bounds in terms of approximation, they do by no means close the
book on the arguably most important aspect of mathematical optimization: combining structural insights with powerful mathematical
tools in order to achieve provably optimal solutions for instances of interesting size.
Moreover, even a star-shaped polygon may require a large number of vertex
guards, so general AGP instances may have significantly
better solutions than the considerably simpler discretized version with vertex guards.

\subsection{Solving AGP Instances}

Computing optimal solutions for general AGP instances is
not only relevant from a theoretical point of view, but has also gained in practical importance
in the context of modeling, mapping and surveying complex environments, such as in the fields of architecture
or robotics and even medicine, which are seeking to exploit the ever-improving capabilities of computer vision and
laser scanning. Amit, Mitchell and Packer~\cite{amp-lgvcp-10}
have considered purely combinatorial primal and dual heuristics for general AGP instances.
Only very recently have researchers begun to combine methods from
integer linear programming with non-discrete geometry in order to obtain optimal solutions.
As we have shown in \cite{kbfs-esbgagp-12}, it is possible to
combine an iterative primal-dual relaxation approach with structures from computational geometry in order to
solve AGP instances with unrestricted guard positions; this approach is based on considering a sequence
of primal and dual subproblems, each with a finite number of primal variables (corresponding to guard positions)
and a finite number of dual variables (corresponding to ``witness'' positions).

Couto et al.~\cite{csr-eeaoagp-07,csr-eeeaoagp-08,crs-exmvg-11} used a similar approach for the AGP with vertex guards.
Tozoni et al.~\cite{DaviPedroCid-OO2013}
proposed and algorithm computes
lower and upper bounds for the AGP, based on
computing finite set-cover instances with the help of a
state-of-the-art IP solver.
To generate a lower bound, a finite set of witness candidates is chosen
and a restricted AGP is solved, in which only the witnesses
have to be covered. For this, it suffices
to extract a finite set of potential guard positions
from the visibility arrangement of the witness set in order to
ensure optimality.
Similarly, finite sets of potential witness positions
for a given finite guard set can be
extracted from the visibility arrangement of the guards.
This allows it to compute upper and lower bounds for the optimal
AGP value by solving discrete set cover instances.
The algorithm  of \cite{DaviPedroCid-OO2013} iterates between  generating tighter
lower and  upper bounds  by refining the  witness and  guard candidate
sets along the iterations.
It stops when lower and upper bounds coincide.
Although no theoretical convergence has been established, in tests, the
approach is able to yield optimal solutions
for a  large variety of  instance classes,  even for
polygons      with      up       to      a      thousand      vertices.

An approach presented in \cite{kbfs-esbgagp-12}
considers a similar primal-dual scheme, but focuses on the
linear relaxation of the primal guard cover, whose dual is the witness packing problem.
This forms the basis of integer solutions and the approach presented in this paper; more details are described
in Section~\ref{sec:model}. Furthermore, we have collaborated with the authors of
\cite{csr-eeaoagp-07,csr-eeeaoagp-08,crs-exmvg-11,DaviPedroCid-OO2013}
and produced a video \cite{bdd-pgpcs-13} that highlights and illustrates the approaches to the AGP,
and also demonstrates its relevance for practical applications.

\subsection{Set Cover}

Also important for the work on the AGP is the discrete and finite problem of covering a
given set of objects by an inexpensive collection of subsets. This is known as the
{\sc Set Cover Problem} (SC), which has enjoyed a considerable amount of attention.
Highly relevant for the purposes of this paper
is the work by Balas and Ng~\cite{bn-scp-89} on the discrete SC polytope,
which describes all its facets with coefficients in $\{0, 1, 2\}$.

\subsection{Our Results}

\begin{figure}
        \centering
                \def\svgwidth{.48\linewidth}
                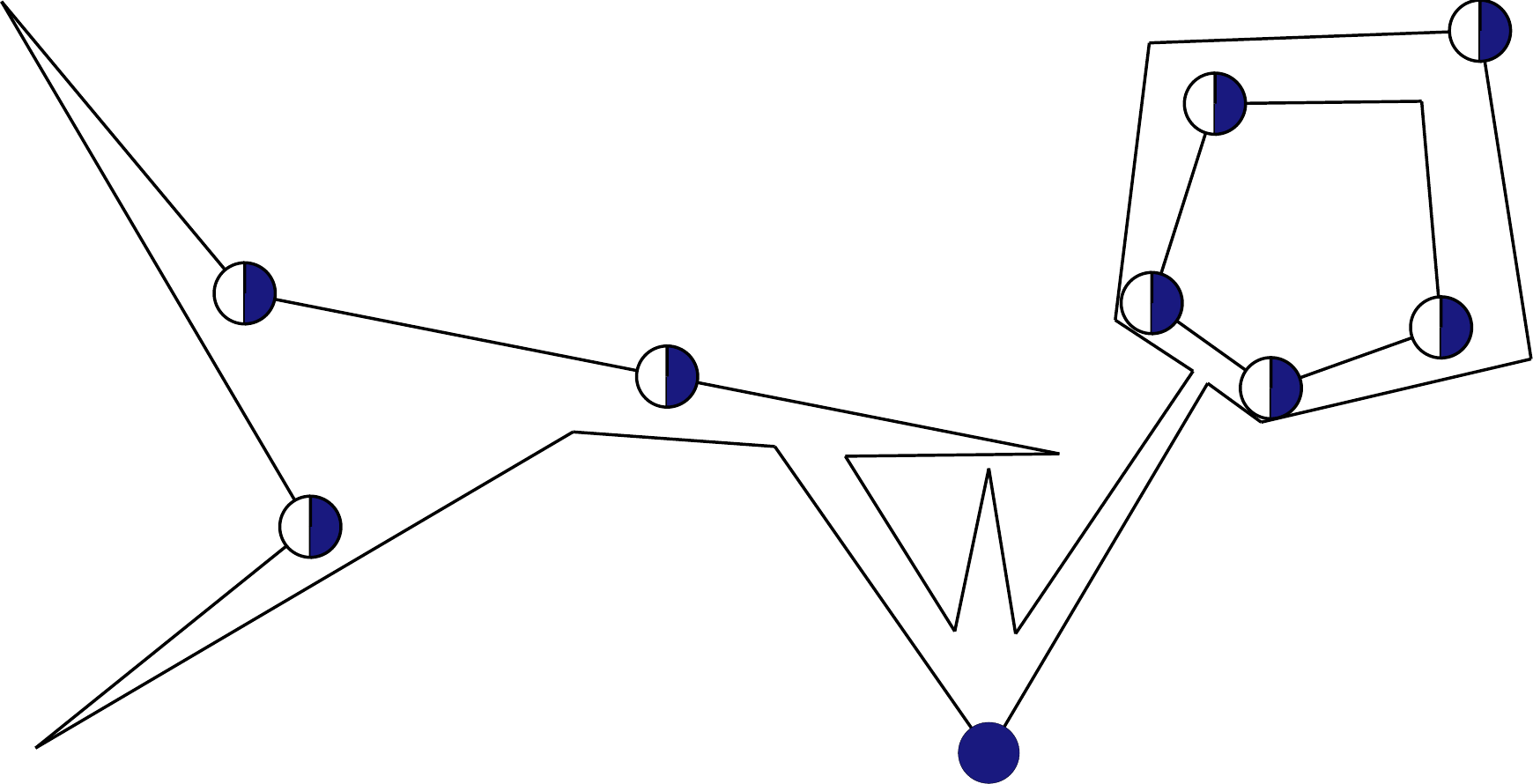
                \label{fig:testcase-fractional}
                \def\svgwidth{.48\linewidth}
                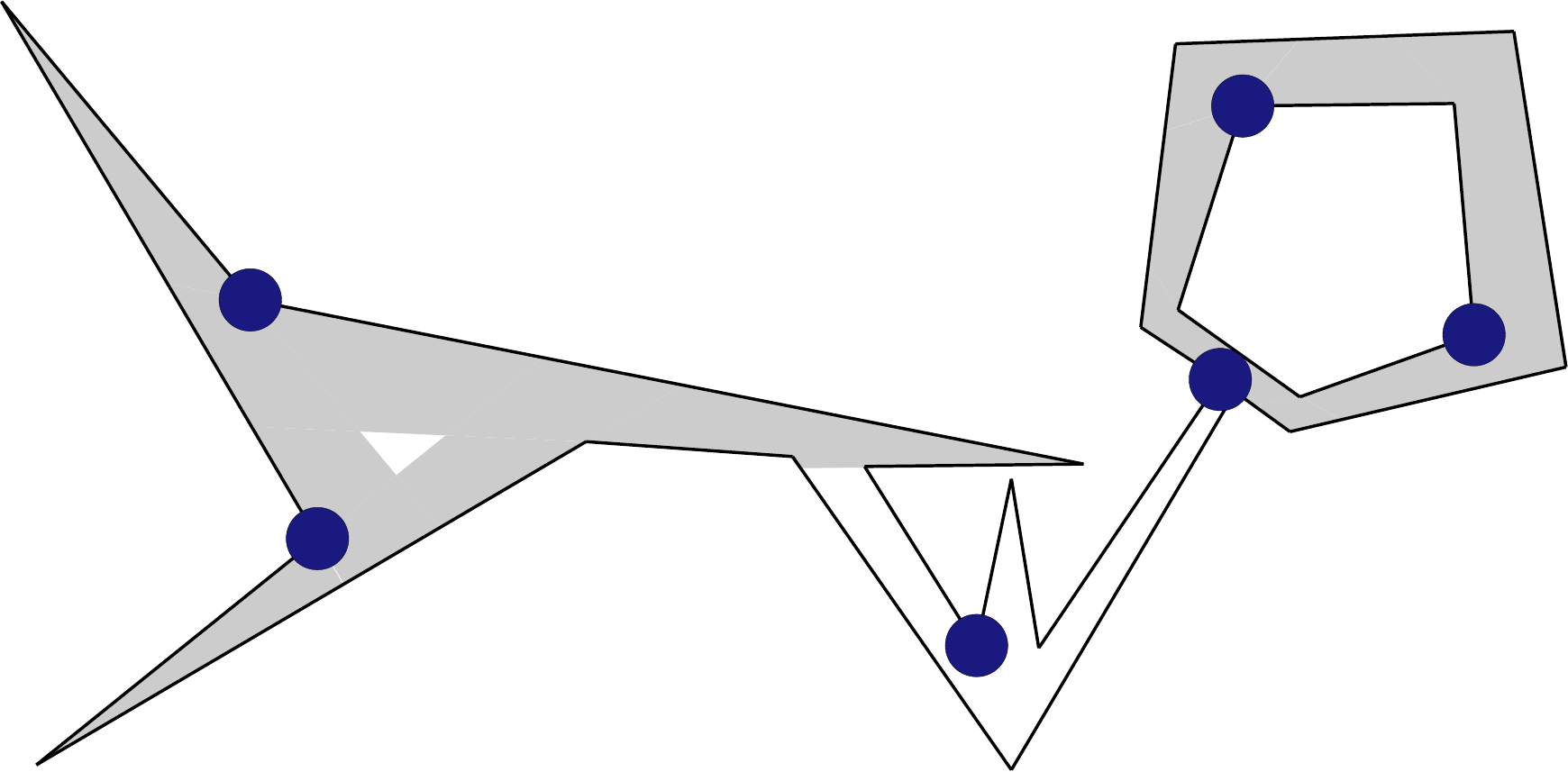
                \label{fig:testcase-cuts}
        \caption[Solution without and with Cuts]{%
        An optimal fractional solution of value 5 without (left) and an optimal integer solution of value 6
        with cutting planes (right).
        Circles show guards, fill-in indicates fractional amount.
        Cutting planes enforce at least two guards in the left
        and three in the right area, both marked in gray.
        }
        \label{fig:testcase}
\end{figure}

In this paper, we extend and deepen our recent work~\cite{kbfs-esbgagp-12} on iterative primal-dual
relaxations, by proving a number of polyhedral properties of the resulting AGP
polytopes and integrating them into modified versions of the algorithm presented in~\cite{kbfs-esbgagp-12}.
We provide the first study of the art gallery polytope and give a full characterization of all
its facets with coefficients in $\{0, 1, 2\}$.

Remarkably, we are able to exploit geometry to prove that only a very restricted
family of facets of the general SC polytope will typically have to be used as
cutting planes for removing fractional variables.
Instead, we are able to prove that many fractional solutions only occur in
intermittent SC subproblems; thus, they simply vanish when new guards or
witnesses are introduced.
This saves us the trouble of solving an NP-complete separation problem.
Computational results illustrate greatly reduced integrality gaps for a wide
variety of benchmark instances, as well as reduced solution times.
Details are as follows.
Related SC results are described by Balas et~al.~\cite{bn-scp-89}.

\begin{itemize}
\item We provide two variants of our primal-dual framework for solving the AGP.
	Both aim at producing binary solutions, one integrates an IP in the primal phase and both greatly benefit from our cutting planes.
	Our algorithms also serve as benchmark for the cutting plane approach in our experiments.
\item We show how to employ cutting planes for an iterative primal-dual framework for solving the
AGP. This is interesting in itself, as it provides an approach to tackling
optimization problems with infinitely many constraints and variables.
The particular challenge is to identify constraints that remain valid for any
choice of infinitely many possible primal and dual variables, as we are not solving one particular
IP, but an iteratively refined sequence.
\item Based on a geometric study of the involved SC constraints, we characterize
all facets of involved AGP polytopes that have coefficients in $\{0, 1, 2\}$.
In the SC setting, these facets are capable of cutting off fractional solutions,
but the separation problem is NP-complete.
We use geometry to prove that only some of these facets are able to cut off
fractional solutions in an AGP setting under reasonable assumptions, allowing us
to solve the separation problem in polynomial time.
\item
	We provide a class of facets based on {\sc Edge Cover} (EC)
	constraints.

\item We demonstrate the practical usefulness of our results by showing greatly improved solution speed and quality for
a wide array of large benchmarks.
\end{itemize}



\section{Preliminaries}
\label{sec:prelim}

We consider a polygonal region $P$ with $n$ vertices that may have holes, \ie, that does not have to be simply connected.
For a point $p\in P$, we denote by $\V(p)$ the {\em visibility polygon} of $p$
in $P$, \ie, the set of all $q\in P$, such that the straight-line connection
$\overline{pq}$ lies completely in $P$.
$P$ is \emph{star-shaped} if $P = \V(p)$ for some $p \in P$.
The set of all such points is the \emph{kernel} of $P$, denoted by $\kernel(P)$.
For a set $S\subseteq P$, $\V(S):=\cup_{p\in S} \V(p)$.

A set $C\subseteq P$ is a {\em guard cover} of $P$, if $\V(C)=P$.
The AGP asks for a guard cover
of minimum cardinality $c$; this is the same as covering $P$ by a minimum number
of star-shaped sub-regions of $P$.
Note that Chvátal's Watchman Theorem~\cite{c-actpg-75} guarantees
$c \leq \left\lfloor\frac{n}{3}\right\rfloor$.
For simplicity, we abbreviate $x(G) := \sum_{g \in G} x_g$, for any vector $x$.

\section{Mathematical-Programming Formulation and LP-Based Solution Procedure}
\label{sec:model}%
\def\AG{\agp}
\def\AGP{\agp}
\def\AGR{\agr}

In order to keep this work self-contained, we briefly recapitulate our previously published~\cite{kbfs-esbgagp-12}
LP formulations of the AGP as well as how to use them to obtain fractional
optimal Art Gallery solutions.
Then we motivate the necessity to integrate cutting planes to cut off those
fractional solutions in order to obtain binary ones.
Furthermore, we specify requirements for cutting planes, allowing us to
seamlessly integrate them in our framework.

Let $P$ be a polygon and $G,W \subseteq P$ sets of points for possible guard
locations and \emph{witnesses}, \ie, points to be guarded, respectively.
We assume $W \subseteq \V(G)$, which is easily guaranteed by initially including all vertices of $P$ in $G$.
The AGP that only requires covering $W$ exclusively using guards in $G$ can be formulated as an IP
denoted by $\agp(G,W)$:
\begin{alignat}{3}
  \text{\makebox[3em][l]{min}} & \sum_{g\in G} x_g \label{eq.ipf.obj}\\
  \text{\makebox[3em][l]{s.\,t.}} & \sum_{g\in G\cap\vis{w}} x_g \geq 1 &\;\;& \forall w\in W\label{eq.ipf.cover}\\
           &  x_g \in \{0,1\}       &    & \forall g\in G,\label{eq.ipf.bin}
\end{alignat}
where the original AGP is $\agp(P,P)$.
Chvátal's Watchman Theorem~\cite{c-actpg-75} guarantees that only a finite
number of variables in $\agp(P,P)$ are non-zero, but it still
has uncountably many variables and constraints,
so it cannot be solved directly.
Thus we consider finite $G,W \subset P$ and iteratively solve $\agp(G,W)$ while adding points to $G$ and $W$.
For dual separation and to generate lower bounds,
we require the LP relaxation $\agr(G,W)$ obtained by relaxing the
integrality constraint~\eqref{eq.ipf.bin} to:
\begin{equation}
0\leq x_g\leq 1 \quad\forall g\in G.\label{eq.lpf.var}
\end{equation}
The dual of $\agr(G,W)$ is
\begin{alignat}{3}
  \text{\makebox[3em][l]{max}}    & \sum_{w \in W} y_w \\
  \text{\makebox[3em][l]{s.\,t.}} & \sum_{w \in W \cap \vis{g}} y_w \leq 1 &\;\;& \forall g \in G \\
                                  &  0 \leq y_w \leq 1                     &    & \forall w \in W.
\end{alignat}
The algorithms based on this formulation and the following argumentation are presented in pseudocode in Section~\ref{sec:algorithms} (Algorithms~\ref{alg:lp} and~\ref{alg:ip}).

The relation between a solution of $\agr(G,W)$ and $\agr(P,P)$ is not obvious,
see Figure~\ref{fig:agp-relaxations} for the following argumentation.
In~\cite{kbfs-esbgagp-12}, we show that $\agr(P,P)$ can be
solved optimally for many problem instances by using finite $G$ and $W$.
The procedure uses primal/dual separation (\ie, cutting planes and column
generation) to connect $\agr(G,W)$ to $\agr(P,P)$:

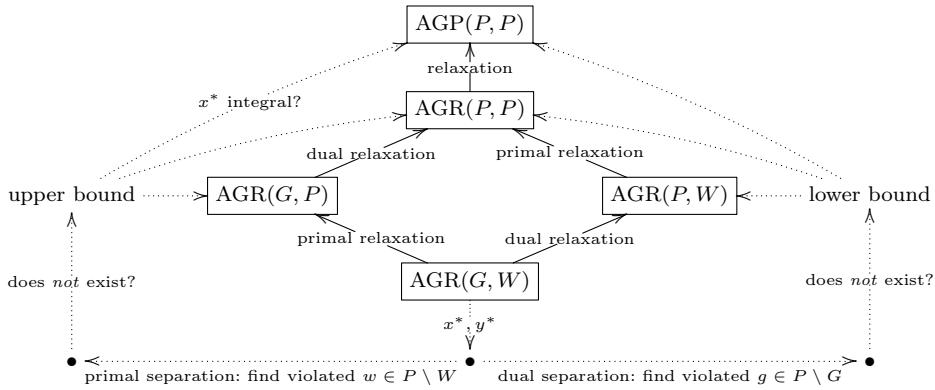
\begin{figure}
	\xymatrix@R-2mm{
	&
		& *+[F]{\agp(P,P)}
		&
		&
		\\
	&
		& *+[F]{\agr(P,P)} \ar[u]|-{\textnormal{relaxation}}
		&
		&
		\\
	\textnormal{upper bound}
			\ar@{.>}[r]
			\ar@{.>}@/^/[urr]
			\ar@{.>}@/^/[uurr]|-{\textnormal{$x^*$ integral?}}
		& *+[F]{\agr(G,P)} \ar[ur]|-{\textnormal{dual relaxation}}
		&
		& *+[F]{\agr(P,W)} \ar[ul]|-{\textnormal{primal relaxation}}
		& \textnormal{lower bound}
				\ar@{.>}[l]
				\ar@{.>}@/_/[ull]
				\ar@{.>}@/_/[uull]
		\\
	&
		& *+[F]{\agr(G,W)}
				\ar[ur]|-{\textnormal{dual relaxation}}
				\ar[ul]|-{\textnormal{primal relaxation}}
				\ar@{.>}[d]|-{x^*,\,y^*}
		&
		&
		\\
	\bullet \ar@{.>}[uu]|-{\textnormal{does \emph{not} exist?}}
		&
		& \bullet
				\ar@{.>}[ll]^{\textnormal{primal separation: find violated $w \in P \setminus W$}}
				\ar@{.>}[rr]_{\textnormal{dual separation: find violated $g \in P \setminus G$}}
		&
		& \bullet \ar@{.>}[uu]|-{\textnormal{does \emph{not} exist?}}
	}
	\caption[The \acl{AGP} and its Relaxations]{%
	The \ac{AGP} and its relaxations for $G, W \subseteq P$.
	Dotted arrows represent which 
	conclusions may be drawn from the primal and dual solutions $x^*$ and $y^*$.
	}
	\label{fig:agp-relaxations}
\end{figure}

For some finite sets $G$ and $W$, we solve $\agr(G,W)$ using the
simplex method. This produces an optimal primal solution $x^*$ and
dual solution $y^*$ with objective value $z^*$.
The primal is a minimum covering of $W$ by the guards in $G$, the dual a maximum
packing of witnesses in $W$, such that each guard in $G$ sees at most one of
them.
We analyze $x^*$ and $y^*$ as follows:
\begin{enumerate}
\item If there exists a point $w\in P\wo W$ with $x^*(G \cap \vis{w})
  < 1$, then $w$ corresponds to an inequality of $\AGR(P,P)$ that is
  violated by $x^*$. The new witness $w$ is added to $W$, and the LP is
  re-solved. If such a point $w$ cannot be found, $x^*$ is optimal for
  $\AGR(G,P)$, and $z^*$ is an upper bound for $\AGR(P,P)$.
\item If there exists a point $g\in P\wo G$ with $y^*(W \cap
  \vis{g})>1$, then it corresponds to a violated dual inequality of
  $\AGR(P,P)$. We create the LP column for $g$ and re-solve the LP. If
  such a $g$ does not exist, $y^*$ is an optimal dual solution for
  $\AGR(P,W)$ and $z^*$ is a lower bound for $\AGR(P,P)$.
\end{enumerate}

\old{
Both separation problems can be solved efficiently, using the following geometric interpretation.
The overlay of the visibility polygons of all points $g\in G$ with
$x^*_g>0$  decomposes $P$ into a planar arrangement of bounded complexity. The
coverage function $c(p):=\sum_{g\in G \cap \vis{p}}x^*(p)$ is constant over
every face and edge of the arrangement. Thus,
primal separation simply requires checking all regions of the arrangement for
coverage less than $1$,
while
dual separation
means looking for coverage larger than $1$.
}

Both separation problems can be solved efficiently using the overlay
of the visibility polygons of all points $g\in G$ with
$x^*_g>0$ (for the primal case) and all $w \in W$ with $y^*_w > 0$ (for the dual
case), which decomposes $P$ into a planar arrangement of bounded complexity.

Should the upper and the lower bound meet, we have an optimal solution of
$\agr(P,P)$, but $\agr(P,P)$ is the LP relaxation of $\agp(P,P)$, so its optimal
solution may contain fractional guard values~\cite{kbfs-esbgagp-12},
compare Figure~\ref{fig:testcase}.
At this point, it is possible to solve $\agp(G,P)$ using primal separation only,
which produces binary upper bounds; but they do not necessarily match the lower
bounds, which are still obtained using the relaxation.
This scenario can prevent our procedure from terminating, even if it found an
optimal Art Gallery solution, because it might be unable to prove its optimality.
Algorithm~\ref{alg:ip} in Section~\ref{sec:algorithms} explores that approach.

In the remainder of this paper, we explore the use of cutting planes to cut
off large classes of fractional solutions obtained by a procedure like the one
described above, increasing lower bounds and enhancing integrality.
Let $\alpha$ be such a cutting plane.
Recall that $\agp(P,P)$ has an infinite number of both variables and constraints.
That means that it is not enough for $\alpha$ to be feasible for $\agp(G,W)$
for the current iteration's finite sets $G$ and $W$;
$\alpha$ must remain feasible in all future iterations of our algorithm.
Formally, feasibility for $\agp(G,W)$ is insufficient; instead, we require $\alpha$
not to cut off any $x \in \{0,1\}^{G'}$ for an arbitrary $P \supseteq G' \supseteq G$,
such that $x$ is
feasible for $\agp(G',P)$.
%
An LP with a set $A$ of such additional constraints is denoted by $\agr(G,W,A)$,
its IP counterpart by $\agp(G,W,A)$.
Note that $\agp(G,P)$ and $\agp(G,P,A)$ have the same set of feasible solutions.
By $\agp(G,W)$, we sometimes denote the set of its feasible solutions rather
than the IP itself, as in $\conv(\agp(G,W))$.
See Section~\ref{sec:algorithms} for LP- and IP-based algorithms using the framework presented in this section.


\section{Algorithms}
\label{sec:algorithms}

The algorithm of Kröller et~al.~\cite{kbfs-esbgagp-12} produces fractional solutions of the AGP.
We present two modifications, Algorithms~\ref{alg:lp} and~\ref{alg:ip}, focused on obtaining binary solutions.

Our first modification, used in both algorithms, is that we do not run primal and dual separation, compare Section~\ref{sec:model}, in every iteration.
Instead, we repeatedly run primal (dual) separation until a primally (dually) feasible solution has been obtained and then switch to running dual (primal) separation until a feasible dual (primal) solution has been found, and so on.
We call these phases \emph{primal (dual) phases} and repeat an alternating sequence of them, until primally and dually feasible solutions with matching bounds have been found.

\subsection{LP Mode}

Algorithm~\ref{alg:lp} relies on cutting planes to cut off fractional solutions that are feasible for $\agr(G,P)$, but not $\agp(G,P)$.
Those cutting planes are constraints in the primal LP, and variables in the dual.
This means that they have two effects:
They enhance the integrality of acquired solutions and they increase the lower bound.

The issue with this approach is that we are not guaranteed to find a binary solution, because we might not have a cutting plane available which is able to cut off the current primal solution.

\begin{algorithm}
	\begin{algorithmic}[1]
	\Input Polygon $P$
	\State $G \gets W \gets \textnormal{all vertices of $P$}$
	\State $A \gets \emptyset$
	\State $(\textnormal{lowerBound}, \textnormal{upperBound})
			\gets (1, \infty)$
	\Statex
	\Repeat
		\Repeat
			\State $(x^*, y^*) \gets$ optimize $\agr(G,W,A)$
			\State $W \gets W \cup \textnormal{run primal separation}$
			\State $A \gets A \cup \textnormal{separate cuts}$\label{alg:lp-cuts-1}
			\If{separation failed \textbf{and} $x^*$ is integral}
				\State $\textnormal{upperBound} \gets \min(\textnormal{upperBound},
						\textnormal{objective value of $x^*$})$
			\EndIf
		\Until{separation failed \textbf{or} $\textnormal{lowerBound} = \textnormal{upperBound}$}

		\Repeat
			\State $(x^*, y^*) \gets$ optimize $\agr(G,W,A)$
			\State $G \gets G \cup \textnormal{run dual separation}$
			\State $A \gets A \cup \textnormal{separate cuts}$\label{alg:lp-cuts-2}
			\If{separation failed}
				\State $\textnormal{lowerBound} \gets \max(\textnormal{lowerBound},
						\lceil\textnormal{objective value of $y^*$}\rceil)$
			\EndIf
		\Until{separation failed \textbf{or} $\textnormal{lowerBound} = \textnormal{upperBound}$}
	\Until{$\textnormal{lowerBound} = \textnormal{upperBound}$ \textbf{or} time limit reached}
	\end{algorithmic}
	\caption[\acs{LP} Mode Algorithm]{%
	The \acs{LP} mode algorithm only solves \acp{LP}.
	}
	\label{alg:lp}
\end{algorithm}

\subsection{IP Mode}

The typical approach of eliminating fractional solutions in linear optimization is to employ an integer program (IP).
In Algorithm~\ref{alg:ip}, we solve $\agp(G,W,A)$ for finite $G, W \subset P$ and iteratively apply primal separation to the result, which produces feasible binary solutions.

Unfortunately, this procedure does not necessarily find optimal solutions of $\agp(P,P)$, because it does not generate new guard positions:
For generating guards we need a dual solution, which an IP cannot provide.
To counter that, we use the dual phase of Algorithm~\ref{alg:lp} where we solve the LP $\agr(G,W,A)$.
This step is supported by cutting planes, which help increase the lower bound and thus reducing the integrality gap.

Note that Algorithm~\ref{alg:ip} is not guaranteed to terminate, because an optimal fractional and an optimal binary solution may require different guard locations~\cite{kbfs-esbgagp-12}.
This effect is weakened, but not completely suppressed by the use of cutting planes.
The impact is that there is an integrality gap between the upper and the lower bounds, which can be large.

\begin{algorithm}
	\begin{algorithmic}[1]
	\Input Polygon $P$
	\State $G \gets W \gets \textnormal{all vertices of $P$}$
	\State $A \gets \emptyset$
	\State $(\textnormal{lowerBound}, \textnormal{upperBound})
			\gets (1, \infty)$
	\Statex
	\Repeat
		\Repeat
			\State $x^* \gets$ optimize $\agp(G,W,A)$
			\State $W \gets W \cup \textnormal{run primal separation}$
			\If{separation failed}
				\State $\textnormal{upperBound} \gets \min(\textnormal{upperBound},
						\textnormal{objective value of $x^*$})$
			\EndIf
		\Until{separation failed \textbf{or} $\textnormal{lowerBound} = \textnormal{upperBound}$}

		\Repeat
			\State $(x^*, y^*) \gets$ optimize $\agr(G,W,A)$
			\State $G \gets G \cup \textnormal{run dual separation}$
			\State $A \gets A \cup \textnormal{separate cuts}$\label{alg:ip-cuts}
			\If{separation failed}
				\State $\textnormal{lowerBound} \gets \max(\textnormal{lowerBound},
						\lceil\textnormal{objective value of $y^*$}\rceil)$
			\EndIf
		\Until{separation failed \textbf{or} $\textnormal{lowerBound} = \textnormal{upperBound}$}
	\Until{$\textnormal{lowerBound} = \textnormal{upperBound}$ \textbf{or} time limit reached}
	\end{algorithmic}
	\caption[\acs{IP} Mode Algorithm]{%
	The \acs{IP} mode algorithm has one difference to
	Algorithm~\ref{alg:lp}:
	It solves \acp{IP} in the primal separation phase, thus only producing
	binary upper bounds.
	}
	\label{alg:ip}
\end{algorithm}

\section{Set Cover Facets}\label{sec:sc}

For finite sets of guards and witnesses $G, W \subset P$, $\agp(G,W)$ is an SC
polytope.
This motivates the investigation of SC-based facets.
In this section, we discuss a family of facets inspired by
Balas et~al.~\cite{bn-scp-89} and show that their separation,
while \NP-complete in the SC setting, can, under reasonable assumptions, be
solved in polynomial time when
exploiting
the underlying geometry of the AGP.
Additionally, we present a complete list of all AGP facets only using
coefficients in $\{0,1,2\}$.

\subsection{A Family of Facets}
\label{sec:sc-facets}

Let $P$ be a polygon and $G,W \subset P$ finite sets of guard and witness
positions.
Consider a finite non-empty subset
$\emptyset \subset S \subseteq W$ of witness positions;
the overlay of visibility regions of $S$
is called $\alpha_S$.
It implies the partition $P = J_0 \dcup J_1 \dcup J_2$, see~Figure~\ref{fig:j012}.
This is the geometry that is analogous to what Balas and Ng~\cite{bn-scp-89} did for the SC polytope.

\begin{figure}
	\centering
	\def\svgwidth{.6\linewidth}
	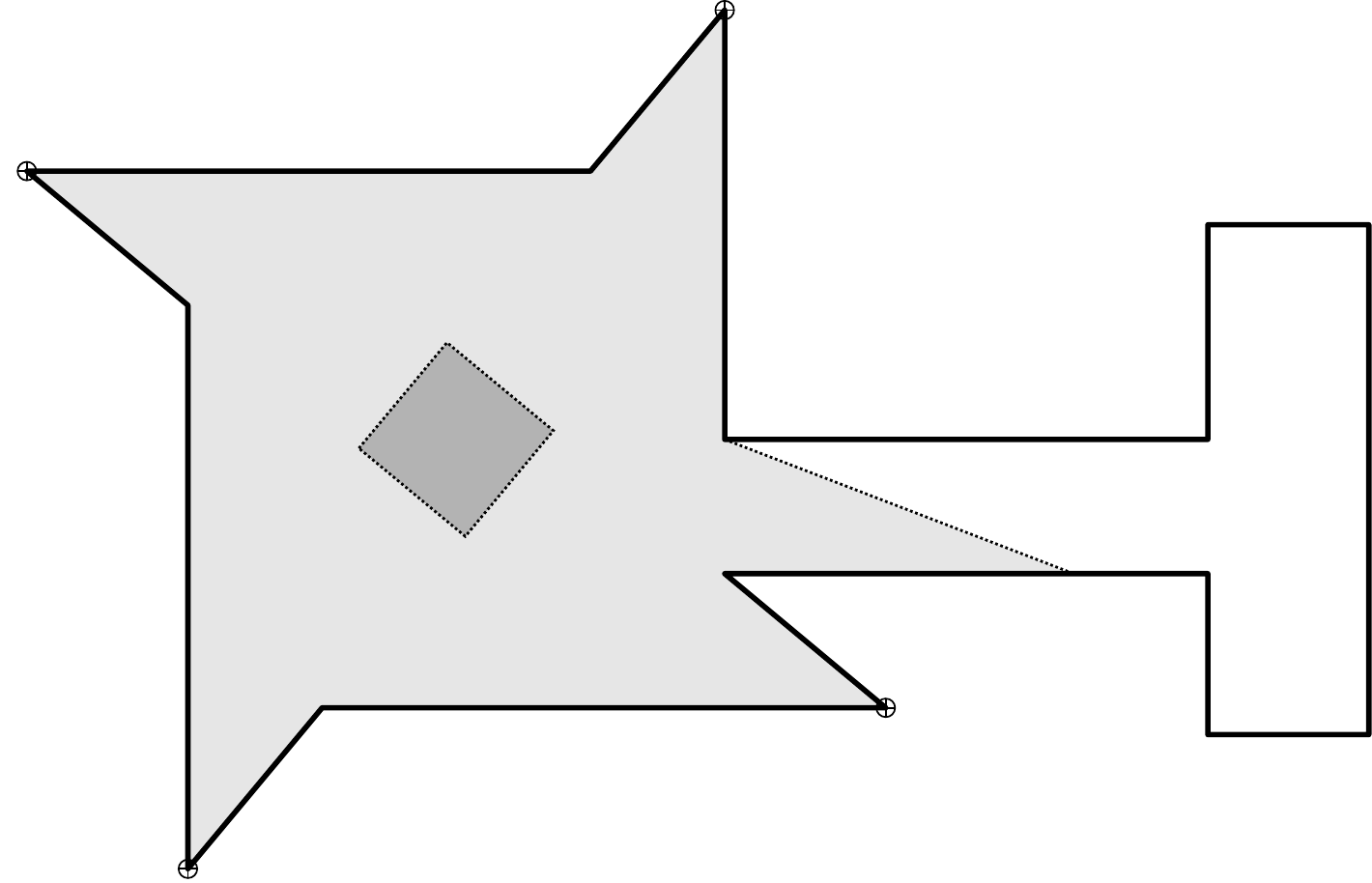
	\caption[Geometric Interpretation of $J_0$, $J_1$ and $J_2$]{%
	Polygon and witness selection $S = \{w_1, w_2, w_3, w_4\}$.
	Guards located in $J_2$ can cover all of $S$, and those in $J_1$
	some part of it, while those in $J_0$ cover none of
	$S$.
	}
	\label{fig:j012}
\end{figure}

\begin{enumerate}
\item
	$J_2:= \{g \in P\mid S \subseteq \V(g)\}$, the set of points in $P$ covering
	all of $S$.
\item
	$J_0:= \{g \in P\mid \V(g) \cap S = \emptyset\}$, the set of positions in $P$
	that see none of $S$.
\item
	$J_1:= P \setminus (J_2 \cup J_0)$ 
	the set of positions in $P$ that cover a non-trivial subset of $S$.
\end{enumerate}

Every feasible solution of the \ac{AGP} has to cover $S$.
Thus, it takes one guard in $J_2$, or at least two guards in $J_1$ to cover $S$.
For any $G$, this induces the following constraint \eqref{eq:j012-constraint};
for the sake of simplicity, we will also refer to this by $\alpha_S$.

\begin{equation}\label{eq:j012-constraint}
	\sum_{g \in J_2 \cap G} 2 x_g + \sum_{g \in J_1 \cap G} x_g \geq 2
\end{equation}

In the context of our iterative algorithm, it is important to represent
$\alpha_S$ independently from $G$.
This is achieved by storing the visibility overlay of the witnesses in $S$,
which implicitly makes the regions $J_0$, $J_1$ and $J_2$ available.
Any guard $g\in J_i$ in current or future iterations simply gets the coefficient
$i$.

Sufficient coverage of $S$ is necessary for sufficient coverage of
$P$, so \eqref{eq:j012-constraint} is valid for any $x\in \{0,1\}^G$ that is
feasible for $\agp(G,P)$, thus fulfilling our requirement of remaining feasible
in future iterations.
However, covering $S$ may require more than two guards in $J_1$,
so \eqref{eq:j012-constraint} does not always provide a
supporting hyperplane of $\conv(\agp(G,W))$.

When choosing a single witness $S = \{w\}$, we obtain $J_2 = \V(w)$,
$J_1 = \emptyset$ and $J_0 = P \setminus \V(w)$.
The resulting constraint is Inequality~\eqref{eq.ipf.cover}, the
witness-induced constraint of $w$,
multiplied by two.
For a choice of $S$ with two witnesses, $S = \{w_1, w_2\}$,
constraint~\eqref{eq:j012-constraint} yields the sum of the witness-induced
constraints of $w_1$ and $w_2$.
Thus, we consider $|S| \geq 3$ in the remainder of this section.

In order to show when~\eqref{eq:j012-constraint} defines a facet of
$\conv(\agp(G,W))$, we first need to apply a result
of~\cite{bn-scp-89} to the AGP setting.

\begin{lemma}\label{lem:agp-fulldim}
Let $P$ be a polygon and $G,W \subset P$ finite sets of guard and witness
positions.
Then $\conv(\agp(G,W))$ is full-dimensional, if and only if
\begin{equation}
	\forall w \in W: \quad \left| \V(w) \cap G \right| \geq 2
\end{equation}
\end{lemma}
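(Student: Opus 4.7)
The plan is to recognize $\agp(G,W)$ as a \emph{Set Cover} polytope and then either invoke the classical Balas--Ng full-dimensionality criterion directly or replay its short argument in the AGP notation. Concretely, $\agp(G,W)$ is $\{x \in \{0,1\}^G : A x \geq \one\}$, where $A$ is the $W \times G$ incidence matrix with $A_{w,g} = 1$ iff $g \in \V(w)$. The hypothesis $|\V(w) \cap G| \geq 2$ for every $w \in W$ is exactly the statement that every row of $A$ contains at least two ones, which is the Balas--Ng condition for full-dimensionality of the SC polytope.

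For the easy ``only if'' direction, I would argue by contraposition. Since $W \subseteq \V(G)$, every witness is covered by at least one guard; if some $w \in W$ were covered by \emph{exactly} one guard, call it $g^*$, then constraint~\eqref{eq.ipf.cover} for $w$ forces $x_{g^*} = 1$ in every $x \in \agp(G,W)$. Consequently $\conv(\agp(G,W))$ lies on the hyperplane $x_{g^*} = 1$ and has dimension at most $|G|-1$, contradicting full-dimensionality.

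For the ``if'' direction, I will exhibit $|G|+1$ affinely independent feasible points. The all-ones vector $\one \in \R^{G}$ is feasible since $W \subseteq \V(G)$. For each $g^* \in G$, the vector $\one - e_{g^*}$ is still feasible: each witness $w \in W$ is covered by at least two guards in $G$ by hypothesis, so removing $g^*$ leaves at least one covering guard for every $w$. The $|G|$ difference vectors $\one - (\one - e_{g^*}) = e_{g^*}$ form the standard basis of $\R^G$ and are therefore linearly independent, which shows that the $|G|+1$ points are affinely independent. Hence $\dim \conv(\agp(G,W)) = |G|$.

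The only subtlety worth flagging is the assumption $W \subseteq \V(G)$ already built into the framework in Section~\ref{sec:model}: without it, a witness with \emph{zero} visible guards would make $\agp(G,W)$ empty, and the statement would be vacuous in a misleading way. Given that this assumption is maintained throughout, the argument is essentially a direct transcription of the Balas--Ng result, and I expect no real obstacle beyond being careful to cite~\cite{bn-scp-89} and verifying that the AGP covering matrix satisfies its hypotheses.
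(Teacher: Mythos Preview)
Your proof is correct and follows essentially the same route as the paper: the paper also exhibits the vectors $\one - e_i$ as feasible solutions for the ``if'' direction and, for the converse, argues that a witness with at most one visible guard forces either infeasibility or the fixed coordinate $x_g = 1$. Your version is in fact slightly more careful, since you explicitly include $\one$ to obtain $|G|+1$ affinely independent points, whereas the paper only names the $|G|$ vectors $\one - e_i$ and asserts full-dimensionality from their linear independence; your framing via the Balas--Ng criterion is also exactly the spirit in which the paper introduces the lemma.
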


\begin{proof}
We start by proving necessity. If every witness is seen by at least two guards, the $|G|$
vectors $x^i = \one - e_i$ are linearly independent and feasible
solutions of $\agp(G,W)$, so $\conv(\agp(G,W))$ is full-dimensional.

Now we consider sufficiency. If $\V(w) \cap G = \emptyset$ for some $w \in W$,
there is no feasible solution at all; if $\V(w) \cap G = \{g\}$, there is
none with $x_g = 0$, so there cannot be more than $|G| - 1$ linearly
independent solutions, and $\conv(\agp(G,W))$ is not full-dimensional.
\end{proof}

We require some terminology adapted from~\cite{bn-scp-89}.
Two guards $g_1,g_2 \in J_1$ are a \emph{2-cover} of $\alpha_S$, if
$S \subseteq \V(g_1) \cup \V(g_2)$.
The \emph{2-cover graph} of $G$ and $\alpha_S$ is the graph with nodes in
$J_1 \cap G$ and an edge between $g_1$ and $g_2$ if and only if $g_1,g_2$ are a 2-cover
of $\alpha_S$.
In addition, we have $T(g)=\{w \in \V(g) \cap W \mid \V(w) \cap G \cap (J_0 \wo \{g\}) = \emptyset\}$.

\begin{theorem}\label{thm:agp012-facet}
Given a polygon $P$ and finite $G,W \subset P$, let $\conv(\agp(G,W))$ be
full-dimensional and let $\alpha_S$ be as defined in
\eqref{eq:j012-constraint}, such that $S$ is maximal, \ie, there is no
$w \in W \setminus S$ with $\V(w) \subseteq \V(S)$.
Then the constraint induced by $\alpha_S$ defines a facet of
$\conv(\agp(G,W))$, if and only if:
\begin{enumerate}
\item\label{itm:ag-odd-cycle}
        Every component of the 2-cover graph of $\alpha_S$ and $G$ has an odd
        cycle.
\item\label{itm:ag-tk}
        For every $g \in J_0 \cap G$ such that $T(g) \neq \emptyset$ there
        exists either
        \begin{enumerate}
        \item\label{itm:ag-tk-1}
                some $g' \in J_2 \cap G$ such that $T(g) \subseteq \V(g')$;
        \item\label{itm:ag-tk-2}
                some pair $g', g'' \in J_1 \cap G$ such that
                $T(g) \cup S \subseteq \V(g') \cup \V(g'')$.
        \end{enumerate}
\end{enumerate}
\end{theorem}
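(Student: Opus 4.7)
The plan is to follow the Balas--Ng framework for facets of the set cover polytope, transported into the AGP setting. Since $\conv(\agp(G,W))$ is full-dimensional by Lemma~\ref{lem:agp-fulldim}, it suffices to show that every linear equation $\beta^{\top} x = \beta_{0}$ satisfied by every \emph{root} of $\alpha_{S}$, meaning every feasible integer point at which~\eqref{eq:j012-constraint} holds with equality, is a scalar multiple of $\alpha_{S}$. The constraint forces each root into one of two forms: either it uses exactly one guard from $J_{2}\cap G$ and none from $J_{1}\cap G$, or it uses exactly two guards from $J_{1}\cap G$ that form a 2-cover of $\alpha_{S}$ and none from $J_{2}\cap G$; guards in $J_{0}\cap G$ participate freely, subject only to feasibility of the remaining AGP constraints.

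For sufficiency I would first argue $\beta_{g}=0$ for every $g\in J_{0}\cap G$ by exhibiting two roots that agree everywhere outside coordinate $g$. When $T(g)=\emptyset$, every witness that $g$ sees is also seen by another $J_{0}$-guard, so completing any \emph{skeleton} (a single $J_{2}$-guard or a 2-cover in $J_{1}$) with the indicator of $J_{0}\cap G$ and its toggle at $g$ gives two such roots. When $T(g)\neq\emptyset$, condition~2 supplies a skeleton that covers $T(g)\cup S$ without $g$, so the same toggle argument applies. Next, for each edge $\{g_{1},g_{2}\}$ of the 2-cover graph I swap $g_{2}$ for an adjacent endpoint $g_{3}$ in the same component, which (with the $J_{0}$ part held fixed) forces $\beta_{g_{1}} + \beta_{g_{2}} = \beta_{g_{2}} + \beta_{g_{3}}$, hence equality of $\beta$ along walks of even length; $\beta$ is therefore constant on each bipartition class of every component. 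The odd-cycle hypothesis~1 collapses the two classes, forcing a single constant $\lambda$ per component. Finally, pairing a $J_{2}$-guard root with a 2-cover root (both padded identically on $J_{0}$) gives $\beta_{g'} = \beta_{g_{1}}+\beta_{g_{2}}=2\lambda$ for every $g'\in J_{2}\cap G$, and the common right-hand side $\beta_{0}=2\lambda$ propagates a single $\lambda$ across components, yielding $(\beta,\beta_{0})=\lambda(a,2)$, where $a$ is the coefficient vector of~\eqref{eq:j012-constraint}.

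For necessity, if some component of the 2-cover graph is bipartite with sides $A,B$, every root either uses one guard from each side of a single component or none from the component, so $x(A\cap G)-x(B\cap G)=0$ is a valid equation on all roots that is not parallel to $\alpha_{S}$, preventing facetness. Similarly, if some $g\in J_{0}\cap G$ with $T(g)\neq\emptyset$ admits neither alternative of condition~2, then no skeleton covers $T(g)\cup S$ without $g$, so $x_{g}=1$ holds on every root, yielding an independent valid equation. The maximality of $S$ enters here to exclude latent equations of the form $x(\V(w)\cap G)\geq 1$ that would otherwise become equalities on roots, originating from some $w\in W\setminus S$ with $\V(w)\subseteq\V(S)$.

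The principal obstacle I anticipate is verifying that each candidate root is genuinely feasible for $\agp(G,W)$ and not only feasible with respect to the coverage of $S$: every witness in $W\setminus S$ must be covered by the chosen skeleton together with the selected $J_{0}$-guards. This bookkeeping is exactly where the definition of $T(g)$ and the alternative skeletons of condition~2 become indispensable; it is also the step where the argument departs from a purely combinatorial set cover proof and must exploit the geometry of visibility regions, in particular the fact that the partition $P=J_{0}\dcup J_{1}\dcup J_{2}$ is derived from a planar visibility overlay rather than an abstract incidence matrix.
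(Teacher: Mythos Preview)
Your reconstruction is essentially the Balas--Ng proof of their Theorem~2.6, spelled out in the visibility vocabulary, and the sufficiency and necessity arguments you sketch are correct in substance. The paper, however, does not redo any of this: its proof is a two-line reduction. Because $G$ and $W$ are finite, $\agp(G,W)$ \emph{is} a set cover instance with universe $W$ and column set $G$; the partition $J_0\cap G,\ J_1\cap G,\ J_2\cap G$, the 2-cover graph, the sets $T(g)$, and the maximality hypothesis on $S$ translate verbatim into the corresponding Balas--Ng objects, and the theorem is then quoted as a black box. What your approach buys is self-containment; what the paper's buys is brevity and a clear signal that nothing AGP-specific is happening in this particular theorem.

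That last point is exactly where your final paragraph goes astray. You write that the feasibility bookkeeping ``departs from a purely combinatorial set cover proof and must exploit the geometry of visibility regions.'' It does not. Once $G$ and $W$ are fixed and finite, the visibility overlay only determines an incidence matrix, and the entire argument---including the verification that skeleton $+$ all of $J_0\cap G$ covers $W$, which follows from maximality of $S$, and that toggling a single $g\in J_0\cap G$ preserves feasibility, which follows from the definition of $T(g)$---is purely combinatorial. The planar structure of $P$ plays no role here; the geometry in this paper is leveraged only later, in Section~\ref{sec:sc-star}, to restrict which $\alpha_S$ are worth separating. So your anticipated ``obstacle'' dissolves once you recognise the theorem as set cover in disguise, and your proof could simply cite \cite{bn-scp-89} after making that identification.
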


\begin{proof}
$G$ and $W$ are finite, so $\agp(G,W)$ is an instance of \ac{SC} with
universe $W$ and subsets $G$, while $\conv(\agp(G,W))$ describes a
full-dimensional \ac{SC} polytope.

Our claim follows from Theorem~2.6 of Balas et~al.~\cite{bn-scp-89}, because the
conditions as well as the notion of 2-cover graphs and $T$ are equivalent.
The only difference is that we need to intersect $J_i$ with $G$ in order to
obtain finite sets, as our $J_i$ is a region in $P$, while that of Balas et~al.
naturally is a finite set of variables.
%
\end{proof}

\subsection{Geometric Properties of \texorpdfstring{$\alpha_S$}{Alpha S}}
\label{sec:sc-star}

It is easy to construct SC instances for any choice of $|S|\geq 3$, such that the SC
version of $\alpha_S$ cuts off a fractional solution~\cite{bn-scp-89}.
Finding $\alpha_S$ in the SC setting is NP-complete, see below.
But in the following, we show that in an AGP setting, only
$\alpha_S$ with $|S| = 3$ actually plays a role in cutting off fractional
solutions under reasonable assumptions, allowing us to separate it in polynomial
time.

\begin{lemma}\label{lem:ag012-ck}
Let $P$ be a polygon, $G, W \subset P$ finite sets of guard and witness
positions and $\emptyset \subset S \subseteq W$.
If every guard in $J_1 \cap G$ belongs to some 2-cover of $\alpha_S$ and
$S$ is minimal for $G$, \ie, there is no
proper subset $T \subset S$ such that $\alpha_T$ and $\alpha_S$ induce the same
constraint for $G$, the matrix of
$\agp(G,S)$ contains a permutation of the full circulant of order
$k = |S|$, which is
\begin{equation}
        C_k^{k-1} = \left(\begin{matrix}
                0      & 1      & \cdots & 1 \\
                1      & 0      & \ddots & \vdots \\
                \vdots & \ddots & \ddots & 1 \\
                1      & \cdots & 1      & 0
        \end{matrix}\right) \in \{0,1\}^{k \times k}.
\end{equation}
\end{lemma}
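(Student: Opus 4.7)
My plan is to construct, for each $w_i \in S = \{w_1, \ldots, w_k\}$, a guard $g_i \in G \cap J_1$ that sees exactly the witnesses $S \setminus \{w_i\}$. Once the $g_i$ are in hand, they are automatically pairwise distinct (they miss different witnesses), and the $k \times k$ submatrix of the $\agp(G, S)$ constraint matrix indexed by rows $w_1, \ldots, w_k$ and columns $g_1, \ldots, g_k$ will carry a $1$ in position $(i,j)$ precisely when $i \neq j$, which is $C_k^{k-1}$ up to row/column permutation.

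The first observation I will use is monotonicity of the partition: if $T \subseteq S$, then directly from the definitions $J_2(S) \subseteq J_2(T)$ and $J_0(S) \subseteq J_0(T)$, hence $J_1(T) \subseteq J_1(S)$. Consequently, only guards in $J_1(S) \cap G$ can change category when a single witness is dropped from $S$. Fixing $w_i$ and applying the minimality hypothesis to $T := S \setminus \{w_i\}$, I obtain some $g \in J_1(S) \cap G$ that lies in a different class of the partition induced by $T$. The analysis then splits into two cases. If $g \in J_2(T)$, then $g$ sees all of $T$ but, being outside $J_2(S)$, does not see $w_i$; so I set $g_i := g$. If $g \in J_0(T)$, then $g$ sees nothing of $T$ but, lying in $J_1(S)$, must see at least one element of $S$, forcing $\V(g) \cap S = \{w_i\}$. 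This is the place where the 2-cover hypothesis is compelled to deliver: $g$ has a partner $g' \in J_1(S) \cap G$ with $S \subseteq \V(g) \cup \V(g')$, whence $T \subseteq \V(g')$, and since $g' \notin J_2(S)$ it cannot see $w_i$. I set $g_i := g'$, which then sees exactly $S \setminus \{w_i\}$ as desired.

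The main obstacle I anticipate is this second case: minimality alone only guarantees that \emph{some} guard changes category, and the guard it hands me may witness the ``wrong'' type of change, seeing only $w_i$ rather than $S \setminus \{w_i\}$. The 2-cover hypothesis is precisely what rescues the construction in that situation, because the partner $g'$ is forced to sit on the opposite side of $w_i$ with respect to $S$. Beyond this case split, the remaining ingredients—monotonicity, distinctness of the $g_i$ from the fact that they omit different witnesses, and reading off the resulting submatrix—are routine bookkeeping from the partition definitions and require no further work.
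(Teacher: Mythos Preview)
Your argument is correct. The paper itself does not give a direct proof: it simply observes that $\agp(G,W)$ is a set-cover instance and invokes Theorem~3.1 of Balas and Ng~\cite{bn-scp-89}, noting that their notion of a minimal $C$-equivalent subset coincides with minimality of $S$ for $G$. The explanatory paragraph following the paper's proof sketches essentially your Case~1 (removing $w_i$ must reclassify some $g\in J_1\cap G$ into $J_2$, hence $\V(g)\cap S=S\setminus\{w_i\}$), but does not spell out your Case~2, where the guard handed to you by minimality drops into $J_0(T)$ instead. Your use of the 2-cover hypothesis to pass from such a $g$ with $\V(g)\cap S=\{w_i\}$ to its partner $g'\in J_1\cap G$ with $\V(g')\cap S=S\setminus\{w_i\}$ is exactly the missing step that makes the argument self-contained; this is presumably what Balas--Ng do, but the paper leaves it inside the citation. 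One small point worth making explicit in your write-up: the hypothesis ``belongs to some 2-cover'' should be read as membership in an edge of the 2-cover graph, so the partner $g'$ lies in $J_1\cap G$ and is therefore available as a column of the $\agp(G,S)$ matrix.
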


\begin{proof}
As $G$ and $W$ are finite, $\conv(\agp(G,W))$ is an \ac{SC} polytope with
universe $W$ and subsets $G$.
Then, as above, the claim follows from Balas et~al. or, or more accurately,
from Theorem~3.1 of~\cite{bn-scp-89}:
Our definition of $S$ being minimal for $G$ complies with
the definition of a minimal $C$-equivalent subset in~\cite{bn-scp-89};
the notion of matrix $A_S^{J_1}$ corresponds to our
$\agp(J_1 \cap G,S)$, \ie, a submatrix of $\agp(G,S)$.
\end{proof}

Lemma~\ref{lem:ag012-ck} holds, because the 2-cover property holds if and only if
no guard's coefficient in $\alpha_S$ can be reduced without turning
Inequality~\eqref{eq:j012-constraint} invalid~\cite{bn-scp-89}.
As $S$ is minimal, removing $w$ from $S$ must increase coefficients, \ie,
reclassify a guard $g \in J_1 \cap G$ to $J_2$.
So $\V(g) \cap S = S \setminus \{w\}$.
Such a guard exists for every $w \in S$.

Lemma~\ref{lem:ag012-ck} also states that separating $\alpha_S$ is equivalent
to finding permutations of $C_k^{k-1}$ in the LP matrix of $\agr(G,W)$.
It is possible to reduce a simple graph's adjacency matrix to a polygon with
guards $G$ and witnesses $W$, such that $\agr(G,W)$ contains a permutation of
$C_k^{k-1}$ if and only if the graph contains a clique of size $k$ or higher:
Introduce a guard and a witness for each of the graph's vertices, place all of them into a
convex polygon, and add a hole between a guard and a witness if they represent
the same vertex or if the two vertices are not connected.
Hence, the separation problem is NP-complete.

In the following, we examine when the separation of $\alpha_S$ is useful for our
iterative algorithm.
As $\alpha_S$ is represented by one or several permutations of $C_k^{k-1}$, we
need to introduce the notion of a polygon corresponding to $C_k^{k-1}$.
This allows us to examine the underlying geometry of $\alpha_S$ in the AGP.

\begin{figure}
        \subfigure{
                \def\svgwidth{.25\linewidth}
                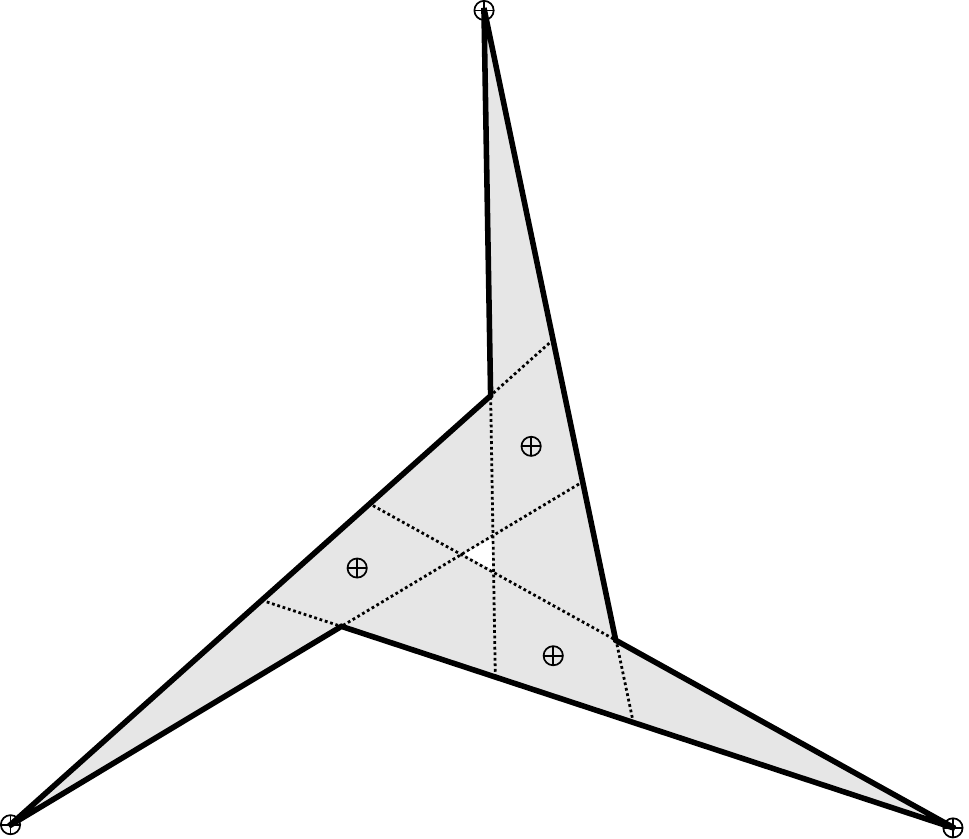
                \label{fig:sc012-3-godfried}
        }
        \hfill\subfigure{
                \def\svgwidth{.25\linewidth}
                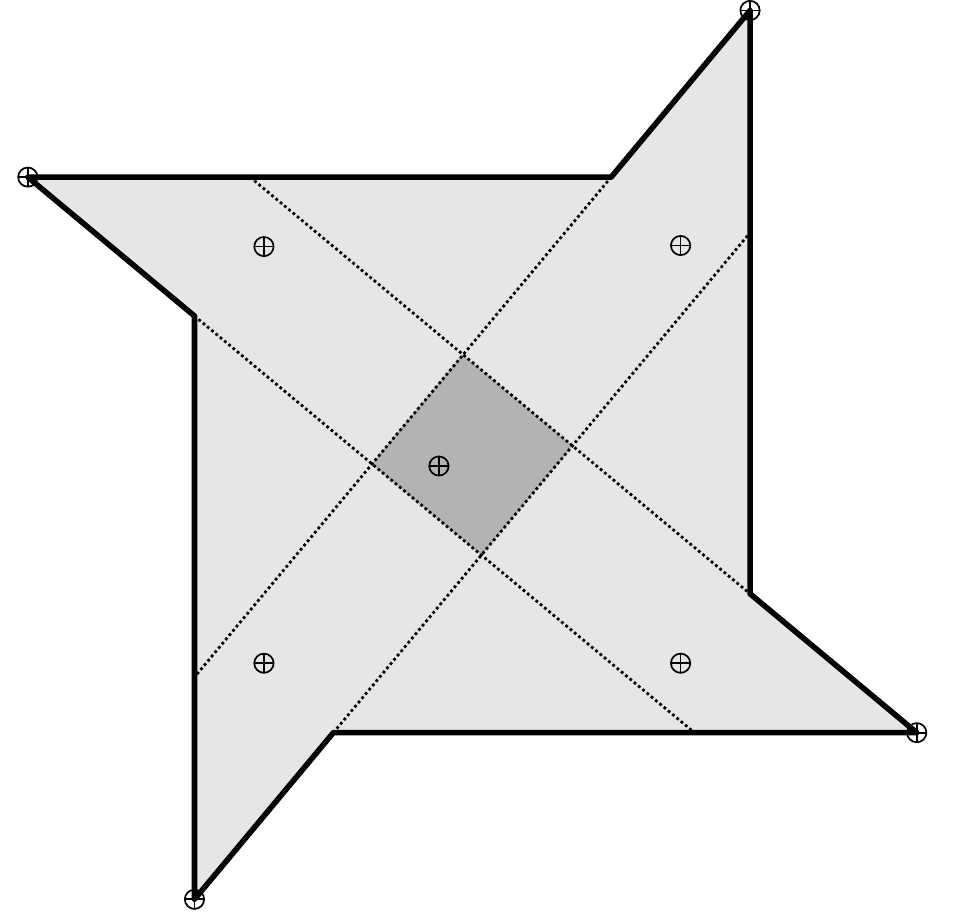
                \label{fig:sc012-4-godfried}
        }
        \hfill\subfigure{%
		\def\svgwidth{.4\linewidth}
        	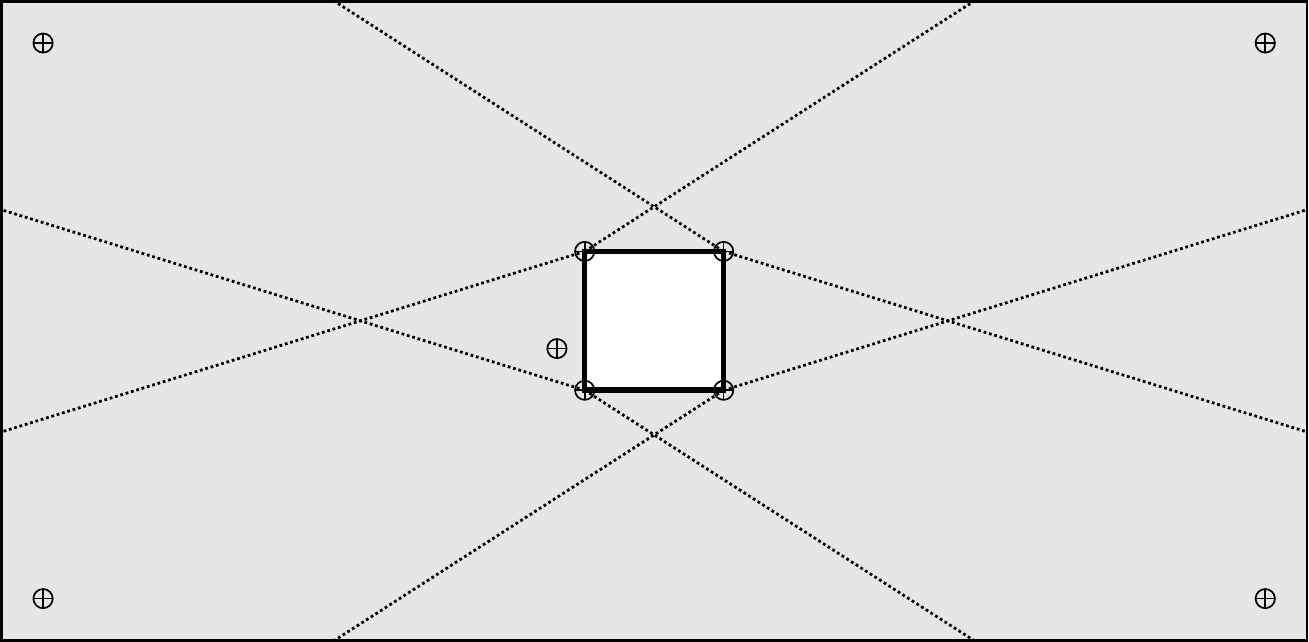
        	\label{fig:sc012-4-hole}
        }

        \caption[\acl{AG} Interpretation of $C_k^{k-1}$]{%
        $P_3^2$ (left) and two attempts for $P_4^3$ (middle and right).
        In the left case, Inequality~\eqref{eq:j012-constraint} enforces using two
        guards instead of three $\frac12$-guards.

        The first attempt for $P_4^3$ (middle) is star-shaped; here a cutting
        plane would cut off the intermediate fractional solution of four
        $\frac13$-guards, but as soon as $g^*$ is found, the fractional solution
        is replaced by a binary one with just one guard, with or without cutting
        plane.

        Finally, the second attempt for $P_4^3$ (right) is not star-shaped, but
        again, there is no need for a cutting plane to cut off the fractional
        solution of four $\frac13$-guards:
        $w^*$ is only covered by $\frac23$, so $w^*$ is separated by our
        algorithm and then enforces the use of at least two guards in the next
        iteration; again, with or without cutting plane.
        }
        \label{fig:sc012}
\end{figure}

\begin{definition}[Full Circulant Polygon]\label{def:fcp}
A polygon $P$ along with $G(P) = \{g_1, \dots, g_k\} \subset P$ and
$W(P) = \{w_1, \dots, w_k\} \subset P$ for $3 \leq k \in \N$ is called
\emph{Full Circulant Polygon}, or $P_k^{k-1}$, if
\begin{eqnarray}
        \forall\ 1 \leq i \leq k&:& \quad \V(g_i) \cap W(P) = W(P) \setminus \{w_i\}
\label{eq:fcp-i}\\
        \forall w \in P&:& \quad \left| \V(w) \cap G(P) \right| \geq k - 1
\label{eq:fcp-coverage}
\end{eqnarray}
We may refer to $G(P)$ and $W(P)$ by just $G$ and $W$, respectively.
\end{definition}

Note that in $P_k^{k-1}$ the full circulant $C_k^{k-1}$ completely describes the
visibility relations between $G$ and $W$.
This implies that the optimal solution of $\agr(G,W)$ is
$\frac{1}{k-1} \cdot \one$, with cost $\frac{k}{k-1}$. It
is feasible for $\agr(G,P_k^{k-1})$ by Property~\eqref{eq:fcp-coverage}, as
any point $w \in P_k^{k-1}$ is covered by at least
$(k-1) \cdot \frac{1}{k-1} = 1$.

Figure~\ref{fig:sc012} captures construction attempts for models of $C_k^{k-1}$.
$P_3^2$ exists; however, as we prove in Theorem~\ref{thm:fcp-star},
the polygons for $k \geq 4$ are either star-shaped
or not full circulant.
If they are star-shaped, the optimal solution is to place one guard within the kernel.
If they are not full circulant polygons, the optimal solution of $\agr(G,W)$ is
infeasible for $\agr(G,P)$ and the current fractional solution is intermittent, \ie,
cut off in the next iteration.
Both cases eliminate the need for a cutting plane,
and we may avoid the NP-complete separation problem
by restricting separation to $k=3$.

In the following we prove that $P_k^{k-1}$
is star-shaped for $k \geq 4$. 
We start with Lemma~\ref{lem:2-composition}, which
shows that any pair of guards in $G$ is sufficient to cover $P_k^{k-1}$.

\begin{lemma}\label{lem:2-composition}
Let $P_k^{k-1}$ be a full circulant polygon.
Then $P_k^{k-1}$ is the union of the visibility polygons of any pair of
guards in $G\left( P_k^{k-1} \right) = \{g_1,\dots,g_k\}$:
\begin{equation}
	\forall 1 \leq i < j \leq k: \quad P_k^{k-1} = \V(g_i) \cup \V(g_j)
\end{equation}
\end{lemma}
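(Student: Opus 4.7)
The plan is to derive the statement directly from the coverage condition \eqref{eq:fcp-coverage} in the definition of $P_k^{k-1}$, together with the symmetry of the visibility relation ($p \in \V(g)$ iff $g \in \V(p)$). The lemma is really an immediate counting observation rather than a geometric argument, so I would not expect any serious obstacle; the value of stating it lies in isolating the fact that within a full circulant polygon, every point is seen by all but at most one guard.

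Concretely, I would fix indices $1 \leq i < j \leq k$ and take an arbitrary point $p \in P_k^{k-1}$. By property~\eqref{eq:fcp-coverage}, we have $|\V(p) \cap G| \geq k-1$, so at most one guard of $G = \{g_1,\dots,g_k\}$ fails to see $p$. In particular, not both of $g_i$ and $g_j$ can fail to see $p$, so at least one of $g_i \in \V(p)$ or $g_j \in \V(p)$ holds. By symmetry of visibility, this gives $p \in \V(g_i) \cup \V(g_j)$. Since $p$ was arbitrary, $P_k^{k-1} \subseteq \V(g_i) \cup \V(g_j)$, and the reverse inclusion is trivial because $\V(g_i), \V(g_j) \subseteq P_k^{k-1}$.

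The only mild subtlety worth flagging is that the argument does not use the stronger structural information in~\eqref{eq:fcp-i}, but it does need that the coverage bound in~\eqref{eq:fcp-coverage} applies to all points in the polygon and not merely to the witnesses in $W$; this is what makes the union cover $P_k^{k-1}$ rather than just $W$. Once this is observed, the proof consists of a single counting step, so I would keep the write-up to just a few lines. No case distinction on $k$ is needed: the same inequality $k - (k-1) = 1$ drives the argument for every $k \geq 3$.
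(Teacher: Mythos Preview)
Your proposal is correct and matches the paper's own proof essentially line for line: both argue that a point not in $\V(g_i)\cup\V(g_j)$ would be seen by at most $k-2$ guards, contradicting property~\eqref{eq:fcp-coverage}. The only cosmetic difference is that the paper phrases it as a proof by contradiction while you phrase it directly, and your remark that~\eqref{eq:fcp-i} is not needed here is accurate.
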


\begin{proof}
Suppose $P_k^{k-1}$ is a full circulant polygon, but
$P_k^{k-1} \neq \V(g_i) \cup \V(g_j)$ for $1 \leq i < j \leq k$.
Then there exists some $w \in P_k^{k-1}$ with $g_i \notin \V(w)$, as well
as $g_j \notin \V(w)$,
implying that $\left| \V(w) \cap G \right| \leq k-2$, a contradiction to
Property~\eqref{eq:fcp-coverage} of Definition~\ref{def:fcp}.
\end{proof}

The next step is Lemma~\ref{lem:no-holes}, which drastically restricts
the possible structure of $P_k^{k-1}$.

\begin{lemma}\label{lem:no-holes}
Let $P_k^{k-1}$ be a full circulant polygon with
$G\left( P_k^{k-1} \right) = \{g_1,\dots,g_k\}$.
Suppose $k \geq 4$.
Then $P_k^{k-1}$ has no holes.
\end{lemma}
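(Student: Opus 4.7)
The plan is to argue by contradiction in three stages. Assume $P := P_k^{k-1}$ has a hole and $k \geq 4$. For the setup, I would recall that in any polygonal domain (even one with holes) the visibility polygon $\V(g)$ is star-shaped at $g$ and hence simply connected. Writing $S_i := P \setminus \V(g_i)$ for the shadow of $g_i$, Property~\eqref{eq:fcp-coverage} of Definition~\ref{def:fcp} forces every point of $P$ to lie in at most one $S_i$, so the shadows $S_1, \dots, S_k$ are pairwise disjoint, each non-empty since $w_i \in S_i$. Lemma~\ref{lem:2-composition} additionally yields $S_i \subseteq \V(g_j)$ for all $j \neq i$: anything hidden from $g_i$ is seen by every other guard.

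Next comes the topological consequence of the hole. Because $P$ is not simply connected, it admits a non-contractible loop $\gamma \subset P$. If $\gamma \subset \V(g_i)$ for some $i$, then $\gamma$ would be contractible inside the simply connected set $\V(g_i)$ and hence in $P$, a contradiction. Therefore $\gamma \cap S_i \neq \emptyset$ for every $i$, so the single loop $\gamma$ must visit all $k$ pairwise-disjoint shadows. A Mayer--Vietoris computation on $P = \V(g_i) \cup \V(g_j)$ similarly shows that each pairwise intersection $\V(g_i) \cap \V(g_j)$ is disconnected, giving additional structural information about how the simply connected pieces assemble around the hole.

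The main obstacle is converting this loop-meets-all-shadows observation into a real contradiction that genuinely uses $k \geq 4$; the statement alone is consistent with $k = 3$, which is realized by $P_3^2$ in Figure~\ref{fig:sc012}. My plan is to compare the cyclic order of the arcs $\gamma \cap S_i$ along $\gamma$ with the positions of the guards around the hole. For each $i$, the witness $w_i \in S_i$ must be seen by every $g_j$ with $j \neq i$, so all $k-1$ segments $\overline{g_j w_i}$ lie in $P$ while $\overline{g_i w_i}$ is blocked. A case distinction on whether this blocking is caused by the hole or by the outer boundary, combined with the pairwise covering $\V(g_i) \cup \V(g_j) = P$, should force the guards into a cyclic arrangement around the hole that can accommodate at most three of them. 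The most delicate step, and what I anticipate as the main technical difficulty, is ruling out blockings that arise solely from the outer boundary of $P$; this will likely require decomposing $P$ along its blocking chords and re-invoking the argument on simpler subregions.
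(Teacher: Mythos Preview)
Your preliminary observations are correct: the shadows $S_i = P \setminus \V(g_i)$ are pairwise disjoint by Property~\eqref{eq:fcp-coverage}, each $\V(g_i)$ is simply connected, and a non-contractible loop in $P$ must meet every $S_i$. The Mayer--Vietoris step showing that $\V(g_i) \cap \V(g_j)$ is disconnected is also valid.

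However, none of this distinguishes $k \geq 4$ from $k = 3$; every one of these facts holds verbatim in the $P_3^2$ example of Figure~\ref{fig:sc012-3-hole}. You acknowledge this and defer the actual work to a ``cyclic arrangement'' argument that is only sketched. The assertion that such an arrangement ``can accommodate at most three'' guards is precisely the content of the lemma and is left unjustified; the proposed case distinction on whether the blocking of $\overline{g_i w_i}$ comes from the hole or from the outer boundary is not carried out, and you yourself flag the outer-boundary case as the main unresolved difficulty. As written, the proposal sets up machinery but does not contain a proof.

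The paper's argument is far more elementary and avoids topology entirely. It selects three edges $l_1, l_2, l_3$ of the hole whose induced closed half-spaces $\mathcal{H}_1, \mathcal{H}_2, \mathcal{H}_3$ (on the $P$-side of each edge) have empty common intersection; such a triple exists by Helly's theorem, since otherwise all the half-spaces would share a point and the exterior of the hole would be convex. For a point $w_i$ in the relative interior of $l_i$ one has $\V(w_i) \subseteq \mathcal{H}_i$, so Property~\eqref{eq:fcp-coverage} forces $|\mathcal{H}_i \cap G| \geq k-1$ for each $i$. A direct count then gives $|\mathcal{H}_1 \cap \mathcal{H}_2 \cap G| \geq k-2$, and since $\mathcal{H}_1 \cap \mathcal{H}_2 \cap \mathcal{H}_3 = \emptyset$ this leaves $|\mathcal{H}_3 \cap G| \leq 2 < k-1$, a contradiction. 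This half-space pigeonhole is a few lines long and makes the role of $k \geq 4$ completely transparent, replacing your entire topological apparatus.
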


\begin{figure}
        \def\svgwidth{.65\linewidth}
        \centering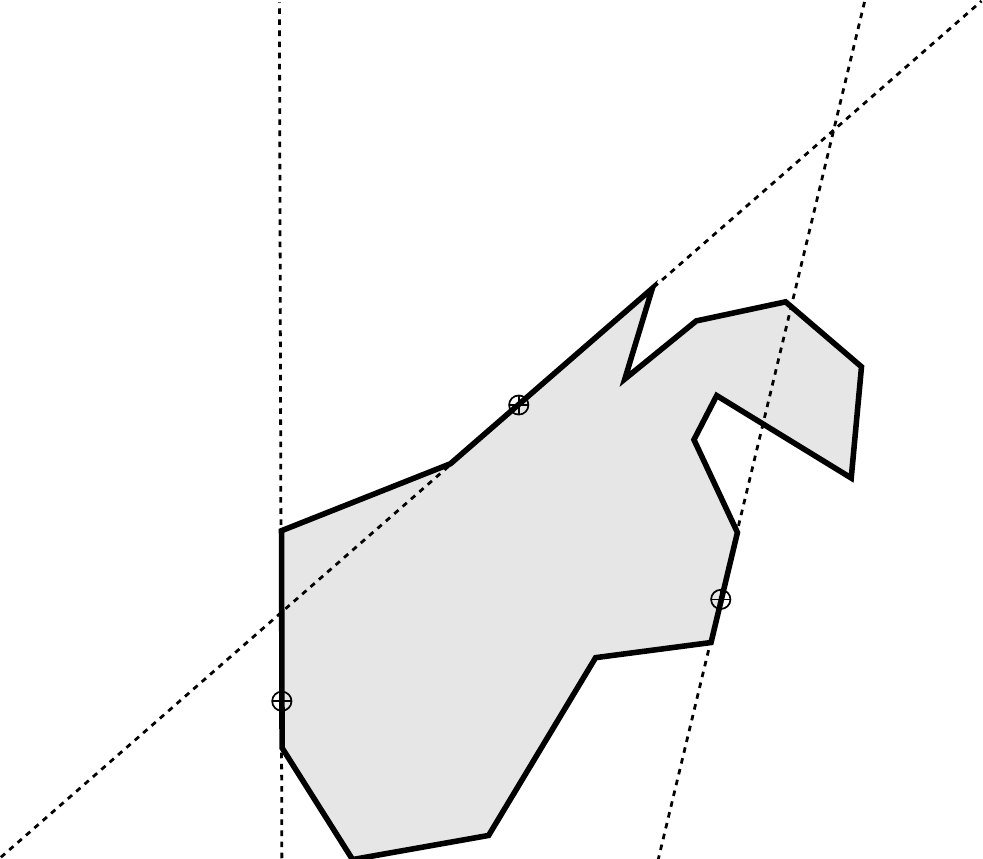
        \caption[Sketch for a Hole in $P_k^{k-1}$]{%
        A hole $H$ in $P_k^{k-1}$ with
        $\mathcal{H}_1 \cap \mathcal{H}_2 \cap \mathcal{H}_3 = \emptyset$.
        There are $k-1$ guards in $\mathcal{H}_1$ and in $\mathcal{H}_2$, so
        there must be $k-2$ in their intersection.
        This only leaves $2$ guards for $\mathcal{H}_3$.
        }
        \label{fig:hole-proof}
\end{figure}

\begin{proof}
Refer to Figure~\ref{fig:hole-proof}.
Suppose $P_k^{k-1}$ has a hole $H$.
Each edge $l_i$ of $H$ induces a half-space $\mathcal{H}_i$.
There are three such edges $l_1,l_2,l_3$, such that
$\mathcal{H}_1 \cap \mathcal{H}_2 \cap \mathcal{H}_3 = \emptyset$, for otherwise
the outside of $H$ would be convex by Helly's Theorem.
Let $w_i$ denote a point in the interior of $l_i$.

In order for $w_1$ to fulfill~\eqref{eq:fcp-coverage}, at least
$k-1$ of the guards in $G$ must be located in
$\V(w_1) \subseteq \mathcal{H}_1$.
Analogously, there must be $k-1$ guards in $\mathcal{H}_2$.
Covering $w_1$ and $w_2$ with a total of $k$ guards is only possible if at least
$k-2$ guards of $G$ are located in the intersection of the two half-spaces:
$\left| \mathcal{H}_1 \cap \mathcal{H}_2 \cap G \right| \geq k-2$.
If there are only $k' < k-2$ guards in $\mathcal{H}_1 \cap \mathcal{H}_2$,
it takes $(k-1) - k'$ additional guards in
$\mathcal{H}_1 \setminus \mathcal{H}_2$ to cover $w_1$ and in
$\mathcal{H}_2 \setminus \mathcal{H}_1$ to cover $w_2$, resulting in a total
of $k' + 2(k-1-k') = 2k - (k'+2) > k$ guards, a contradiction.

As $\mathcal{H}_1 \cap \mathcal{H}_2 \cap \mathcal{H}_3 = \emptyset$, there can
be at most $2$ guards in $\V(w_3) \subseteq \mathcal{H}_3$, which violates
Property~\eqref{eq:fcp-coverage} for $k \geq 4$, a contradiction.
\end{proof}

As shown in Figure~\ref{fig:sc012-3-hole}, $k \geq 4$ is tight:
a triangle with a concentric triangular hole is an example of $P_3^2$,
with guards in the outside corners and witnesses on the inside edges.

\begin{figure}
        \def\svgwidth{.4\linewidth}
        \centering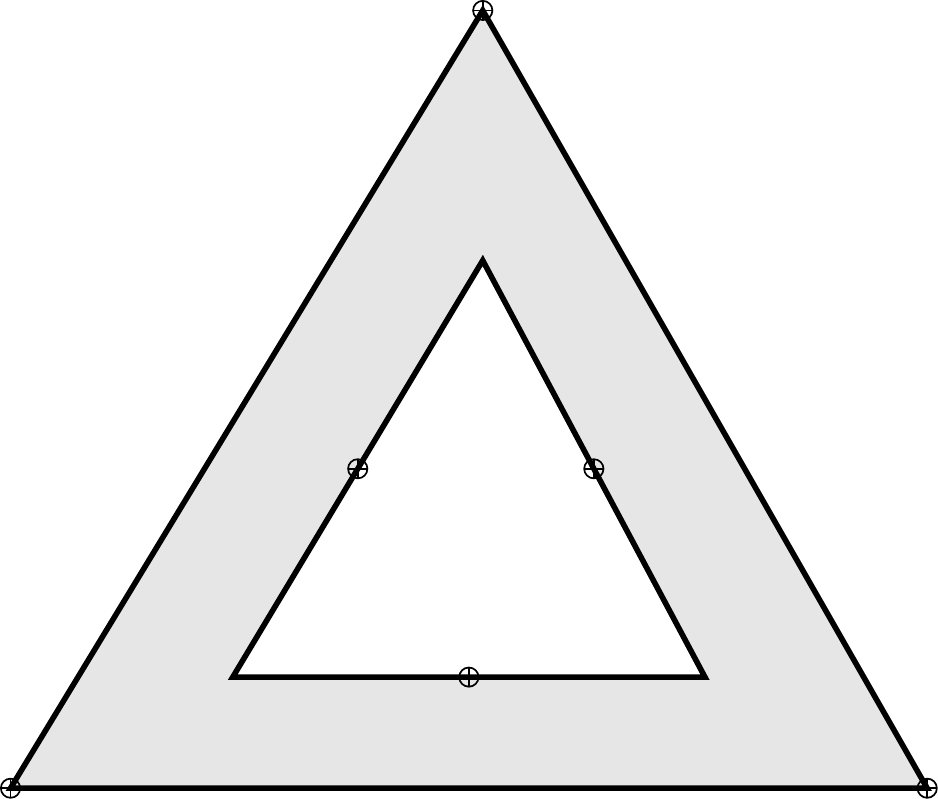
        \caption[Alternative \acl{AG} Interpretation of $C_3^2$]{%
        $P_3^2$, a possible \ac{AG} interpretation of $C_3^2$ with a hole.
        It proves that the bound of $k \geq 4$ in Lemma~\ref{lem:no-holes} is
        tight.
        }
        \label{fig:sc012-3-hole}
\end{figure}

We require one final technical lemma before proceeding to the main theorem,
Theorem~\ref{thm:fcp-star}.

\begin{lemma}\label{lem:helly-poly}
Consider two disjoint non-empty convex polygons, described
as the intersection of half-spaces:
$P_1 = \bigcap_{i=1,\dots,n} \mathcal{H}_i$ and
$P_2 = \bigcap_{i=n+1,\dots,n+m} \mathcal{H}_i$.
Then some $\mathcal{H}_i$, $1 \leq i \leq n+m$ separates $P_1$ and
$P_2$.
\end{lemma}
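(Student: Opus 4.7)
The plan is to combine Farkas' Lemma with the conic form of Carathéodory's Theorem. Writing $\mathcal{H}_i = \{x \in \R^2 : a_i \cdot x \leq b_i\}$, the hypothesis $P_1 \cap P_2 = \bigcap_i \mathcal{H}_i = \emptyset$ together with Farkas' Lemma yields nonnegative multipliers $\lambda_i \geq 0$, not all zero, with $\sum_i \lambda_i a_i = 0$ and $\sum_i \lambda_i b_i < 0$. Each pair $(a_i, b_i)$ lies in $\R^3$, so the conic Carathéodory theorem lets me reduce to a support $I = \{i : \lambda_i > 0\}$ of size at most three. This sparsification is the heart of the argument.

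Next I would case-split on $|I|$ and on how the indices of $I$ distribute between those indexing $P_1$ (in $\{1,\dots,n\}$) and those indexing $P_2$ (in $\{n+1,\dots,n+m\}$). The case $|I|=1$ forces $a_i = 0$ and $b_i < 0$, so $\mathcal{H}_i$ itself is empty, contradicting non-emptiness of the polygon it helps define. If all indices of $I$ sit on the same polygon, the certificate $\sum_{i\in I}\lambda_i a_i = 0$, $\sum_{i\in I}\lambda_i b_i < 0$ would certify that polygon empty, again a contradiction. Hence $I$ contains indices from both polygons, leaving the patterns $1{+}1$, $2{+}1$, and $1{+}2$ to verify.

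In the $2{+}1$ subcase with $i_1, i_2 \leq n$ and $i_3 > n$ in $I$, I would take any $x \in P_1$, multiply $a_{i_1}\cdot x \leq b_{i_1}$ and $a_{i_2}\cdot x \leq b_{i_2}$ by $\lambda_{i_1}$ and $\lambda_{i_2}$, add, and use the identity $\lambda_{i_1} a_{i_1} + \lambda_{i_2} a_{i_2} = -\lambda_{i_3} a_{i_3}$ together with $\lambda_{i_1} b_{i_1} + \lambda_{i_2} b_{i_2} < -\lambda_{i_3} b_{i_3}$ to obtain $a_{i_3}\cdot x > b_{i_3}$. Thus $P_1 \cap \mathcal{H}_{i_3} = \emptyset$, while $P_2 \subseteq \mathcal{H}_{i_3}$ by definition, so $\mathcal{H}_{i_3}$ is the sought separator. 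The $1{+}2$ subcase is symmetric, and the $1{+}1$ subcase reduces to antiparallel normals, where the Farkas inequality directly produces a gap between the two bounding lines, so either of the two involved half-spaces separates.

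The main obstacle I expect is conceptual rather than technical: the usual closest-pair proof of the separating-hyperplane theorem produces a separating line that is in general a brand-new hyperplane, not one of the given $\mathcal{H}_i$. Carathéodory is precisely what replaces an arbitrary dual certificate of infeasibility by a three-vector certificate whose participating half-spaces already serve as a separator, reducing the rest of the argument to a handful of Farkas-style manipulations.
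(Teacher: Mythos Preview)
Your argument is correct. The paper, however, takes a much shorter geometric route: it applies Helly's Theorem directly to the family $\{\mathcal{H}_i\}_{i=1}^{n+m}$ of closed convex sets in $\R^2$ with empty intersection, obtaining three half-spaces $\mathcal{H}_i, \mathcal{H}_j, \mathcal{H}_k$ whose intersection is already empty. Since neither $P_1$ nor $P_2$ is empty, the three indices cannot all come from the same polygon, so (up to symmetry and relabelling) two lie in $\{1,\dots,n\}$ and the third is some $\mathcal{H}_{n+1}$ from the second group. Then $P_1 \subseteq \mathcal{H}_i \cap \mathcal{H}_j$ is disjoint from $\mathcal{H}_{n+1} \supseteq P_2$, and $\mathcal{H}_{n+1}$ is the desired separator. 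Your Farkas--Carath\'eodory approach is essentially the LP-duality proof of the same three-index reduction: it is longer and more algebraic, but it is self-contained without citing Helly, and it makes the case distinction (which the paper absorbs into a single ``without loss of generality'') fully explicit.
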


\begin{proof}
$n+m \leq 2$ is trivial, so consider $n+m \geq 3$.
Because of $P_1 \cap P_2 = \bigcap_{i=1,\dots,n+m} \mathcal{H}_i = \emptyset$,
Helly's Theorem applied to the two-dimensional convex half-spaces $\mathcal{H}_i$
implies
the existence of three half-spaces
$\mathcal{H}_i$, $\mathcal{H}_j$, and $\mathcal{H}_k$, $i < j < k$,
with $\mathcal{H}_i \cap \mathcal{H}_j \cap \mathcal{H}_k = \emptyset$.
\OBdA, we assume $i=1$, $j=2$ and $k=n+1$,
which provides
\begin{equation}
        \underbrace{\mathcal{H}_1 \cap \mathcal{H}_2}_{\supseteq P_1}
                \cap \underbrace{\mathcal{H}_{n+1}}_{\supseteq P_2} = \emptyset,
\end{equation}
so it follows that
$P_1 \cap \mathcal{H}_{n+1} = \emptyset$
with $P_2 \subseteq \mathcal{H}_{n+1}$ by construction,
and $\mathcal{H}_{n+1}$ is the half-space whose existence is the claim.
\end{proof}

Now all preliminaries for the main theorem of the section are met and we can
proceed to show Theorem~\ref{thm:fcp-star}, which claims that full circulant
polygons are star-shaped for $k \geq 4$.

\begin{figure}
	\def\svgwidth{\linewidth}
	\centering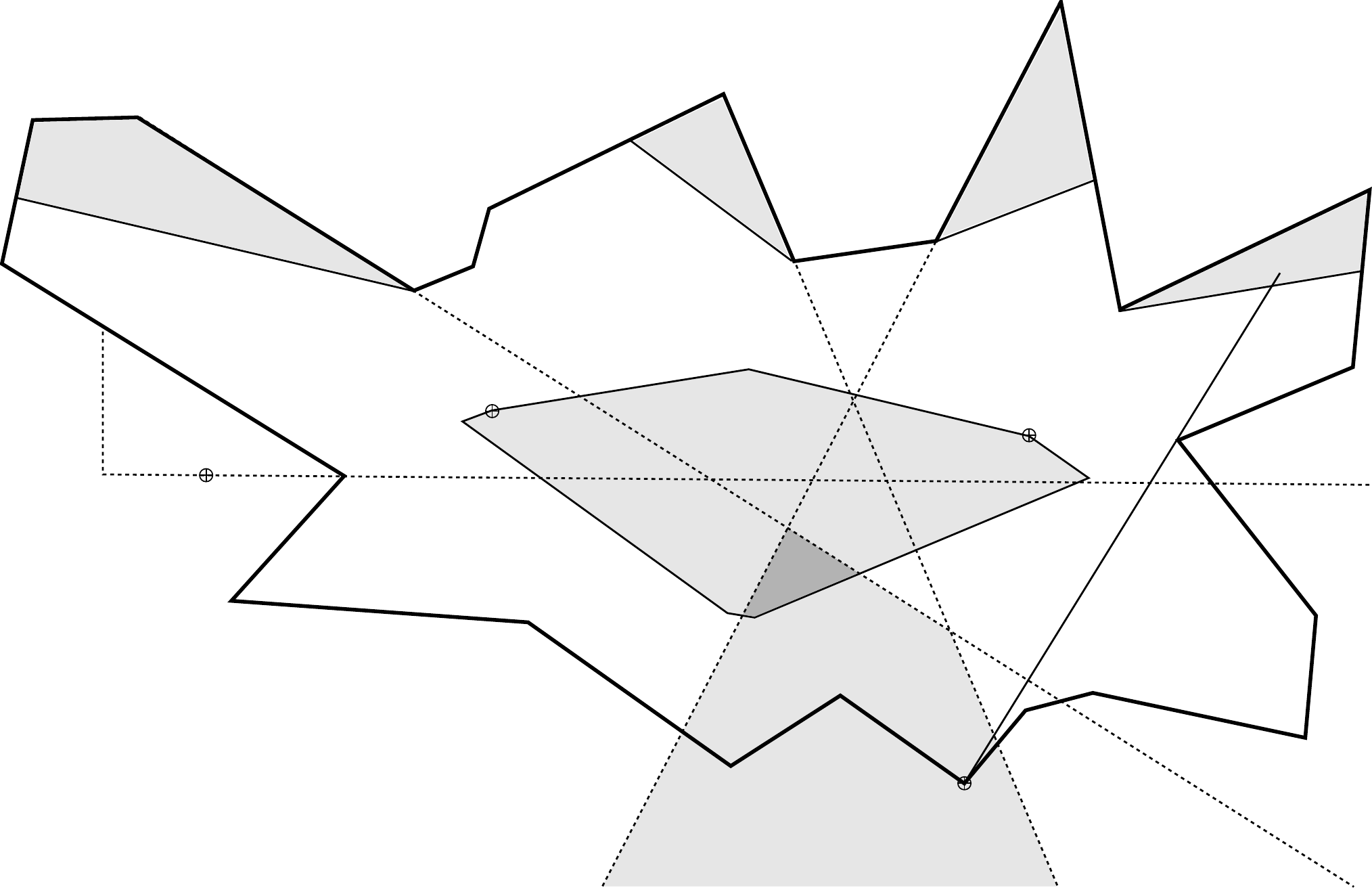
	\caption[Sketch of $P_k^{k-1}$]{%
	$P_k^{k-1}$ with guards $g_1$ and $g_2$.
	$P_1$ and $P_2$ are the gray areas at the top.
	$P_1$ is seen by $g_1$ but not by $g_{2}$; an analogous property holds for $P_2$.
	The rest of $P_k^{k-1}$ is $P_{12}$, a star-shaped polygon entirely seen by both
	$g_1$ and $g_2$, $K$ is its kernel.
	$L$ is the area whose view into $P_1 \cup P_2$ is not blocked by any
	edge of $P_k^{k-1}$ that coincides with $P_1 \cup P_2$.
	It contains all of $g_3,\dots,g_k$.

	If $P'$ would be added to $P_k^{k-1}$, $K$ would be cut off below the
	dashed line containing $w'$ and $K \cap L = \emptyset$.
	But then no point in $L$, including $g_3,\dots,g_k$, could see $w'$, a
	contradiction to the property of $P_k^{k-1}$ requiring $k-1$ guards to
	see any of its points.
	}
	\label{fig:fcp-star}
\end{figure}

\begin{theorem}\label{thm:fcp-star}
A full circulant polygon $P_k^{k-1}$ with $k \geq 4$ is star-shaped.
\end{theorem}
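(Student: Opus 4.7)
The plan is to exhibit a point of $P_k^{k-1}$ that sees the entire polygon by combining two very different decompositions: one induced by two designated guards $g_1,g_2$ (via Lemma~\ref{lem:2-composition}), and one induced by the ``sight lines'' the remaining guards $g_3,\dots,g_k$ must maintain into the regions exclusive to $g_1$ and $g_2$.

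\textbf{Step 1: Decompose using a pair of guards.} Pick $g_1,g_2 \in G$ and, by Lemma~\ref{lem:2-composition}, write $P_k^{k-1} = \V(g_1) \cup \V(g_2)$. Set $P_{12}:=\V(g_1)\cap \V(g_2)$, $P_1:=\V(g_1)\setminus \V(g_2)$, $P_2:=\V(g_2)\setminus \V(g_1)$. The region $P_{12}$ is seen by both $g_1$ and $g_2$, so as the intersection of two star-shaped sets it is star-shaped, and its kernel $K$ is a non-empty convex set containing $g_1$ and $g_2$. Since by Lemma~\ref{lem:no-holes} the polygon has no holes, the boundary between $P_{12}$ and $P_1\cup P_2$ consists of a finite family of chord-edges of $P_k^{k-1}$.

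\textbf{Step 2: Locate the remaining guards in a common ``sight'' region.} Define $L$ to be the set of points in $P_k^{k-1}$ whose line of sight into $P_1 \cup P_2$ is not obstructed by any of the chord-edges identified above; $L$ is an intersection of half-spaces, hence convex. Every $g_i$ with $i\geq 3$ must simultaneously see witnesses lying in $P_1$ (e.g.\ $w_2$) and in $P_2$ (e.g.\ $w_1$), because Property~\eqref{eq:fcp-coverage} forces $g_i$ to see all witnesses except $w_i$; hence $g_3,\dots,g_k \in L$. Any point of $K\cap L$ is then a kernel point of the whole polygon: through $K$ it sees all of $P_{12}$, and through $L$ its view into $P_1\cup P_2$ is unobstructed, so it sees $P_1 \cup P_2$ as well. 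Thus it suffices to prove $K \cap L \neq \emptyset$.

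\textbf{Step 3: Contradict $K\cap L = \emptyset$.} Assuming $K\cap L=\emptyset$ and writing both $K$ and $L$ as intersections of the half-spaces supporting their defining edges, Lemma~\ref{lem:helly-poly} yields a single edge-half-space $\mathcal{H}$ that strictly separates $K$ from $L$. Let $w'$ be a point on the corresponding edge of $P_k^{k-1}$, interior to $P_k^{k-1}$. Then $w'$ cannot be seen from any guard on the ``wrong'' side of $\mathcal{H}$. Depending on which side carries $L$, this excludes either $\{g_1,g_2\}$ or $\{g_3,\dots,g_k\}$ from $\V(w')\cap G$, leaving at most $\max(2,k-2)$ visible guards. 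For $k \geq 4$ this gives $|\V(w')\cap G| \leq k-2$, violating Property~\eqref{eq:fcp-coverage}.

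\textbf{Main obstacle.} The delicate part is Step~2/Step~3: carefully defining $L$ so that it is genuinely convex (i.e.\ an intersection of honest half-spaces coming from the edges separating $P_{12}$ from $P_1\cup P_2$), verifying that \emph{all} guards $g_3,\dots,g_k$ actually fall into $L$ rather than merely ``most'' of them, and choosing the witness $w'$ on the separating edge in such a way that it lies in $P_k^{k-1}$ and is simultaneously blocked from every guard on the opposite side of $\mathcal{H}$. Once these geometric bookkeeping issues are settled, the $k\geq 4$ threshold appears naturally as the point at which the count $\max(2,k-2)$ drops strictly below $k-1$.
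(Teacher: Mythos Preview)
Your strategy is essentially the paper's: decompose $P_k^{k-1}$ via two guards into $P_{12}\cup P_1\cup P_2$, let $K=\kernel(P_{12})$, define a convex region $L$ that must contain $g_3,\dots,g_k$, argue $K\cap L\subseteq\kernel(P_k^{k-1})$, and use Lemma~\ref{lem:helly-poly} to contradict $K\cap L=\emptyset$. However, two of your steps have genuine gaps that the paper resolves differently.

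First, in Step~1 the claim that the intersection of two star-shaped sets is star-shaped is false in general, and without it you cannot even guarantee $g_1,g_2\in P_{12}$ or $K\ne\emptyset$. The paper first observes that each guard sees at least $k-2$ others and picks $g_1,g_2$ to be \emph{mutually visible}; then $g_1,g_2\in K$. Second, and more seriously, your $L$ is built from the \emph{chord-edges} separating $P_{12}$ from $P_1\cup P_2$, but those are not edges of $P$ and block nothing, so the definition is ill-formed and the claimed half-space description does not follow. The paper instead distinguishes three edge types---\emph{white} (edges of $P$ on $\partial P_{12}$), \emph{gray} (edges of $P$ bounding $P_1\cup P_2$), and \emph{white-gray} (your chords)---and takes $K$ to be the half-space intersection over white and white-gray edges, $L$ over gray edges. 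With this $L$, membership of $g_i$ ($i\ge 3$) requires that $g_i$ see \emph{all} of $P_1\cup P_2$, which the paper gets from Lemma~\ref{lem:2-composition} applied to the pairs $(g_1,g_i)$ and $(g_2,g_i)$; merely seeing the two witnesses $w_1,w_2$ as you argue does not suffice. Step~3 then splits into three cases, one per edge type, and the white-gray case cannot be handled by your $w'$ argument (there is no polygon edge to sit on); instead one argues that the adjacent piece of $P_1\cup P_2$ becomes invisible from all of $L\ni g_3,\dots,g_k$, contradicting the previous sentence. Your two-case bound $\max(2,k-2)$ covers only the white and gray cases.
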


\begin{proof}
Refer to Figure~\ref{fig:fcp-star}.
Let $P_k^{k-1}$ with $k \geq 4$ be a full circulant polygon.
The guards in $G = G\left( P_k^{k-1} \right)$ must be covered by a total of
$k-1$ guards, \ie, they must also fulfill \eqref{eq:fcp-coverage},
so each guard can see at least $k-2$ others.
\OBdA, let $g_1$ and $g_2$ denote two guards in each other's field of view.

Now consider $P_{12} = \V(g_1) \cap \V(g_2) \subseteq P_k^{k-1}$, the subset
of $P_k^{k-1}$ seen by both $g_1$ and $g_2$.
It is star-shaped, because
$g_1$ and $g_2$ are in its kernel, which we denote by $K$.
The rest of $P_k^{k-1}$, \ie, $P_k^{k-1} \setminus P_{12}$, consists of two
types of areas:
\begin{enumerate}
\item
        $P_1 = P_k^{k-1} \setminus \left( P_{12} \cup \V(g_2) \right)$, points
        visible from $g_1$, but not from $g_2$.
\item
        $P_2 = P_k^{k-1} \setminus \left( P_{12} \cup \V(g_1) \right)$, points
        visible from $g_2$, but not from $g_1$.
\end{enumerate}
Together, $g_1$ and $g_2$ cover every $w \in P_k^{k-1}$, because
$P_k^{k-1} = \V(g_1) \cup \V(g_2)$ due to 
Lemma~\ref{lem:2-composition}. Thus, $P_k^{k-1} = P_{12} \dcup P_1 \dcup P_2$.

We now examine which guards can see what part of $P_{12}$, $P_1$ and
$P_2$.
For that, we classify three types of edges.
\emph{Gray} edges are those edges of $P_k^{k-1}$ that coincide with $P_1$ or
$P_2$, \emph{white} edges denote the other edges of $P_k^{k-1}$.
Finally, edges of $P_{12}$ not part of $P_k^{k-1}$ that separate $P_1 \cup P_2$
from $P_{12}$ are referred to as \emph{white-gray} edges.
Note that white-gray edges do not block the view of any guard in
$P_k^{k-1}$, because they are merely edges of the auxiliary polygon
$P_{12}$.
$K$ is the intersection of all half-spaces induced by white or
white-gray edges, because a star's kernel is the intersection of all
half-spaces induced by its edges.

All points able to cover all of $P_1 \cup P_2$ must be contained in
\begin{equation}
\begin{aligned}
        L & = \left\{ g \in \R^2 \mid
                \textnormal{no gray edge blocks $g$'s view into $P_1$ or $P_2$} \right\} \\
          & = \bigcap_{\textnormal{$e$ is gray edge}}
                \textnormal{Half-space induced by $e$.}
\end{aligned}
\end{equation}
Some points in $L$ may be located outside of $P_k^{k-1}$, and even those
points of $L$ inside of $P_k^{k-1}$ may not be able to see all of
$P_1 \cup P_2$, due to white edges blocking their view.
$\bar g$ in Figure~\ref{fig:fcp-star} is an example for this case: it
cannot see the rightmost part of $P_2$.
However, $L \neq \emptyset$, because $g_3,\dots,g_k \in L \cap P_k^{k-1}$.
Every $g_i \in G$ with $3 \leq i \leq k$ is able to see all of
$P_1 \cup P_2$: $g_i$ can see all of $P_1$, because $g_2$, $g_i$ are a
2-cover of $P_k^{k-1}$; $g_i$ can entirely see $P_2$, because $g_1$, $g_i$
are also a 2-cover by Lemma~\ref{lem:2-composition}.

The remaining part of the proof involves two steps.
First, we argue that any point in $K \cap L$ can see all of $P_k^{k-1}$;
then we show that $K \cap L \neq \emptyset$, \ie, that $P_k^{k-1}$ is
star-shaped.
For the first step, assume there exists some $g \in K \cap L$.
Now $g$ has the following properties:
\begin{enumerate}
\item
        $g \in P_k^{k-1}$, because $K \subseteq P_k^{k-1}$.

\item
        $g$ can see all of $P_{12}$ by definition of $K$, which includes the
        interior of $P_{12}$, all white and all white-gray edges.

\item
        Because $P_k^{k-1}$ has no holes by Lemma~\ref{lem:no-holes}, $g$'s view
        on the white-gray edges is not blocked; and due to $g \in L$, there is
        nothing left that can block $g$'s view into $P_1 \cup P_2$.
\end{enumerate}
So $g \in \kernel\left(P_k^{k-1}\right)$, provided that $K \cap L \neq \emptyset$;
we show the latter as follows.

$K$ and $L$ are two-dimensional polyhedra, each of their edges is a facet.
Suppose their intersection is empty;
then by Lemma~\ref{lem:helly-poly},
there must be a facet of one of them that separates them from each other.
We consider three cases, because $K$ has two
types of edges and $L$ has one:
\begin{enumerate}
\item
        The facet is a facet of $K$, induced by a white edge $e$.
        Now consider a point $w$ in the interior of $e$.
        $w$ is seen by $g_1$ and $g_2$, but not by any of the points
        $g_3,\dots,g_k$, because $g_3,\dots,g_k \in L$, and $e$ induces a facet
        separating $K$ from $L$.
        Due to $k \geq 4$, this makes $\left| \V(w) \cap G \right| \geq k-1$,
        \ie, \eqref{eq:fcp-coverage}, impossible and thus
        contradicts the requirement of $P_k^{k-1}$ being a full circulant
        polygon.

        This would be the case in Figure~\ref{fig:fcp-star}, if $P_k^{k-1}$ had
        the extension $P'$,
        which would cut off the lower part of $K$ and thus separate $K$ from $L$.
        However, then only $g_1$ and $g_2$ but none of $g_3,\dots,g_k$ could see $w'$,
        which violates \eqref{eq:fcp-coverage}.

\item
        The facet is a facet of $K$, induced by a white-gray edge $e$.
        As $e$ is a white-gray edge, there is a part of $P_1$ or $P_2$
        adjacent to $e$.
        This part cannot be seen from any point in $L$, because the facet
        induced by $e$ is a facet of $K$ and separates $K$ from $L$ by
        assumption, and because $P_k^{k-1}$ has no holes by
        Lemma~\ref{lem:no-holes}.
        However, $g_3,\dots,g_k \in L$ not being able to see $P_1$ or $P_2$
        contradicts \eqref{eq:fcp-coverage}.

\item
        The facet is a facet of $L$.
        This means that there is some gray edge $e$ corresponding to that facet.
        A point $w$ in $e$'s interior is not seen by $g_1$ or $g_2$, because the
        facet separates $K$ from $L$.
        So $s$ is not seen by more than $k-2$ guards, and thus violates
        \eqref{eq:fcp-coverage}.
\end{enumerate}
All cases lead to contradiction, and thus $K \cap L \neq \emptyset$.
Therefore, $P_k^{k-1}$ has a non-empty kernel, and is star-shaped
for $k \geq 4$, as claimed.
\end{proof}

\begin{figure}
	\centering
	\def\svgwidth{.8\linewidth}
	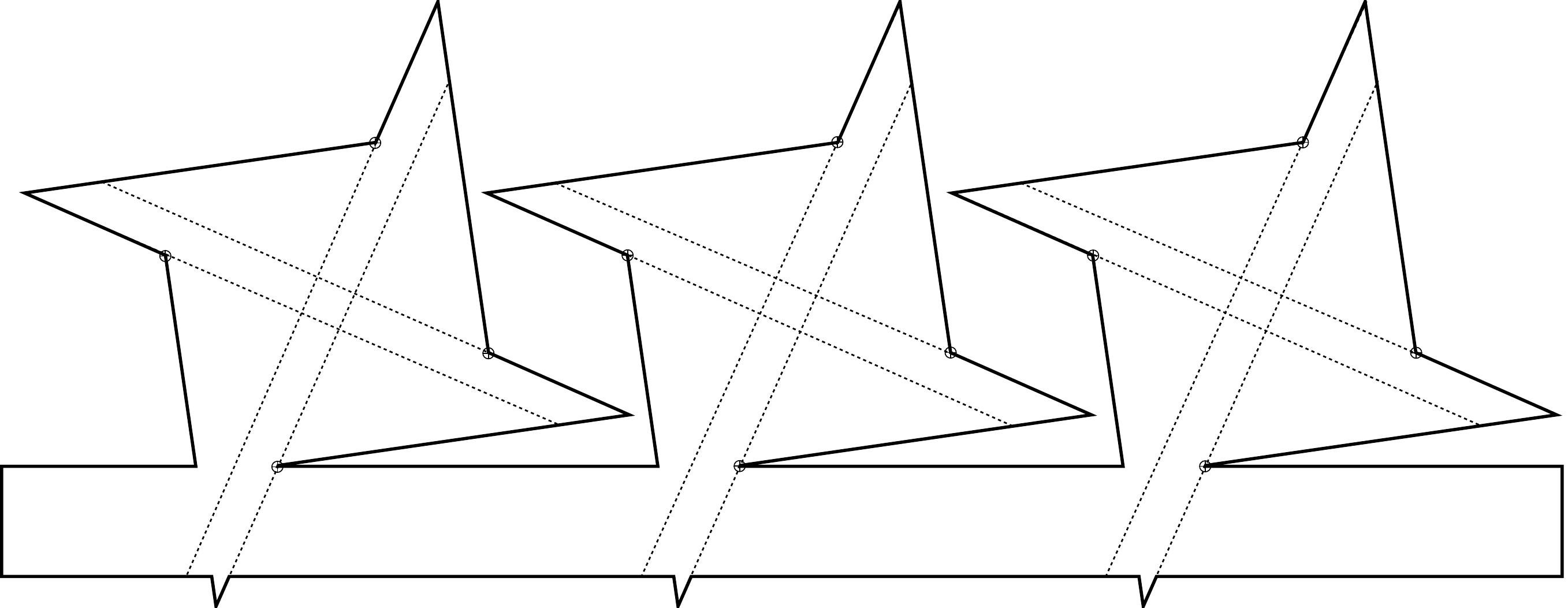
	\caption[Embedding $P_4^3$ into a larger polygon]{%
	Three instances of $P_4^3$ embedded into a larger polygon.
	Setting all guards to $\frac{1}{3}$ is
	feasible and optimal, even though no guard is placed in any of the $P_4^3$ kernels.
	}
	\label{fig:sc012-4-embedded}
\end{figure}

Theorem~\ref{thm:fcp-star} does not rule out situations in which
$P_k^{k-1}$, for $k \geq 4$ is part of a larger polygon, as shown in
Figure~\ref{fig:sc012-4-embedded}.
This example has no integrality gap;
placing at least five copies of $P_4^3$ around an appropriate
central subpolygon with a hole can actually create one.
However, such cases are much harder to come by, making the $k \geq 4$ facets a lot less
useful for cutting off fractional solutions.

Theorem~\ref{thm:fcp-star} does, however, provide a very useful separation
heuristic.
As the separation problem is \NP-complete for unlimited $k$, but solvable in
polynomial time for a fixed $k$, it is clear that $k$ must be limited in a
practical algorithm.
Theorem~\ref{thm:fcp-star} justifies choosing $k=3$ from a theoretical point
of view, by stating that the underlying geometry for $k>3$ is star-shaped, \ie,
allows placing one non-fractional guard in its kernel, see Figure~\ref{fig:sc012}.
As we show in Section~\ref{sec:exp}, this can also be validated in an experimental setting.

\subsection{All Art Gallery Facets with Coefficients \textbraceleft 0, 1, 2\textbraceright}

For finite $G,W\subset P$, $\agp(G,W)$ is also an SC instance.
Balas and Ng identified all SC facets with coefficients in
$\{0,1,2\}$~\cite{bn-scp-89};
so we present all AGP facets with coefficients $\{0,1,2\}$.
This includes three trivial facet classes, 
\eqref{eq:trivial-ag-1}~--~\eqref{eq:trivial-ag-3},
which are unable to cut off fractional solutions of $\agr(G,W)$.
The
only non-trivial facet in this inventory is the one of type $\alpha_S$ described
above.


\begin{equation}\label{eq:trivial-ag-1}
	x_g \geq 0
\end{equation}
is a facet of a full-dimensional $\conv(\agp(G,W))$, and only if
$\left| \V(w) \cap G \setminus \{g\} \right| \geq 2$ for all $w \in W$,
\ie, if every witness sees at least two guards other than
$g$.

A second type of \ac{AGP} facet is the upper bound of one for every guard
value.
It is a facet of every full-dimensional
$\conv(\agp(G,W))$~\cite{bn-scp-89}:
\begin{equation}\label{eq:trivial-ag-2}
	x_g \leq 1
\end{equation}

The third and last trivial \ac{AGP} facet with coefficients in $\{0,1,2\}$ is
\begin{equation}\label{eq:trivial-ag-3}
	\sum_{g \in \V(w) \cap G} x_g \geq 1
\end{equation}
This simply is the constraint induced by the witness $w \in W$, which
enforces sufficient coverage of $w$.
It is facet defining and only if two conditions hold:
First, there must not be any witness $w' \in W$ with
$\V(w') \cap G \subset \V(w) \cap G$.
Otherwise, the coverage of $w$ would be implied by that of $w'$.
Second, for any guard $g \in G \setminus \V(w)$, there exists some other
guard $g' \in \V(w) \cap G$ that can see all of
$\left\{ w' \in \V(g) \cap W \mid \bar{g} \notin \V(w'), \forall \bar{g}
\in G \setminus (\V(w) \cup \{g\}) \right\}$, compare~\cite{bn-scp-89}.

The fourth, and the only non-trivial, \ac{AGP} facet with coefficients in
$\{0,1,2\}$ is the facet of type $\alpha_S$ presented in Inequality~\eqref{eq:j012-constraint} and Theorem~\ref{thm:agp012-facet}, which is
thoroughly analyzed above.

\section{Edge Cover Facets}\label{sec:ec}

Solving $\agr(G,W)$ for finite $G,W \subset P$, such that no guard can
see more than two witnesses is equivalent to solving fractional edge cover (EC)
on the graph with nodes $W$, an edge between $v \neq w \in W$ for
each $g \in G$ with $\V(g) \cap W = \{v,w\}$, and a loop for each
$g \in G$ with $\V(g) \cap W = \{w\}$.
The fractional EC polytope is known to be half-integral~\cite{s-cope-03},
which can be exploited to show that fractional solutions always form odd-length
cycles of $\frac12$-guards.

In the conclusions of~\cite{kbfs-esbgagp-12}, we proposed a class of valid inequalities
motivated by this.
The idea is to identify $k$ witnesses $\overline{W} = \{ w_1, \dots, w_k \}$,
such that no point exists that can see more than two of them.
Then at least $\left\lceil \frac{k}{2} \right\rceil$ binary guards
are needed for covering $\overline W$.
Two examples are shown in Figure~\ref{fig:2guard-agr}.
%
\begin{equation}\label{eq:2guard-agr-sep}
	\sum_{g \in \V(\overline{W}) \cap G} x_g
		\geq \left\lceil \frac{k}{2} \right\rceil
\end{equation}

\begin{figure}
		\def\svgwidth{.4\linewidth}
		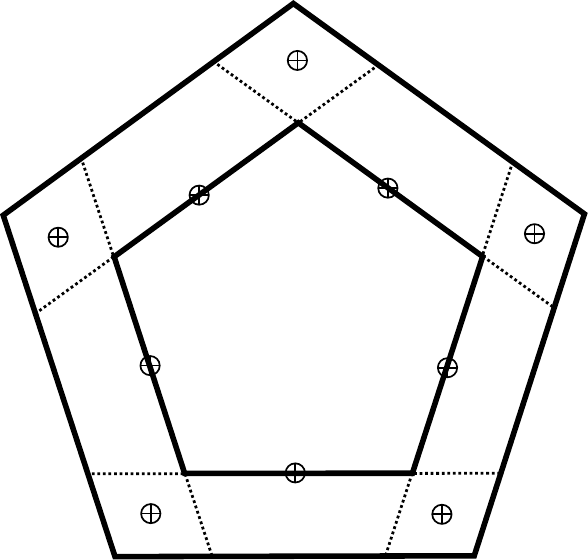
	\hfill
		\def\svgwidth{.4\linewidth}
		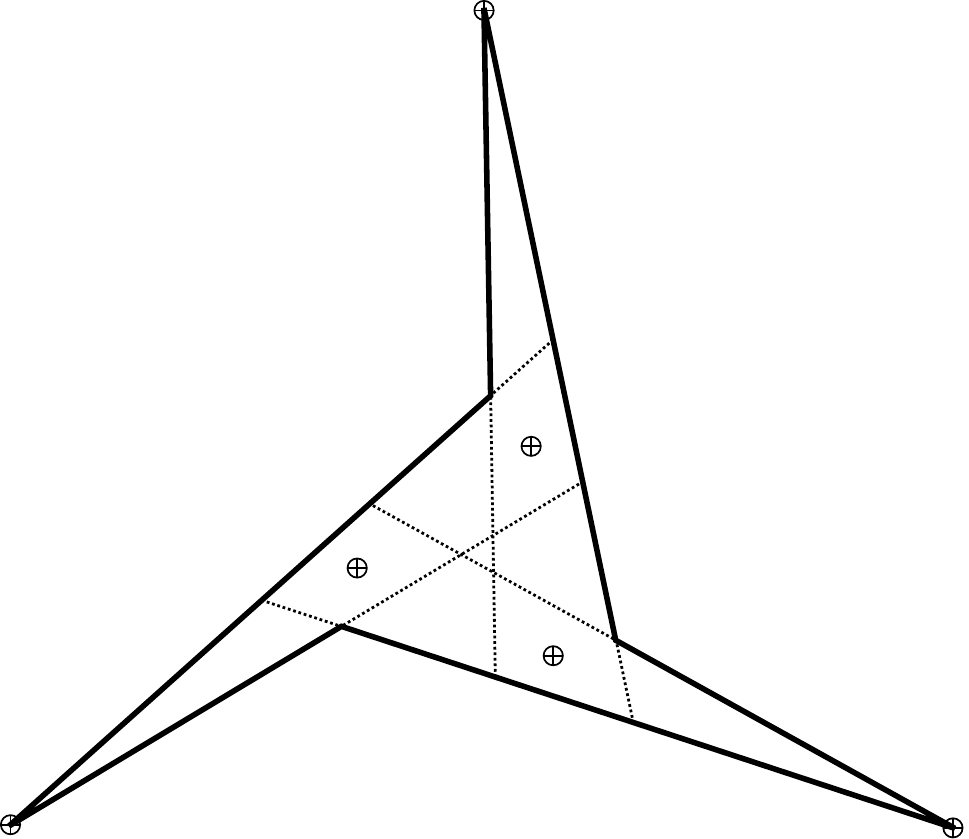
	\caption[\acl{AGR2}]{%
	Two situations, in which no guard exists that can see more than
	two witnesses.
	On the left,
	assigning $g_1=\dots=g_5=\frac{1}{2}$ results in an optimal
	fractional solution of $\frac{5}{2}$, compare~\cite{kbfs-esbgagp-12}.
	Applying \eqref{eq:2guard-agr-sep} yields
	$g_1+\dots+g_5 \geq \frac{5+1}{2} = 3$ and cuts off this
	fractional solution.

	On the right, the optimal fractional
	solution is $g_1=g_2=g_3=\frac{1}{2}$.
	\eqref{eq:2guard-agr-sep} provides the
	constraint $g_1+g_2+g_3 \geq \frac{3+1}{2} = 2$, which cuts off that
	fractional solution as well.
	}
	\label{fig:2guard-agr}
\end{figure}


Obviously, for any choice of $P \supseteq G' \supseteq G$,
\eqref{eq:2guard-agr-sep} does not cut off any feasible
solution $x \in \{0,1\}^{G'}$ of $\agp(G',P)$, as long as no point in $P$ exists that
sees
more than two of these witnesses.
Hence, analogously to the SC cuts, a cut can be represented as visibility overlay
$\alpha_{\overline{W}}$ and kept in future iterations once it
has been identified.

It is not hard to show that these are facet defining under relatively mild conditions.



\begin{theorem}\label{thm:ec-facet}
Let $P$ be a polygon with finite sets of guard and witness positions
$G,W \subset P$, such that $\conv(\agp(G,W))$ is full-dimensional.
Let $\overline{W} = \left\{ w_1, \dots, w_k \right\} \subseteq W$
be an odd subset of $k \geq 3$ witnesses, such that
\begin{enumerate}
\item
	No guard sees more than two witnesses in $\overline W$.

\item
	If a guard sees two witnesses $w_i \neq w_j \in \overline{W}$, they are
	a \emph{successive pair}, \ie, $i+1=j$ or $i=1$ and $j=k$.

\item\label{itm:ec-successive-2}
	Each of the $k$ successive pairs is seen by some $g \in G$.

\item
	No guard inside of $\V\left( \overline{W} \right)$ sees a witness
	outside of $\overline W$.
\end{enumerate}
Then the constraint
\begin{equation}\label{eq:ec-facet}
	\sum_{g \in \V\left(\overline{W}\right) \cap G} x_g
		\geq \left\lceil \frac{\left|\overline{W}\right|}{2} \right\rceil
\end{equation}
is a facet of $\conv(\agp(G,W))$.
\end{theorem}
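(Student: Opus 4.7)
The plan is to verify validity and then exhibit $|G|$ affinely independent integer solutions of $\agp(G,W)$ on the face cut out by~\eqref{eq:ec-facet}; combined with full-dimensionality of $\conv(\agp(G,W))$, this certifies that the inequality defines a facet. Let $G_{\overline{W}} := \V(\overline{W}) \cap G$. Assumptions~1 and~2 state that each $g \in G_{\overline{W}}$ acts on $\overline{W}$ either as a loop at a single $w_i$ or as an edge of the cycle $C_k$ on $\overline{W}$ between some successive pair $(w_i, w_{i+1})$, and assumption~3 guarantees every cycle edge is realised by some guard. Any feasible $x$ of $\agp(G,W)$ must cover $\overline{W}$, and only guards in $G_{\overline{W}}$ can do so; covering $\overline{W}$ is therefore edge cover on the odd cycle $C_k$, so $x(G_{\overline{W}}) \geq \lceil k/2 \rceil$, establishing validity.

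Next, decompose $G = G_{\overline{W}} \dcup N$ with $N := G \setminus G_{\overline{W}}$. By assumption~4 no $g \in G_{\overline{W}}$ sees a witness in $W \setminus \overline{W}$, so each such witness is seen only by $N$-guards; Lemma~\ref{lem:agp-fulldim} then forces each to be seen by at least two guards in $N$. Consequently $\one_N$ covers $W \setminus \overline{W}$, as does each $\one_N - e_g$ for $g \in N$; paired with any fixed minimum edge cover of $\overline{W}$, these supply $|N|$ affine directions $-e_g$ confined to the $N$-block of $\R^G$.

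The main obstacle is producing $|G_{\overline{W}}|$ affinely independent minimum integer edge covers of $\overline{W}$ in the $G_{\overline{W}}$-block. Using assumption~3, fix one representative edge guard $e_i$ for each pair $(w_i, w_{i+1})$, and for each $j \in \{1,\dots,k\}$ form the canonical ``double-cover'' $D_j$ consisting of $e_{j-1}$, $e_j$ and a perfect matching of $\overline{W} \setminus \{w_{j-1}, w_j, w_{j+1}\}$ drawn from the $e_i$; each $D_j$ has size $(k+1)/2$. The indicator vectors of $D_1,\dots,D_k$ form a circulant matrix whose generating polynomial is $1 + x^2 + x^4 + \cdots + x^{k-1}$; a short calculation with $k$-th roots of unity shows that for odd $k$ this polynomial is coprime to $x^k - 1$, so the circulant is invertible and the $D_j$ are linearly, hence affinely, independent. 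For each remaining $g \in G_{\overline{W}} \setminus \{e_1,\dots,e_k\}$ I would append one further minimum cover $C_g$ containing $g$: if $g$ is parallel to some $e_i$, take $C_g$ by swapping $e_i \mapsto g$ in any $D_j$ containing $e_i$; if $g$ is a loop at $w_j$, combine $g$ with a perfect matching of $\overline{W} \setminus \{w_j\}$ drawn from the $e_i$. Each such $C_g$ is nonzero at a coordinate where all previously chosen covers vanish, so appending it strictly increases the linear rank, yielding $|G_{\overline{W}}|$ affinely independent minimum covers.

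Combining the two blocks through the common base point $(D_1, \one_N)$, the difference vectors $(D_j - D_1, 0)$ and $(C_g - D_1, 0)$ supported in the $G_{\overline{W}}$-block, together with $(0, -e_g)$ for $g \in N$, form a block-diagonal family of $|G|-1$ linearly independent vectors in $\R^G$. Hence the $|G|$ corresponding integer feasible solutions of $\agp(G,W)$ on the face are affinely independent, proving that~\eqref{eq:ec-facet} defines a facet of $\conv(\agp(G,W))$.
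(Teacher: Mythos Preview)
Your proof is correct and follows essentially the same approach as the paper: the same partition of $G$ into guards inside and outside $\V(\overline{W})$, the same circulant family of minimum edge covers on the cycle $C_k$, the same swap/loop construction for the remaining $G_{\overline{W}}$-guards, and the same $\one - e_g$ trick for the outside block. If anything, you are more careful than the paper, which simply asserts affine independence of its $n_1$ circulant solutions without justification, whereas you actually verify it via the coprimality of $1+x^2+\cdots+x^{k-1}$ with $x^k-1$ for odd $k$.
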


\begin{proof}
As no guard sees more than two witnesses of $\overline W$, it is clear that
it takes at least $\left\lceil \frac{k}{2} \right\rceil$ guards to cover
$\overline W$.

It remains to be shown how to construct $n = |G|$ affinely independent solutions
of $\agp(G,W)$.
In order to do that, we separate the guards into three groups
$G_1 \dcup G_2 \dcup G_3 = G$ with $|G_i| = n_i$:
\begin{enumerate}
\item
	$G_1$ is a set of one guard for each successive pair as in
	Condition~\ref{itm:ec-successive-2}.

\item
	$G_2$ contains all guards in $\V\left( \overline{W} \right)$ that are
	not already part of $G_1$:
	\begin{equation}
		G_2 = \left( \V\left( \overline{W} \right) \cap G \right) \setminus G_1
	\end{equation}

\item
	$G_3$ holds the rest of the guards, which are outside of $\V(\overline{W})$:
	\begin{equation}
		G_3 = G \setminus \V\left( \overline{W} \right)
	\end{equation}
\end{enumerate}

In the following, we describe a solution $x \in \{0,1\}^G$ by
$x = \left(x_{G_1}, x_{G_2}, x_{G_3}\right)$, where $x_{G_i} \in \{0,1\}^{G_i}$
denotes the vector $(x_{g_1},\dots,x_{g_{n_i}})$ with
$G_i = \{g_1,\dots,g_{n_i}\}$.
The first set of $n_1$ solutions is
\begin{equation}
\begin{aligned}
	x^1 = & \left( (1,0,1,0,\dots,1,0,1), \quad \zero, \quad \one \right) \\
	x^2 = & \left( (1,1,0,1,\dots,0,1,0), \quad \zero, \quad \one \right) \\
	x^3 = & \left( (0,1,1,0,\dots,1,0,1), \quad \zero, \quad \one \right) \\
	      & \vdots \\
	x^{n_1} = & \left( (0,1,0,1,\dots,0,1,1), \quad \zero, \quad \one \right),
\end{aligned}
\end{equation}
which exists because of Condition~\ref{itm:ec-successive-2} and the choice
of $G_1$.
It fulfills~\eqref{eq:ec-facet} with equality because it uses
$\left\lceil\frac{k}{2}\right\rceil$ guards, and it is feasible, because
$\overline W$ and $W \setminus \overline W$ are covered by construction and
guards not in $G_3$ do not interfere with the coverage of witnesses
in $W \setminus \overline W$.

The second set provides $n_2$ solutions by using the $i$-th unit vector as
$x_{G_2}$.
\begin{equation}
	x^i = \left( x', \quad e_i, \quad \one \right)
\end{equation}
As every successive pair of witnesses is covered by some guard in $G_1$, a
choice of $x'$ such that $x^i$ fulfills \eqref{eq:ec-facet} with
equality is always possible.

The third and last set of $n_3$ solutions is constructed by subtracting
$e_i$ from $\one$ in the vector $x_{G_3}$:
\begin{equation}
	x^i = \left( (1,0,1,0,\dots,1,0,1), \quad \zero, \quad \one - e_i \right).
\end{equation}
It fulfills \eqref{eq:ec-facet} with equality.
Setting one guard value to zero in $G_3$ is feasible because in a
full-dimensional $\conv(\agp(G,W))$, every witness is seen by at least two
guards, compare Lemma~\ref{lem:agp-fulldim}.

All in all, we have $n_1 + n_2 + n_3 = n$ feasible, affinely
independent solutions of $\agp(G,W)$ fulfilling
\eqref{eq:ec-facet} with equality, so
\eqref{eq:ec-facet} has dimension $n-1$ and is a facet, as
claimed.
\end{proof}

\section{Computational Experience}\label{sec:exp}
\old{
\begin{figure}
        \centering
                \def\svgwidth{.48\linewidth}
                \input{testcase_fractional.pdf_tex}
                \label{fig:testcase-fractional}
                \def\svgwidth{.48\linewidth}
                \input{testcase_cuts.pdf_tex}
                \label{fig:testcase-cuts}
        \caption[Solution without and with Cuts]{%
        A optimal fractional solution of value 5 without (left) and an optimal integer solution of value 6
        with cutting planes (right).
        Large circles indicate guards, fill-in corresponds to fractional amount;
        small blue circles indicate unused guard positions, while green circles indicate witnesses.
        The color gradient in the polygon indicates coverage: gray means one, bluish is
        more than one and green would be less than one, which must not happen in
        a feasible solution.
        Cutting planes enforce at least two guards in the left red area
        and at least three guards in the right red area.
        }
        \label{fig:testcase}
\end{figure}
}

A variety of experiments on benchmark polygons
demonstrates the usefulness of our cutting planes,
as well as the appropriateness of our separation heuristic of using only $k=3$
for the SC related facets from Section~\ref{sec:sc}.

We test our cutting planes in two variations of our algorithm, IP and LP mode, \ie, Algorithms~\ref{alg:ip} and~\ref{alg:lp} from Section~\ref{sec:algorithms}.
An in-depth presentation of the results is conducted in Sections~\ref{sec:ip} and~\ref{sec:lp}.

\begin{figure}
	\centering
	\subfigure{
		\def\svgwidth{.15\linewidth}
		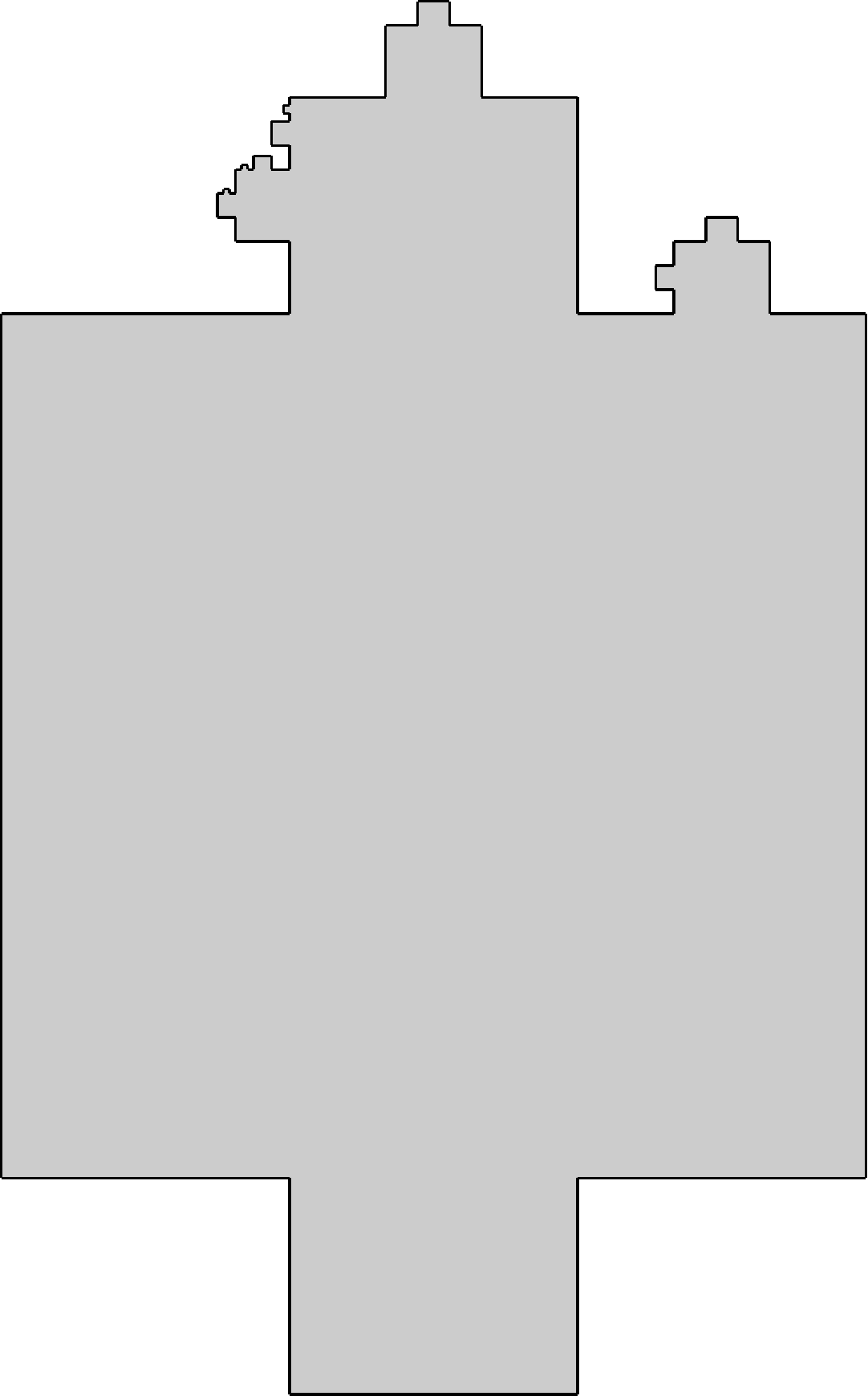
		\label{fig:testpolygon-koch}
	}
	\subfigure{
		\def\svgwidth{.15\linewidth}
		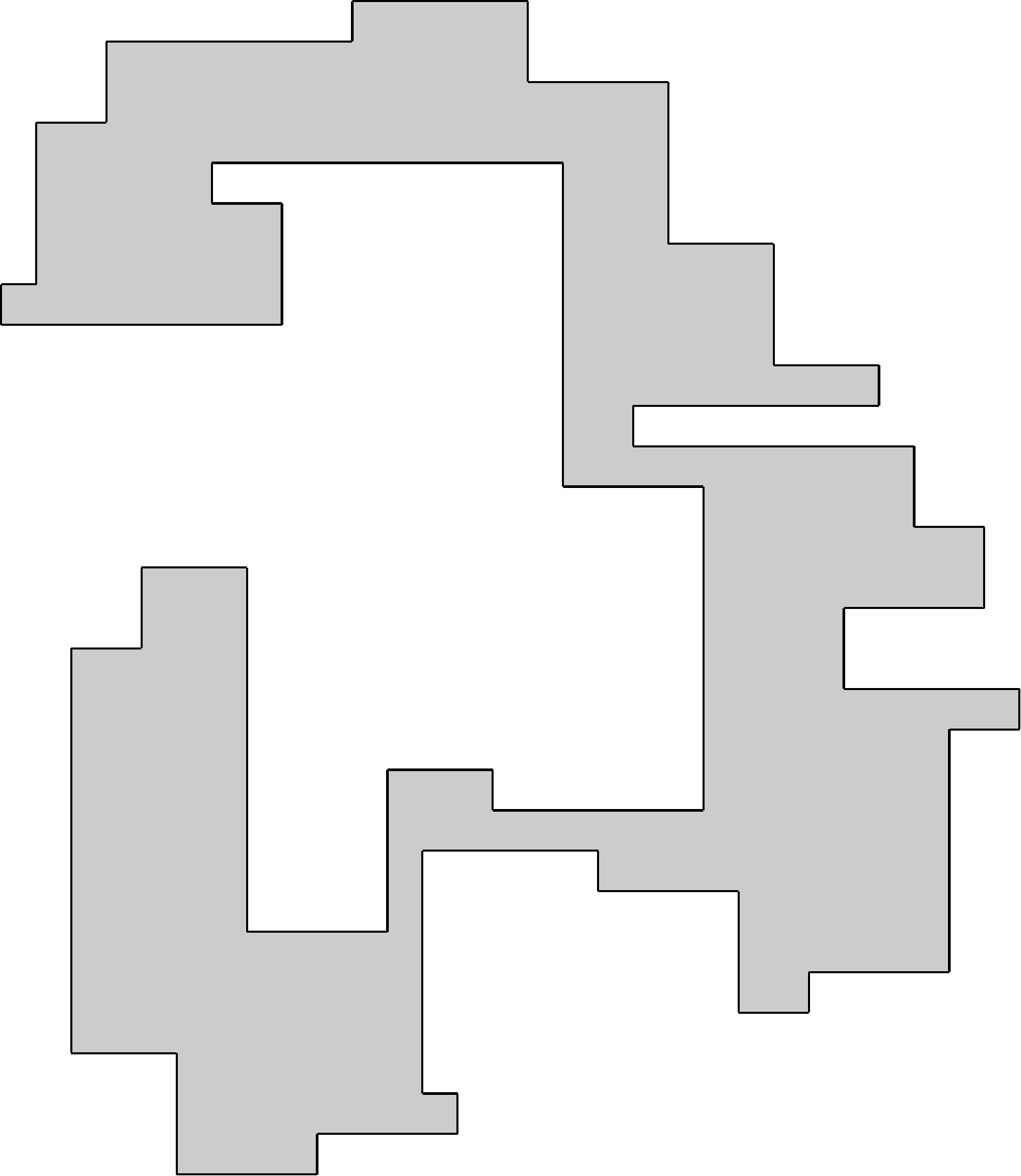
		\label{fig:testpolygon-orthogonal}
	}
	\subfigure{
		\def\svgwidth{.4\linewidth}
		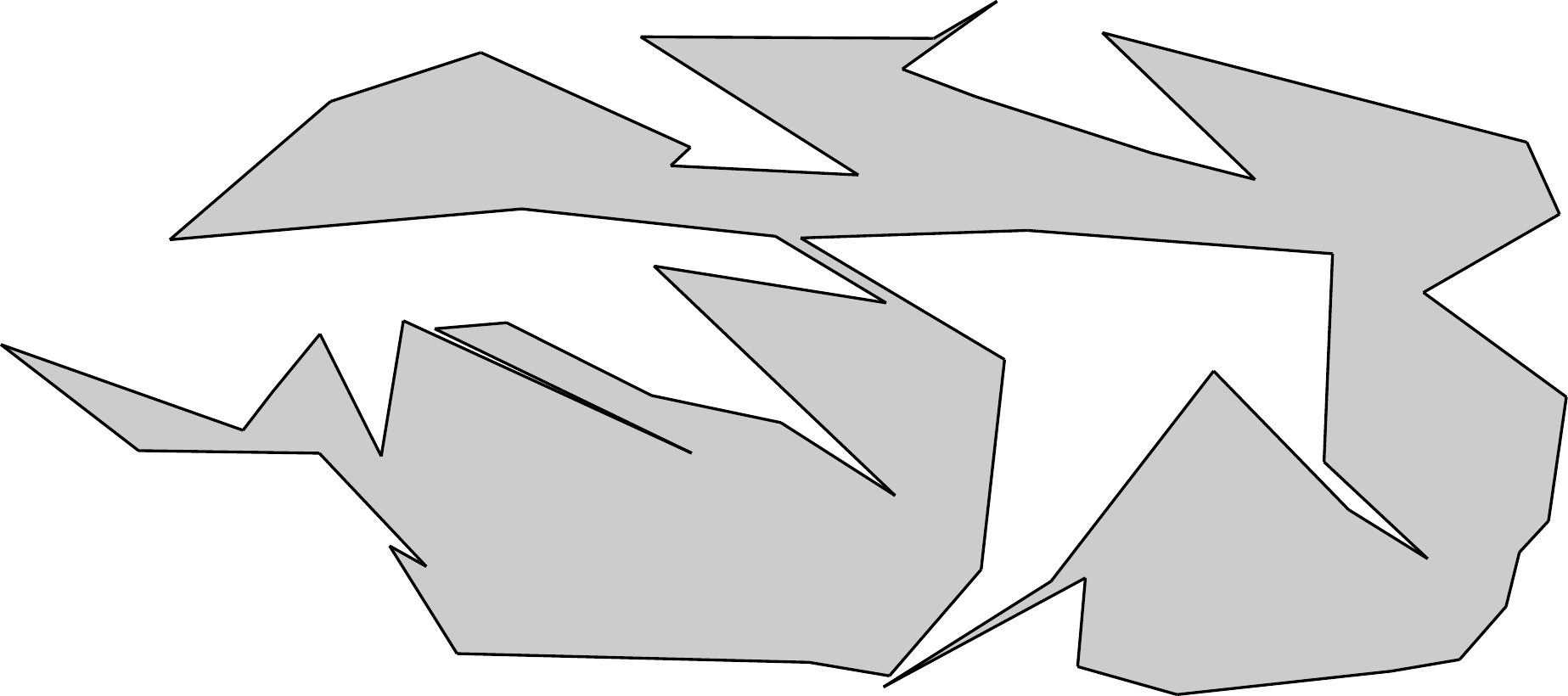
		\label{fig:testpolygon-simple}
	}
	\subfigure{
		\def\svgwidth{.15\linewidth}
		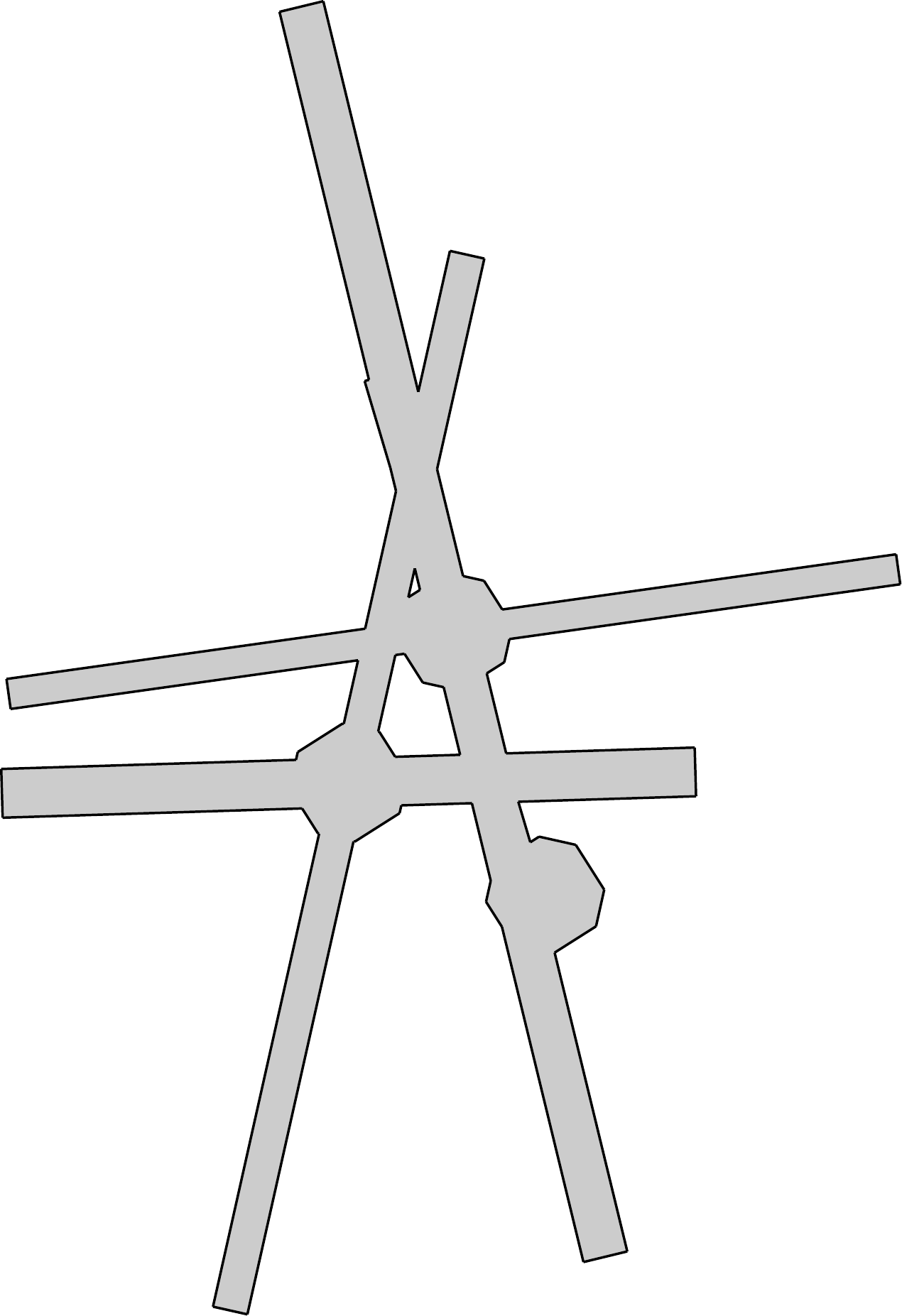
		\label{fig:testpolygon-spike}
	}
	\caption[Test Polygon Classes]{%
	Small \emph{von Koch}, \emph{Orthogonal}, \emph{Simple} and \emph{Spike} test polygons.
	}
	\label{fig:testpolygon}
\end{figure}

Just as in~\cite{kbfs-esbgagp-12}, we employed
four different classes of benchmark polygons.
\begin{enumerate}
\item
	Random \emph{von Koch} polygons are inspired by fractal Koch curves,
	see Figure~\ref{fig:testpolygon}, left.

\item
	Random floorplan-like \emph{Orthogonal} polygons as in
	Figure~\ref{fig:testpolygon}, second polygon.

\item
	Random non-orthogonal \emph{Simple} polygons as in
	Figure~\ref{fig:testpolygon}, third polygon.

\item
	Random \emph{Spike} polygons (mostly with holes) 
	as in Figure~\ref{fig:testpolygon}, fourth polygon.

\end{enumerate}
Each polygon class was evaluated for different sizes
$n \in \{ 60, 200, 500, 1000 \}$, where $n$ is the approximate number
of vertices in a polygon.

Different combinations of cut separators were also employed.
The EC-related cuts from Section~\ref{sec:ec} are referred to as
\emph{EC cuts}, while
the \acs{SC}-related cuts of Section~\ref{sec:sc} that
rely on separating a maximum of $3 \leq k$ witnesses are denoted by \emph{SC$k$} cuts.
Note that for $3 \leq m \leq k$, \acs{SC}$k$ cuts also include all \acs{SC}$m$ cuts.

Whenever our algorithm separates cuts, it applies all configured cut
separators and we test the following combinations:
no cut separation at all, \acs{SC}3 cuts only, \acs{SC}4 cuts only,
\acs{EC} cuts only, and \acs{SC}3 and \acs{EC} cuts at the same time.

In total, we have two modes, five combinations of separators, four classes of polygons, and four
polygon sizes; for each combination, we tested 10 different polygons.
The experiments were run on \SI{3.0}{GHz} Intel dual core PCs with
\SI{2}{GB} of memory, running 32 bit Debian 6.0.5 with Linux 2.6.32-686.
Our algorithms were not parallelized, used version 4.0
of the ``Computational Geometry Algorithms Library'' (CGAL)
and CPLEX 12.1. Each test run had a time limit of \SI{600}{s}.



In the remaining part of this section, we refer to quartiles by \ac{Q0}, \ac{Q1}, \ac{Q2}, \ac{Q3} and \ac{Q4}.
\ac{Q1} is the first quartile, which is between the lowest 25\% and the rest
of the values.
\ac{Q2} is the second quartile or the median value and \ac{Q3}, the third
quartile, splits the upper 25\% from the lower 75\%.
For the sake of simplicity, the minimum and the maximum are denoted by \ac{Q0} and \ac{Q4},
respectively.



\subsection{IP Mode}
\label{sec:ip}

The IP mode, Algorithm~\ref{alg:ip}, is a variation of the one introduced
in~\cite{kbfs-esbgagp-12}, which always determines binary solutions at the
expense of not necessarily terminating due to the integrality gap.
Our experiments confirm that the integrality gap is drastically reduced by our cutting planes.

\begin{figure}
	\subfigure[No Cuts]{
		\includegraphics[clip,width=.45\linewidth]%
				{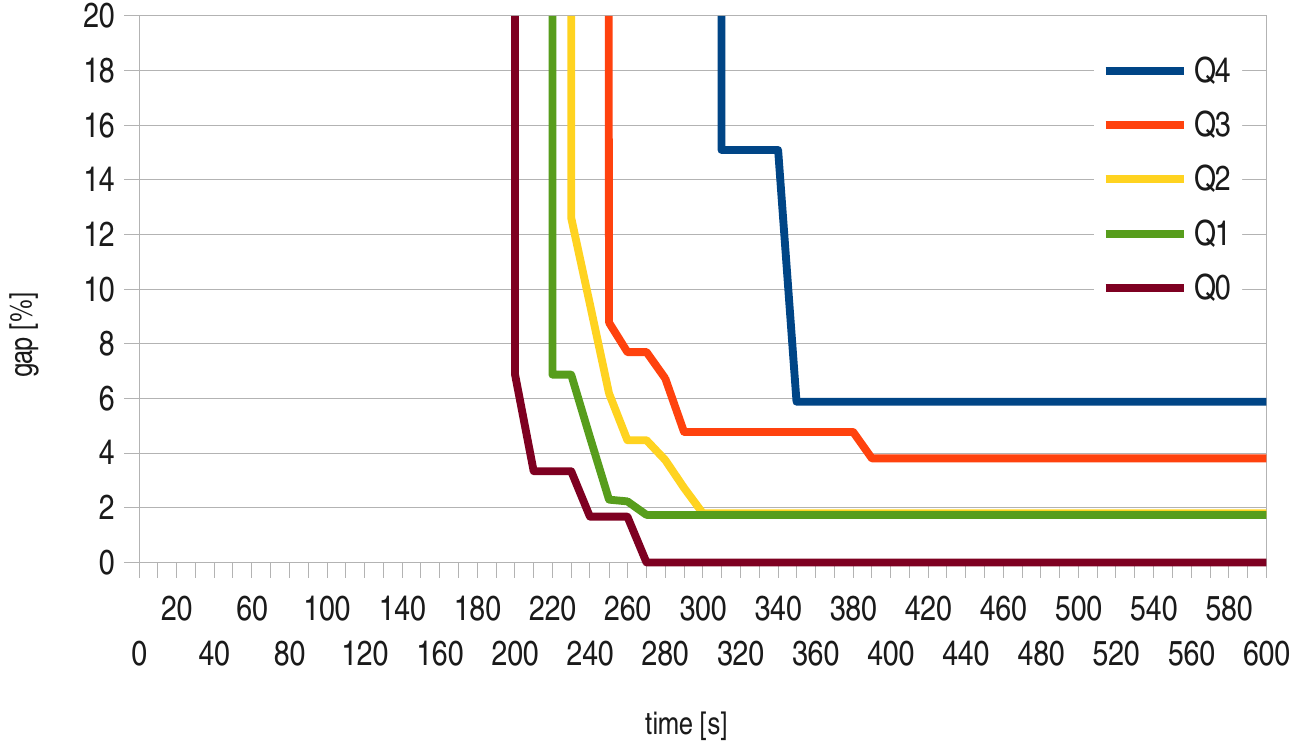}
		\label{fig:ip-ko1000}
	}
	\subfigure[\acs{EC}]{
		\includegraphics[clip,width=.45\linewidth]%
				{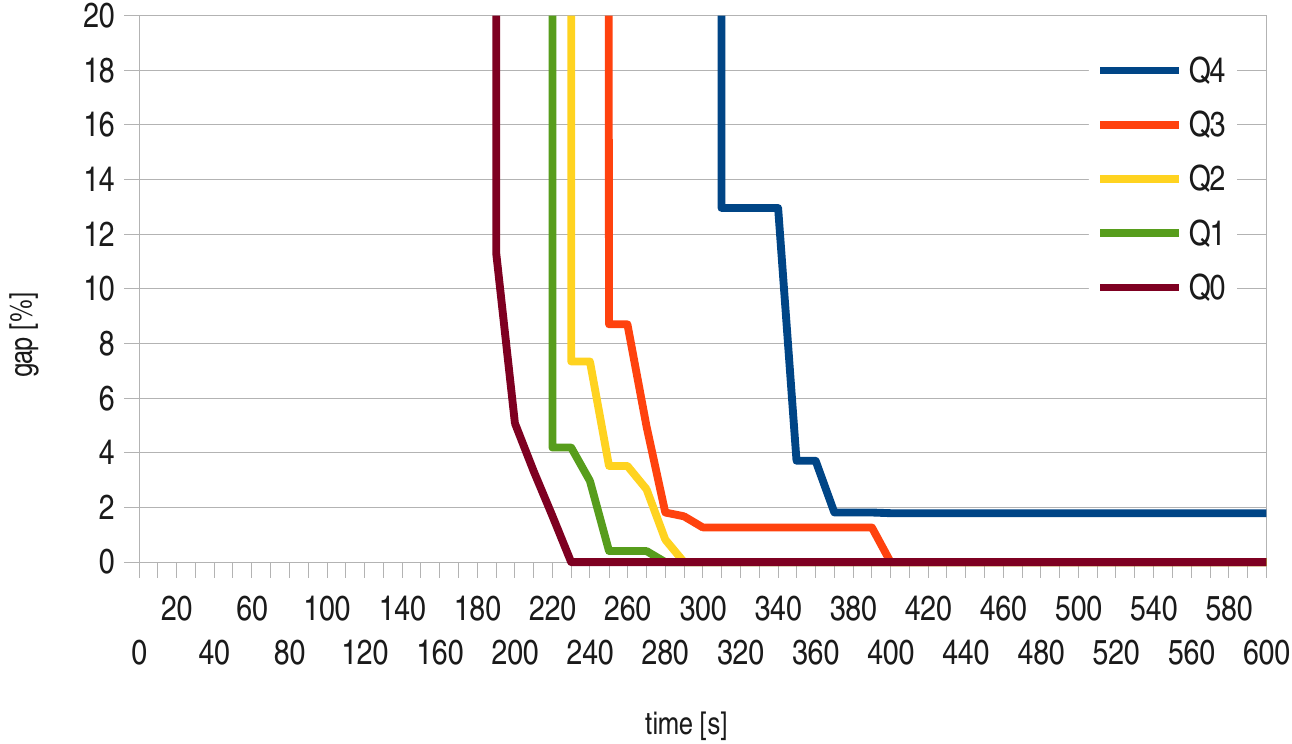}
		\label{fig:ipoc-ko1000}
	}
	\subfigure[\acs{SC}3]{
		\includegraphics[clip,width=.45\linewidth]%
				{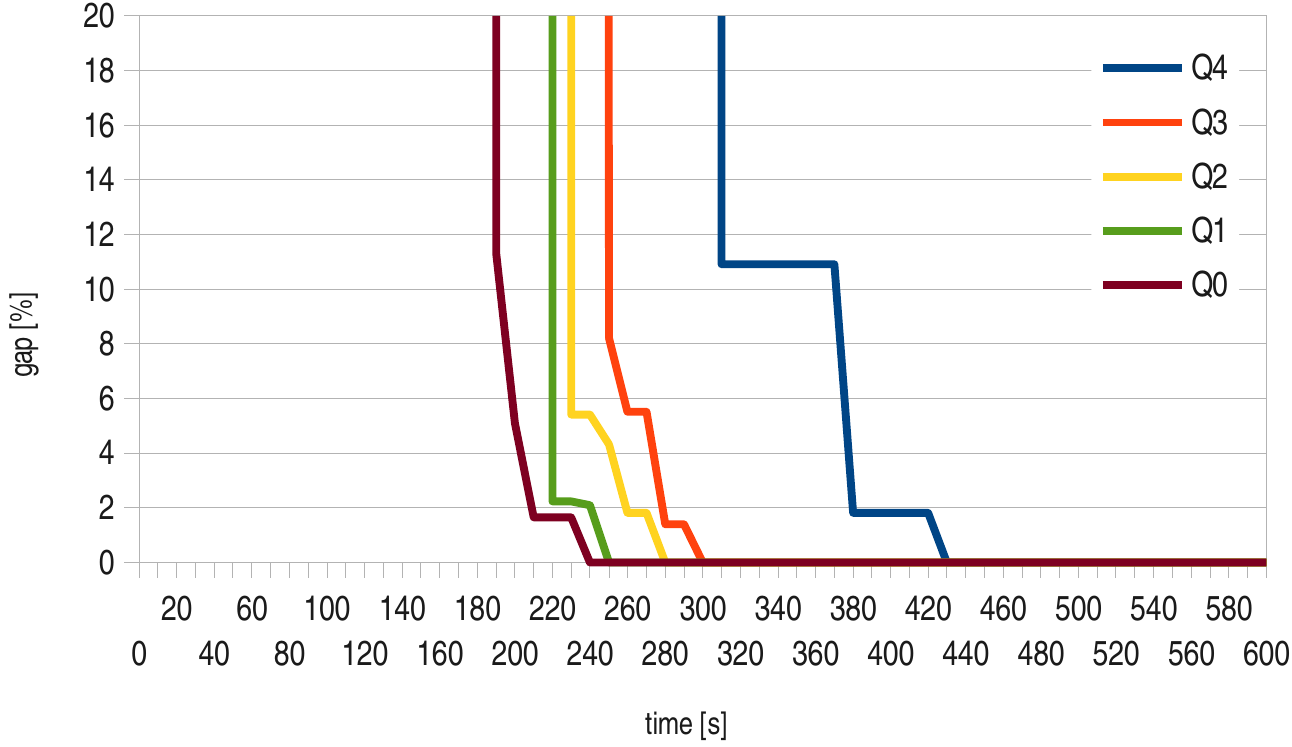}
		\label{fig:ipsc3-ko1000}
	}
	\subfigure[\acs{SC}3 and \acs{EC}]{
		\includegraphics[clip,width=.45\linewidth]%
				{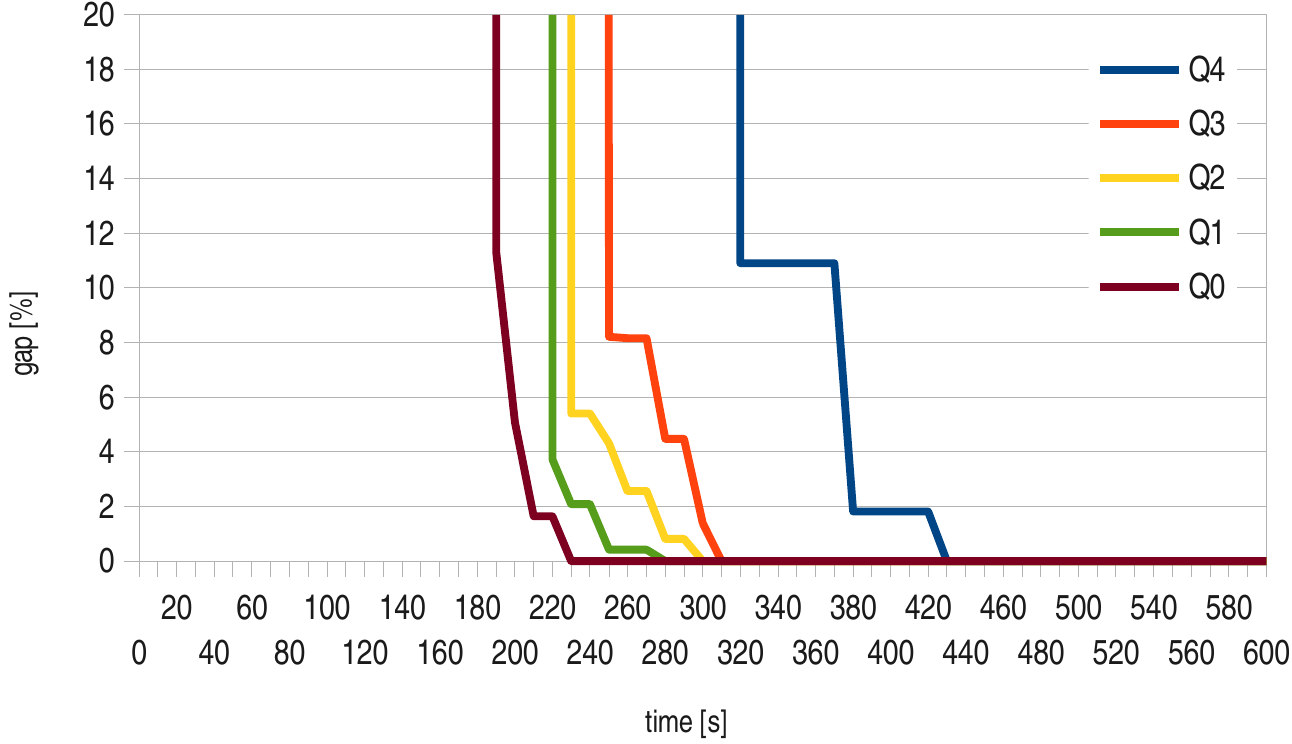}
		\label{fig:ipsc3oc-ko1000}
	}
	\subfigure[\acs{SC}4]{
		\includegraphics[clip,width=.45\linewidth]%
				{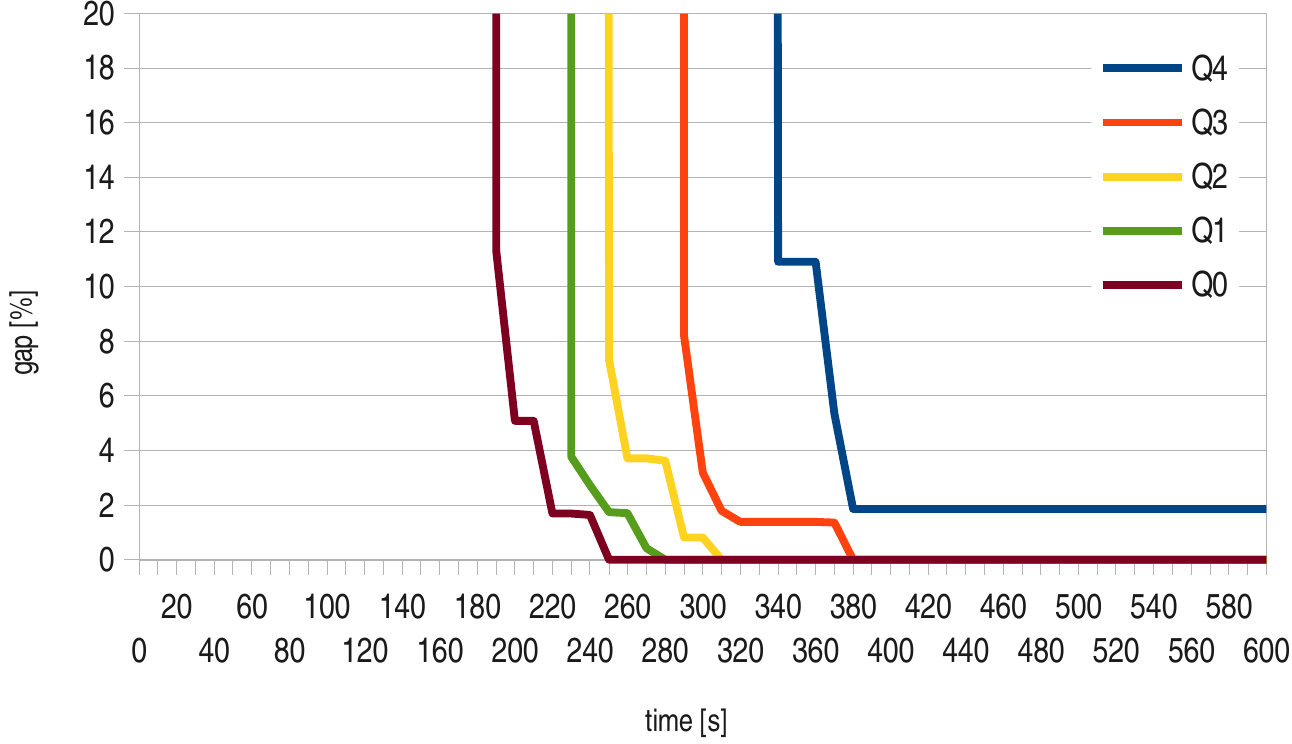}
		\label{fig:ipsc4-ko1000}
	}
	\caption[Relative Gap over Time in \acs{IP}-Mode: von Koch]{%
	Relative gap over time in \acs{IP}-mode for 1000-vertex
	\emph{von Koch}-type polygons.
	}
	\label{fig:ip-ko1000-all}
\end{figure}

In Figure~\ref{fig:ip-ko1000-all} we present the relative gap over time for the five tested cut
separator selections for the \emph{von Koch}-type polygons with 1000
vertices.
Figure~\ref{fig:ip-ko1000}
shows the relative gap over time without cut
separation.
After about \SI{400}{s}, gaps are fixed between~0\% and~6\%,
the median gap being~2\%.
When applying the \acs{EC} separator (Figure~\ref{fig:ipoc-ko1000}),
75\% of the gaps drop to zero and the largest gap is 2\%.
Using the \acs{SC}3 separator (Figure~\ref{fig:ipsc3-ko1000})
yields an even better result in terms of both speed
and relative gap.
All gaps are closed, many of them earlier than with the
\acs{EC} separator.
Combining both, see Figure~\ref{fig:ipsc3oc-ko1000}, yields a result
comparable to using only \acs{SC}3.
Moving to the \acs{SC}4 separator (Figure~\ref{fig:ipsc4-ko1000}) yields a weaker performance:
computation times go up, and not all gaps reach 0\% within the allotted time,
because separation takes longer without improving the gap.
This illustrates the practical consequences of Theorem~\ref{thm:fcp-star}.

\old{
Let $W = \{w_1,\dots,w_k\}$ be a set of witnesses inducing the
\acs{SC}-related constraint $\alpha_W$ and $G = \{g_1,\dots,g_k\}$ guards in
$\V(W)$, such that $\V(g_i) \cap W = W \setminus \{w_i\}$.
This is the case for a minimal $W$ in \acs{SC}-related constraints by
Lemma~\ref{lem:ag012-ck}.
Then, by Theorem~\ref{thm:fcp-star} and for $k \geq 4$, there are two
possibilities, corresponding to the two cases illustrated in Figure~\ref{fig:sc012}:
Either $\V(W)$ is star-shaped, and an additional guard $g^*$ in the kernel will
already yield an optimal solution, except for some highly complicated scenarios;
or we get dual separation, \ie, an additional witness position $w$ that already
separates the current fractional solution.
}

\old{
\begin{enumerate}
\item
If some $w \in \V(W)$ sees less then $k-1$ guards in $G$, \ie,
$|\V(w) \cap G| \leq k-2$, then
$\V(W)$ is no full circulant polygon by Definition~\ref{def:fcp}, and
the optimal solution of the underlying SC
may assign $\frac{1}{k-1}$ to each guard position, covers $W$ but not $w$.
Such a situation causes dual separation to identify $w$ or an equivalent
point, which in turn yields a cut constraint,
so $\alpha_W$ is not needed to separate the current fractional solution.

\item
Otherwise we know by Theorem~\ref{thm:fcp-star} that $\V(W)$ is star-shaped,
and a $1$-guard may be placed within its kernel,
covering $\V(W)$. There are some relatively complicated situations in which this
is not yet optimal, but these occur too rarely and are too time-consuming to detect
in order to warrant employing these cuts.
\todo{Are these complicated cases are mentioned earlier in the final version?
If so, just use a reference}
\end{enumerate}
}

\begin{figure}
	\subfigure[No Cuts]{
		\includegraphics[clip,width=.45\linewidth]%
				{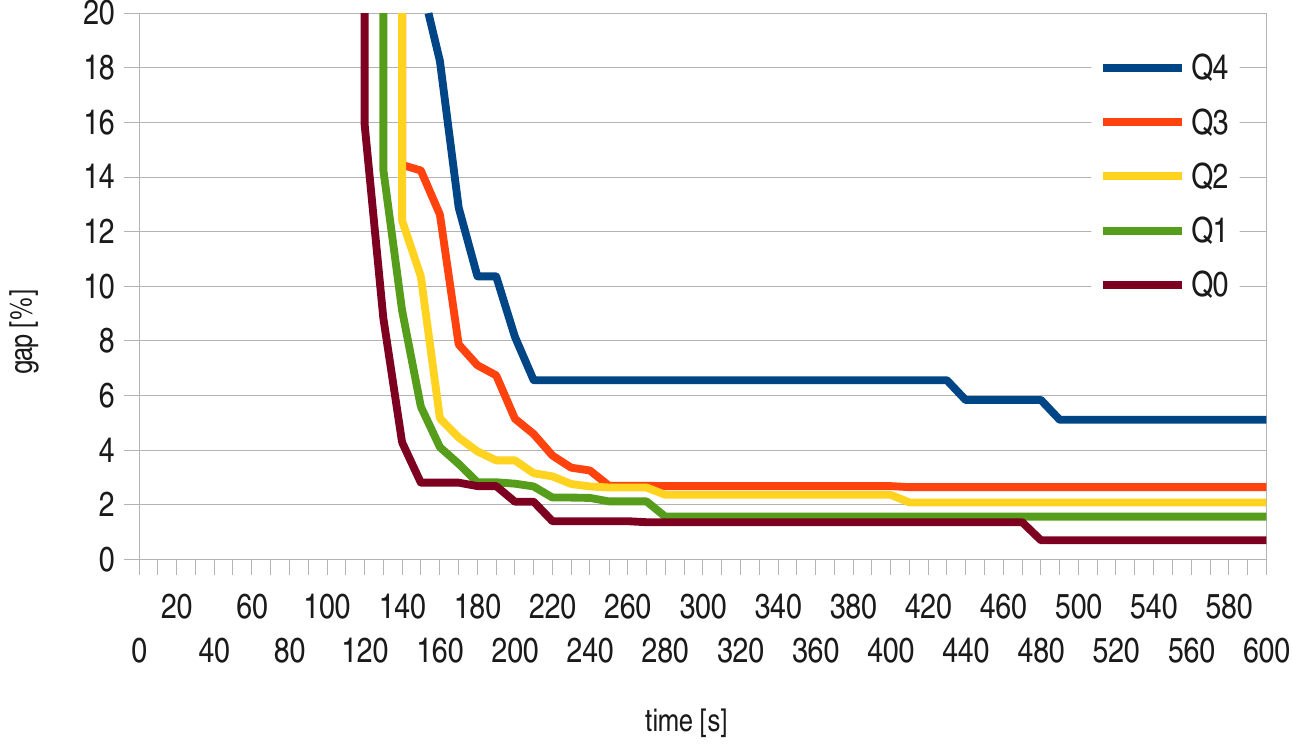}
		\label{fig:ip-or1000}
	}
	\subfigure[\acs{EC}]{
		\includegraphics[clip,width=.45\linewidth]%
				{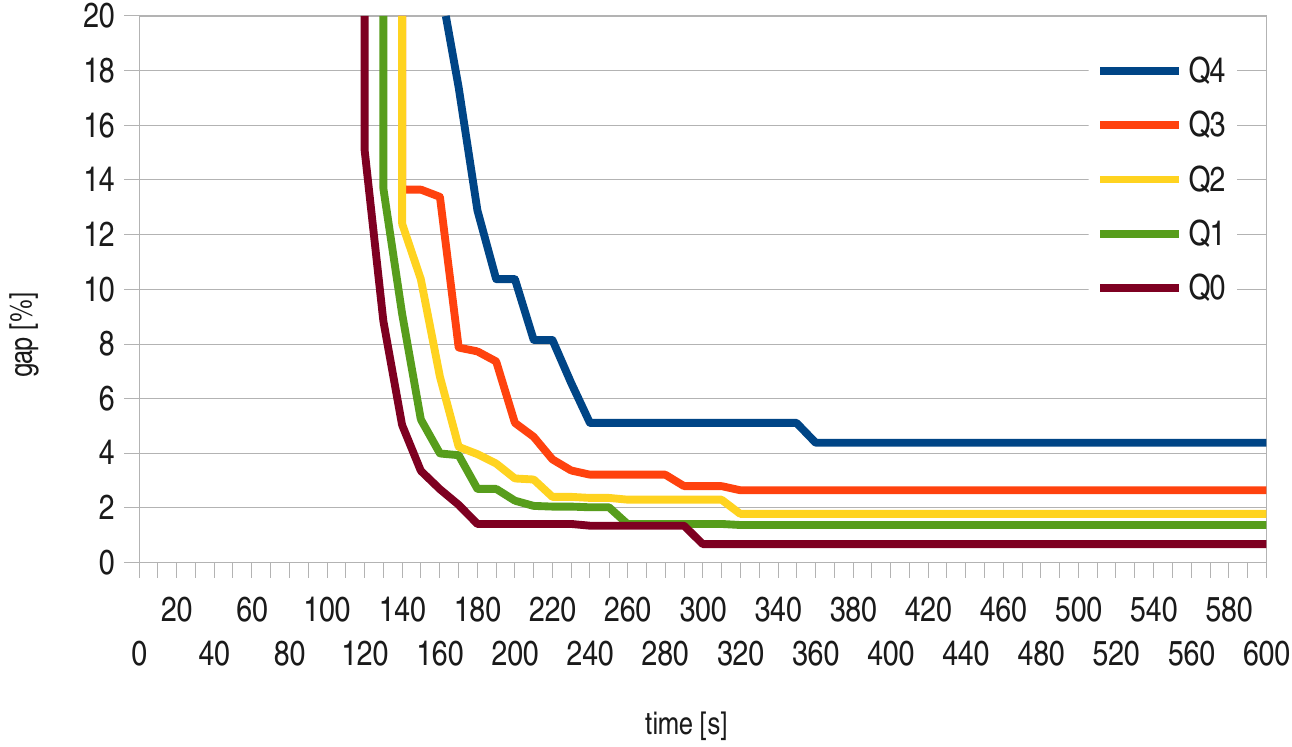}
		\label{fig:ipoc-or1000}
	}
	\subfigure[\acs{SC}3]{
		\includegraphics[clip,width=.45\linewidth]%
				{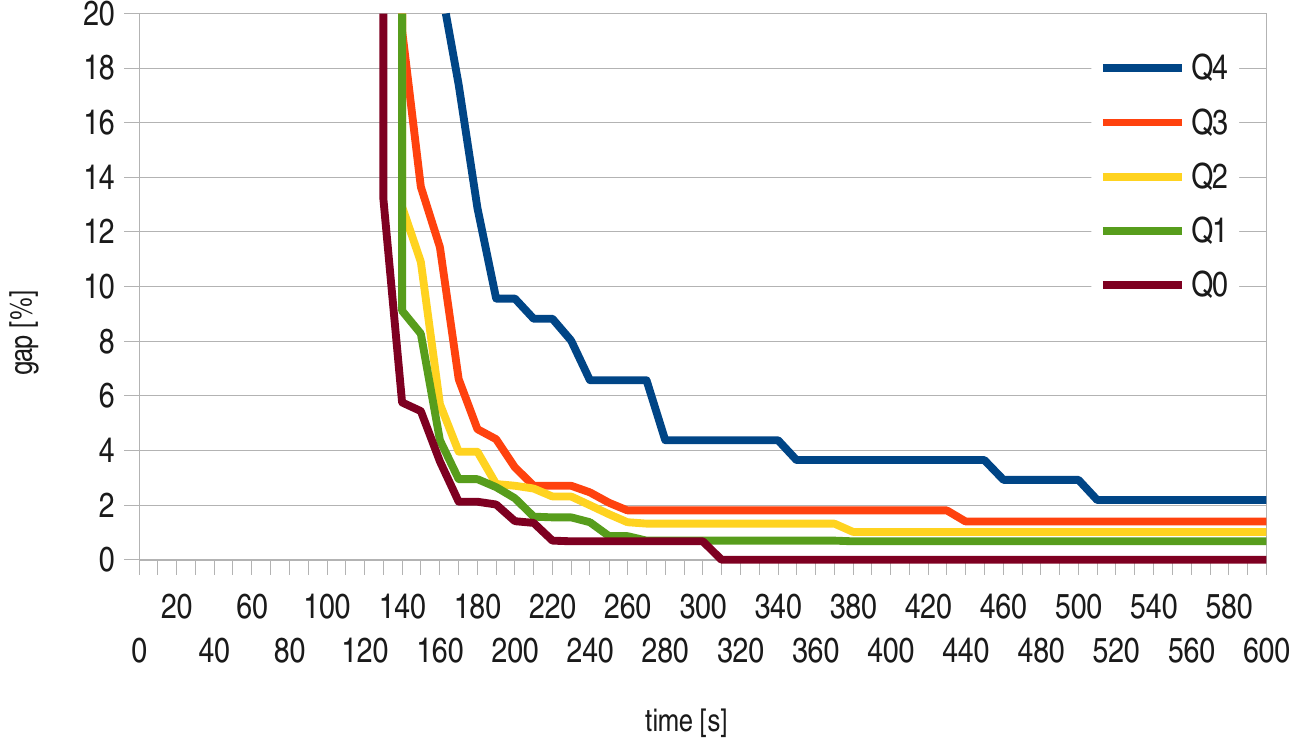}
		\label{fig:ipsc3-or1000}
	}
	\subfigure[\acs{SC}3 and \acs{EC}]{
		\includegraphics[clip,width=.45\linewidth]%
				{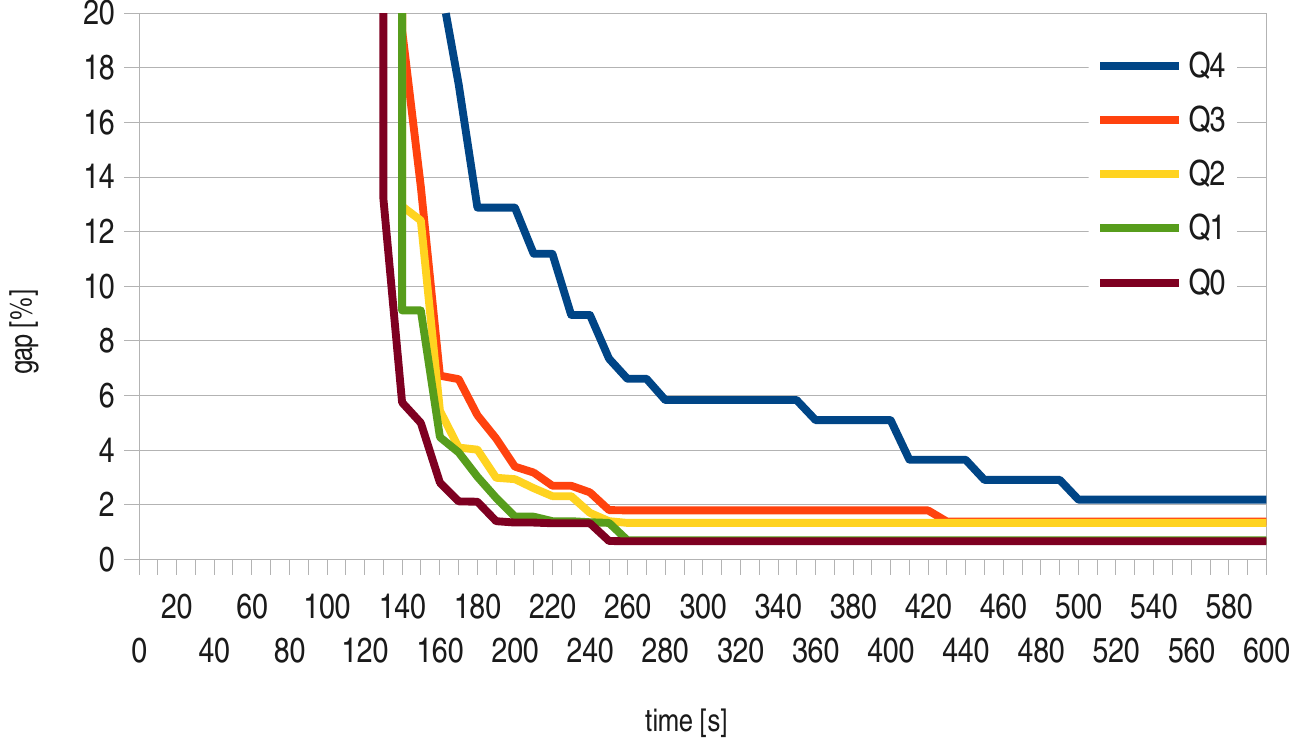}
		\label{fig:ipsc3oc-or1000}
	}
	\subfigure[\acs{SC}4]{
		\includegraphics[clip,width=.45\linewidth]%
				{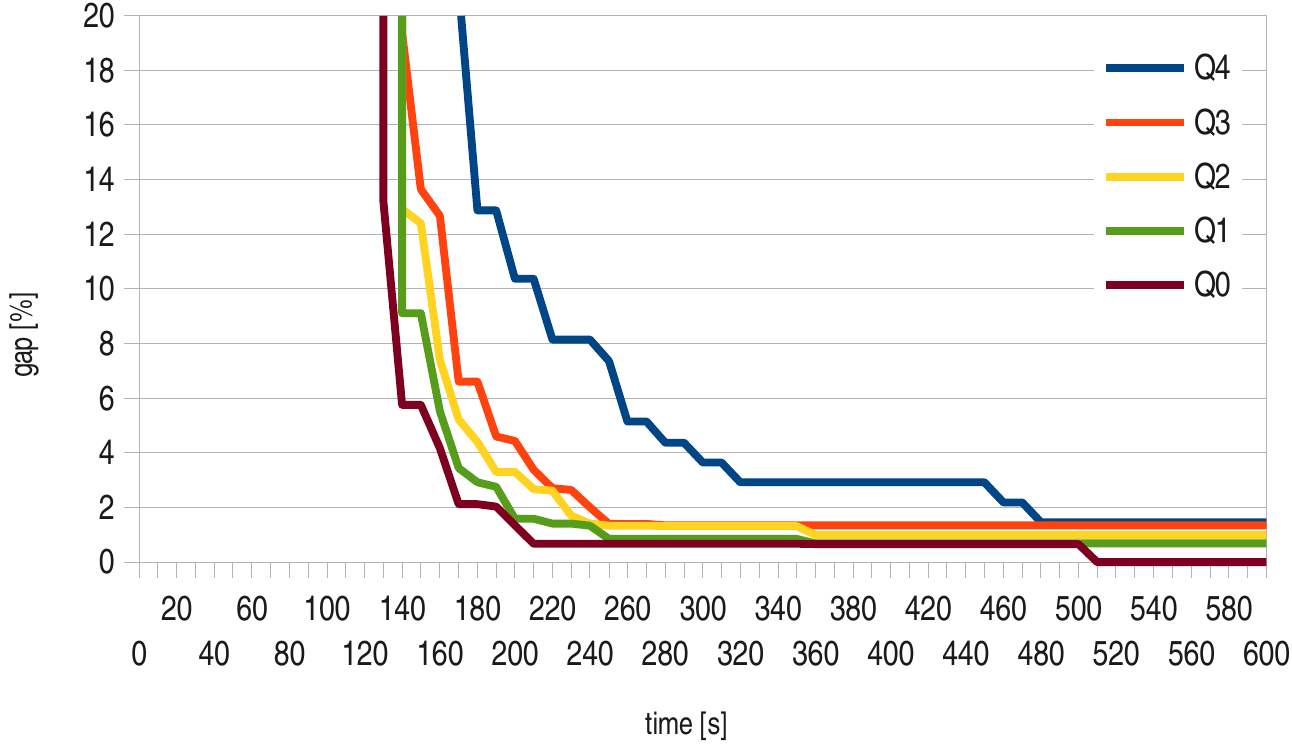}
		\label{fig:ipsc4-or1000}
	}
	\caption[Relative Gap over Time in \acs{IP}-Mode: Orthogonal]{%
	Relative gap over time in \acs{IP}-mode for the 1000-vertex
	\emph{Orthogonal}-type polygons.
	}
	\label{fig:ip-or1000-all}
\end{figure}

\begin{figure}
	\subfigure[No Cuts]{
		\includegraphics[clip,width=.45\linewidth]%
				{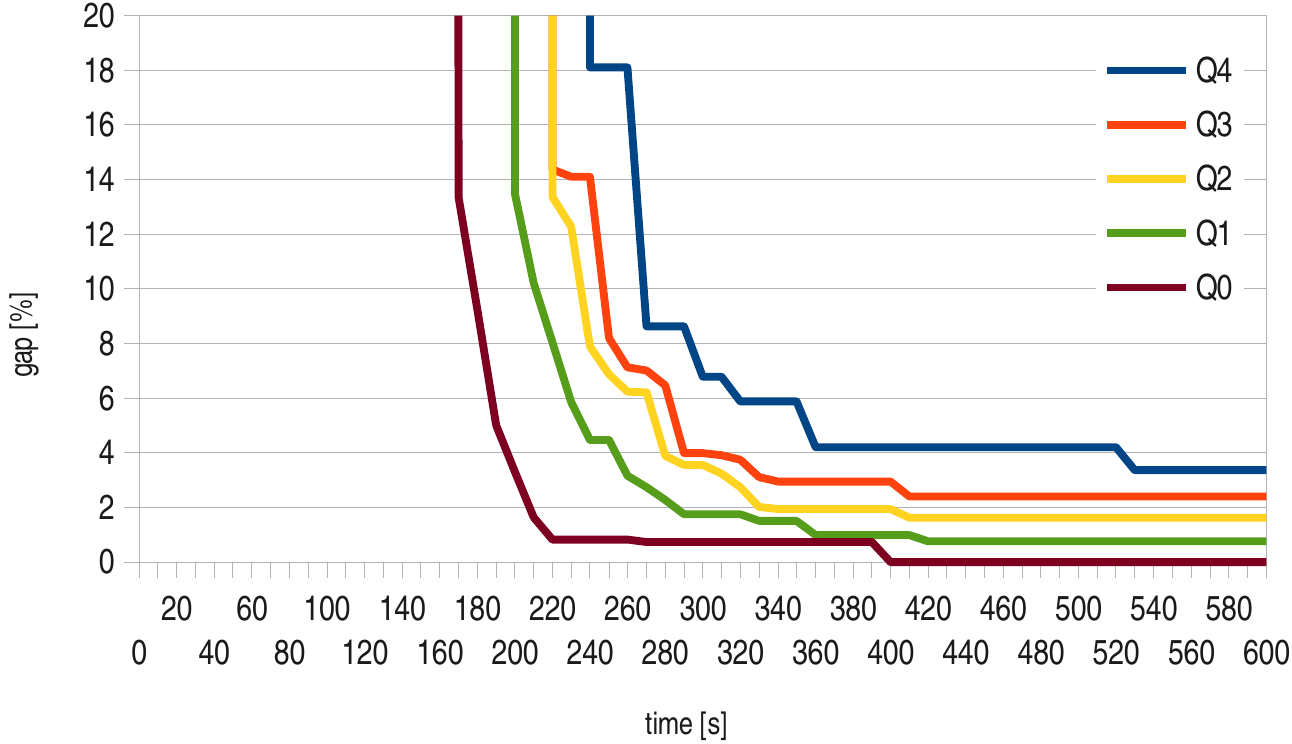}
		\label{fig:ip-si1000}
	}
	\subfigure[\acs{EC}]{
		\includegraphics[clip,width=.45\linewidth]%
				{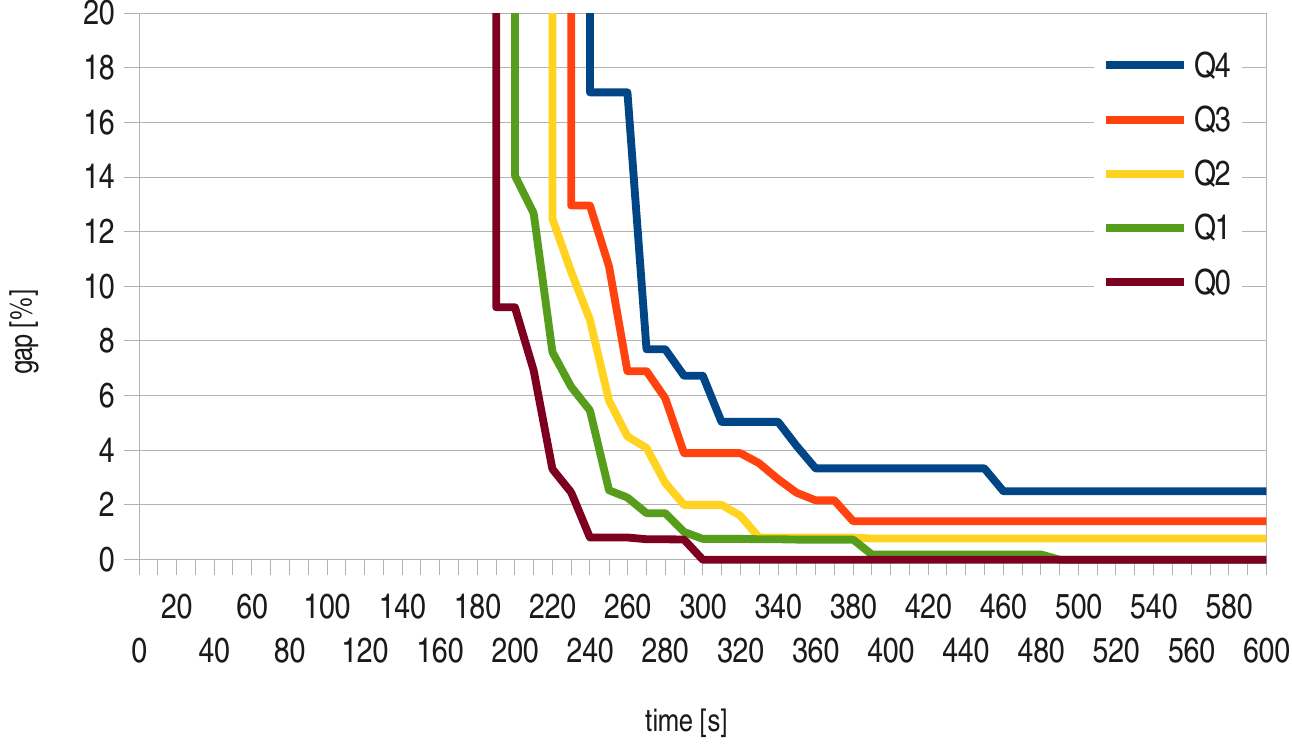}
		\label{fig:ipoc-si1000}
	}
	\subfigure[\acs{SC}3]{
		\includegraphics[clip,width=.45\linewidth]%
				{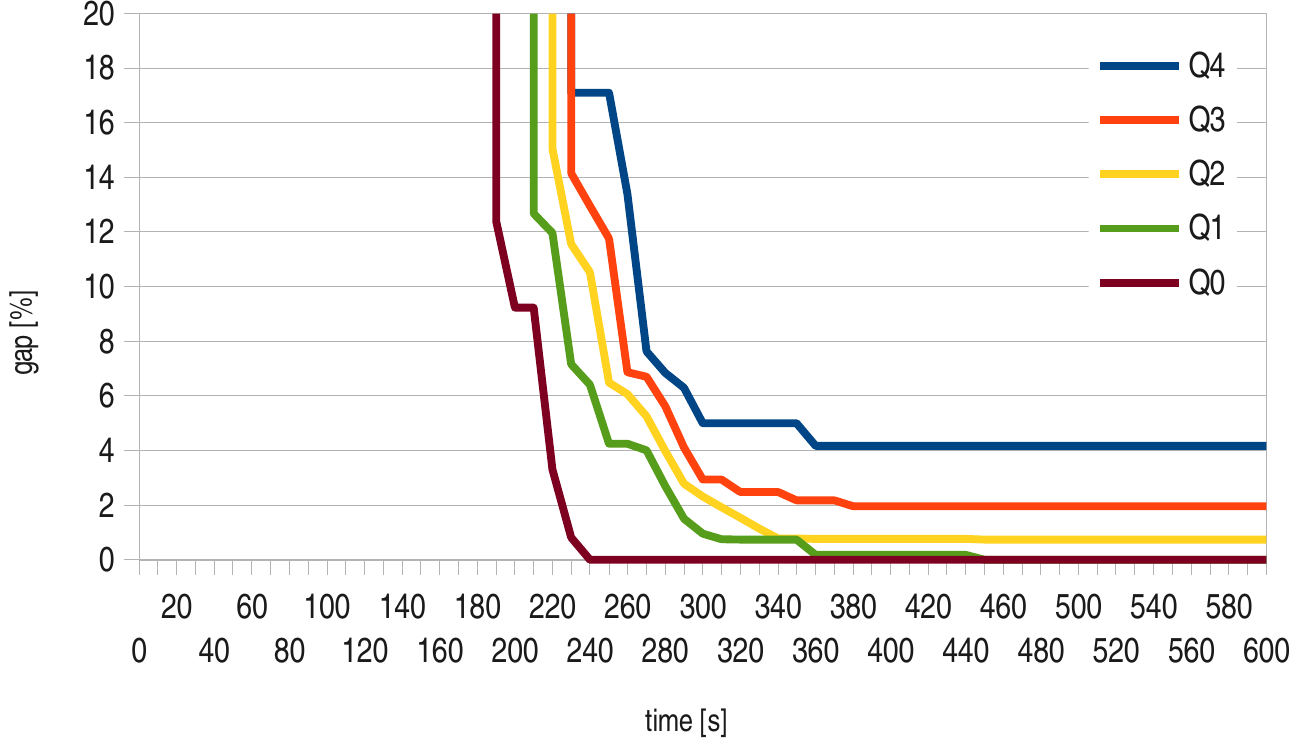}
		\label{fig:ipsc3-si1000}
	}
	\subfigure[\acs{SC}3 and \acs{EC}]{
		\includegraphics[clip,width=.45\linewidth]%
				{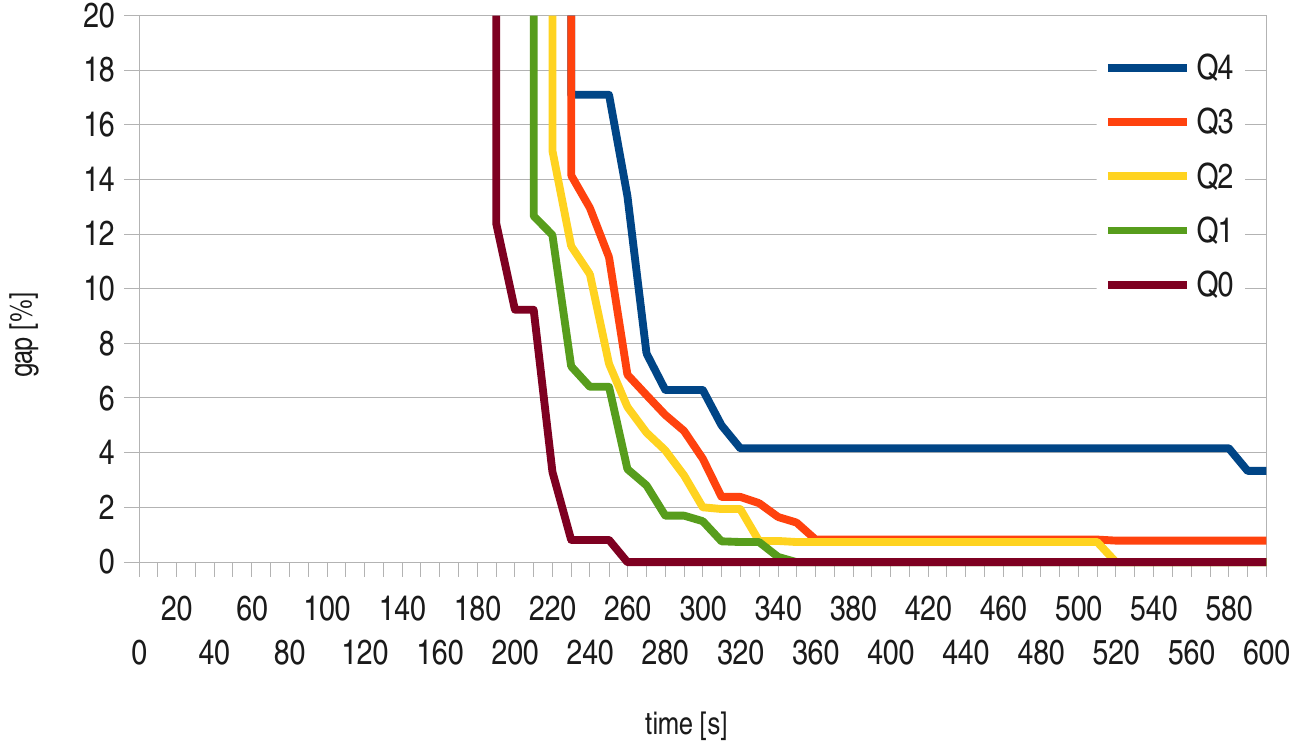}
		\label{fig:ipsc3oc-si1000}
	}
	\subfigure[\acs{SC}4]{
		\includegraphics[clip,width=.45\linewidth]%
				{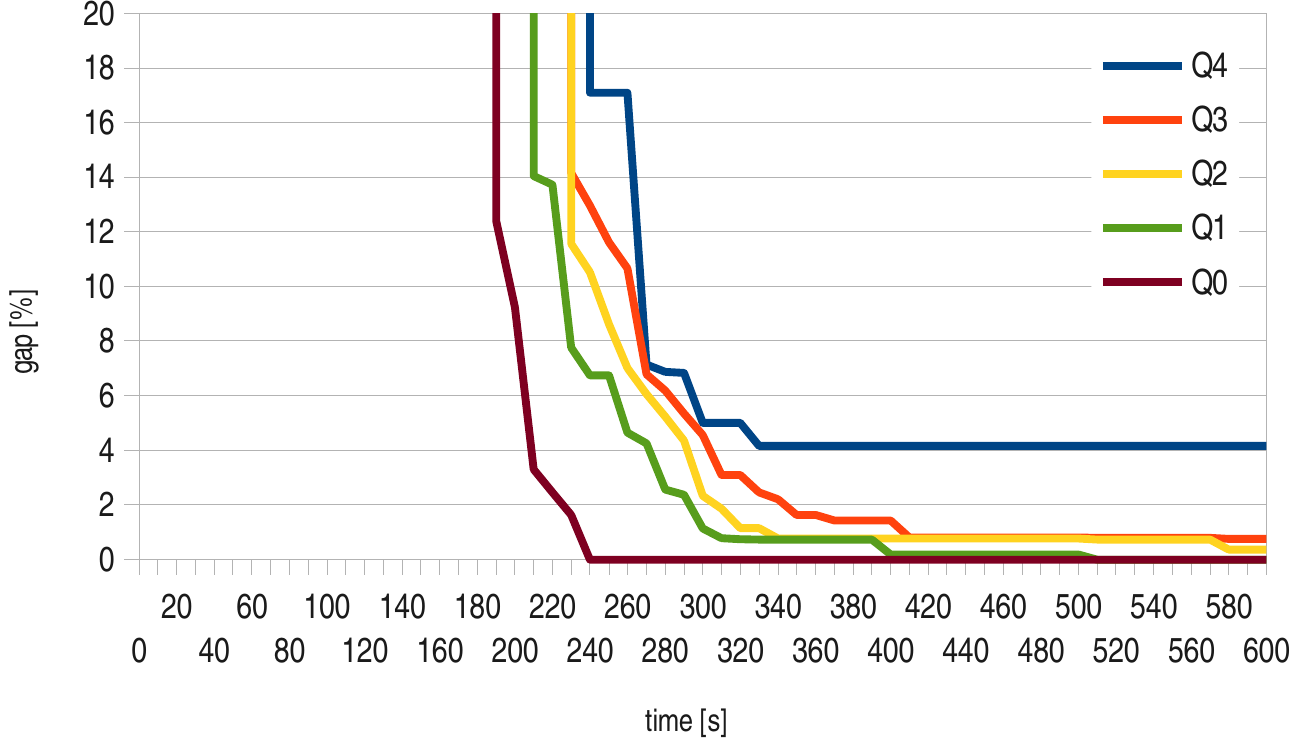}
		\label{fig:ipsc4-si1000}
	}
	\caption[Relative Gap over Time in \acs{IP}-Mode: Simple]{%
	Relative gap over time in \acs{IP}-mode for the 1000-vertex
	\emph{Simple}-type polygons.
	}
	\label{fig:ip-si1000-all}
\end{figure}

\begin{figure}
	\subfigure[No Cuts]{
		\includegraphics[clip,width=.45\linewidth]%
				{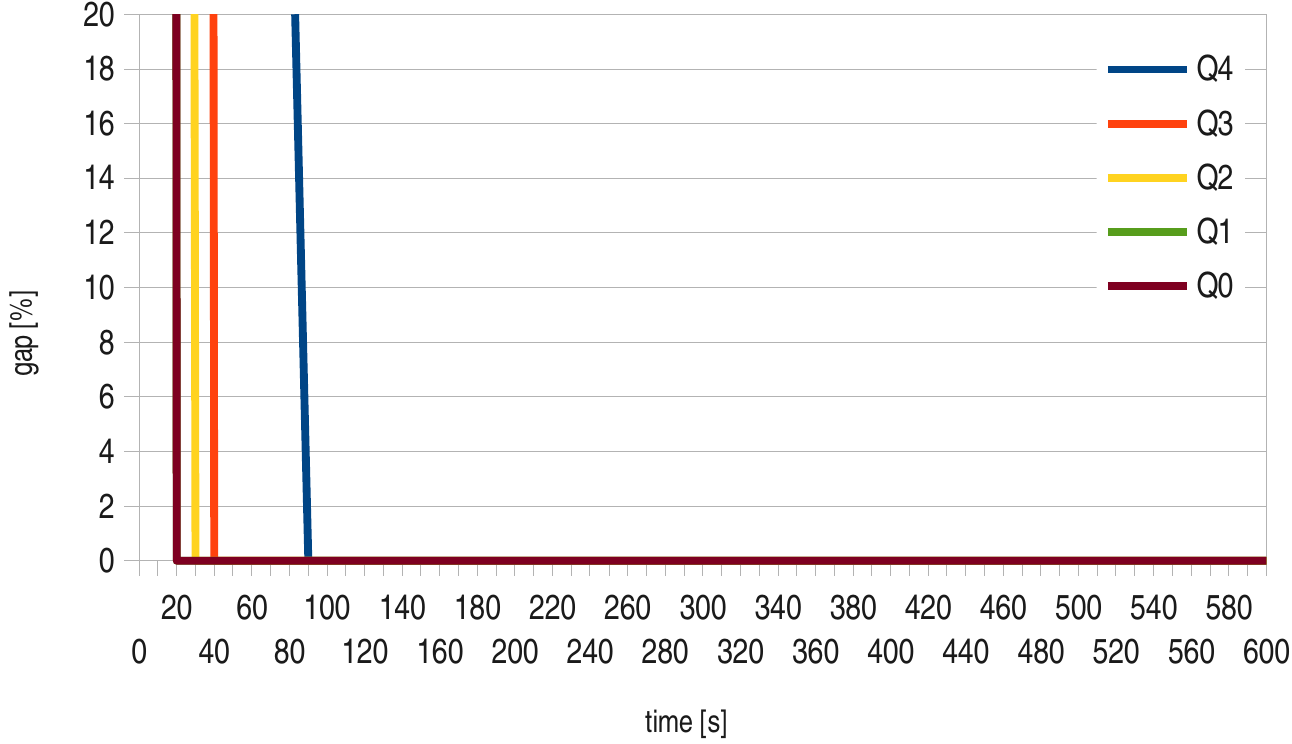}
		\label{fig:ip-sp200}
	}
	\subfigure[\acs{EC}]{
		\includegraphics[clip,width=.45\linewidth]%
				{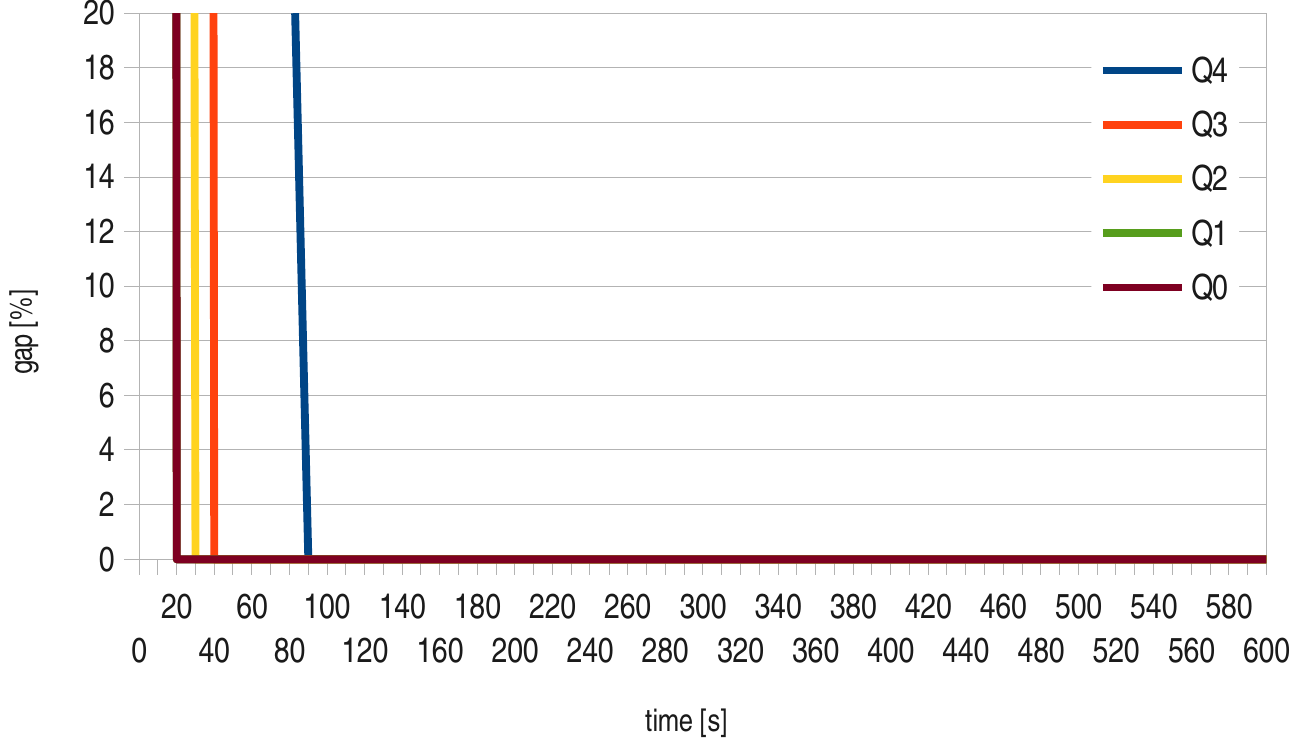}
		\label{fig:ipoc-sp200}
	}
	\subfigure[\acs{SC}3]{
		\includegraphics[clip,width=.45\linewidth]%
				{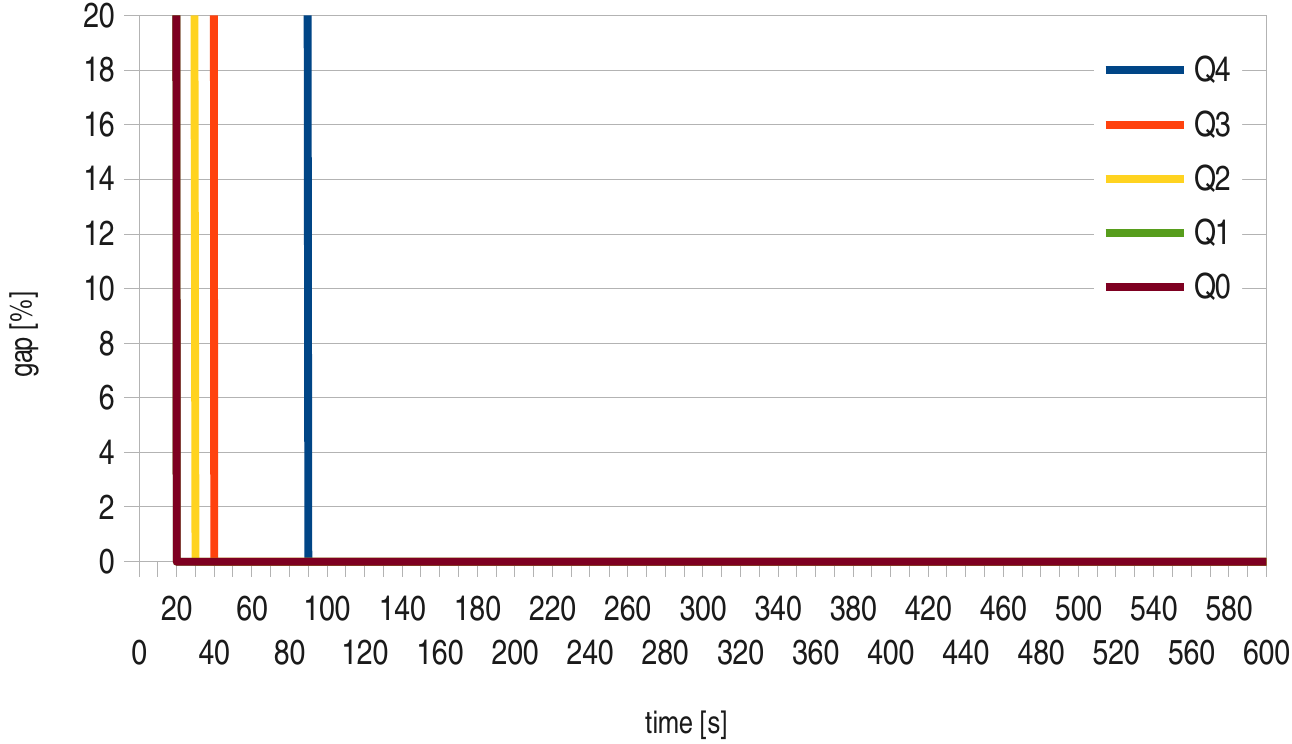}
		\label{fig:ipsc3-sp200}
	}
	\subfigure[\acs{SC}3 and \acs{EC}]{
		\includegraphics[clip,width=.45\linewidth]%
				{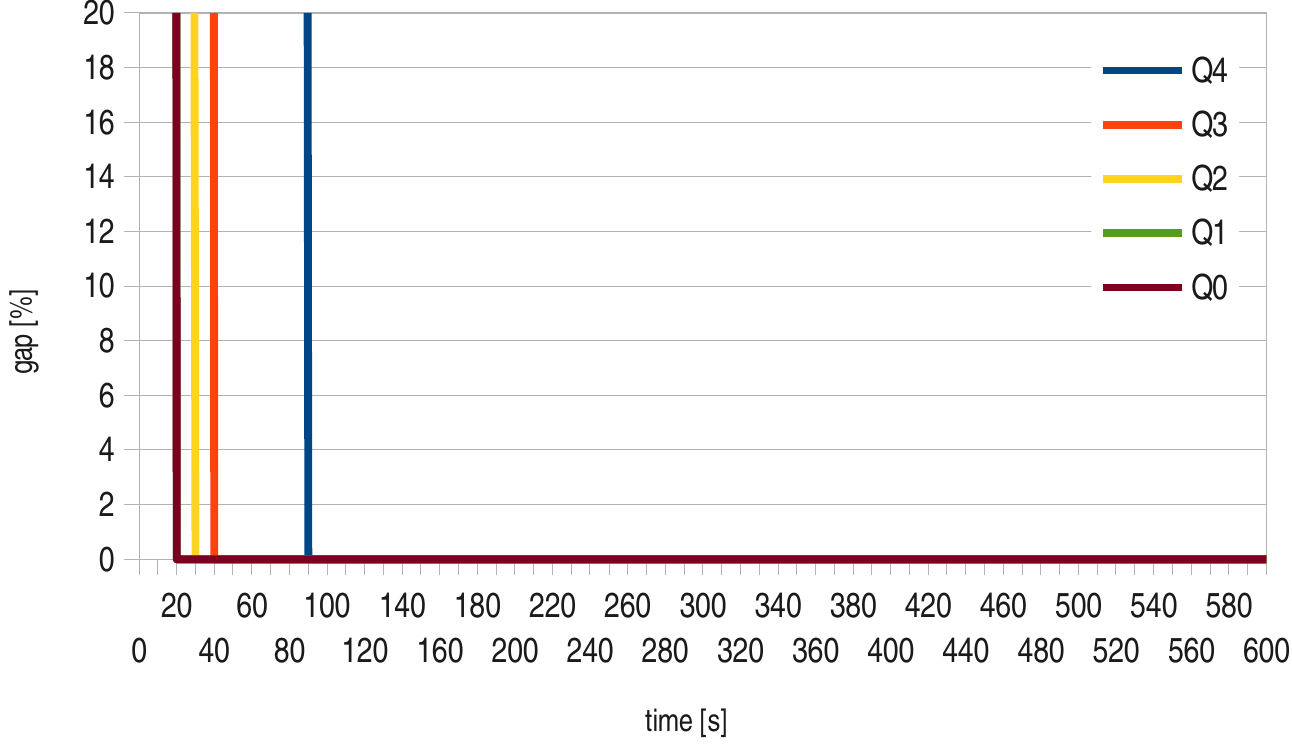}
		\label{fig:ipsc3oc-sp200}
	}
	\subfigure[\acs{SC}4]{
		\includegraphics[clip,width=.45\linewidth]%
				{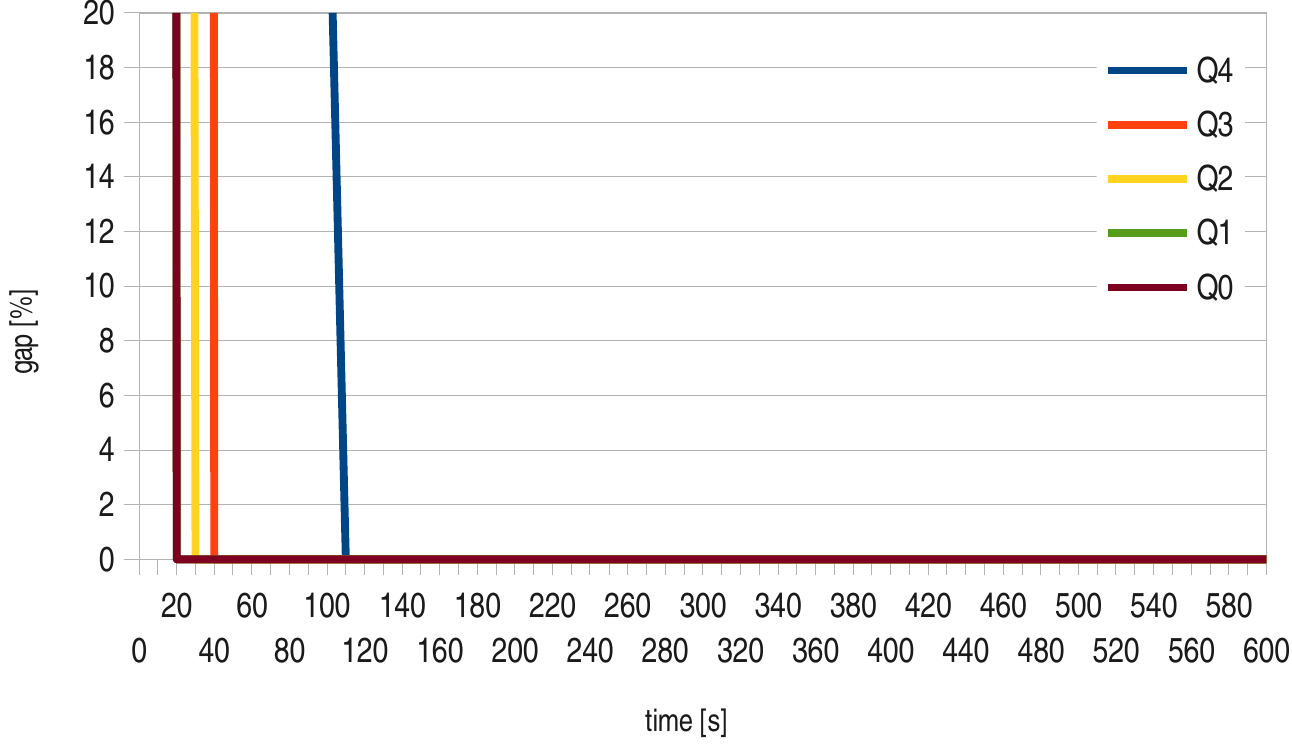}
		\label{fig:ipsc4-sp200}
	}
	\caption[Relative Gap over Time in \acs{IP}-Mode: Spike]{%
	Relative gap over time in \acs{IP}-mode for the 200-vertex
	\emph{Spike}-type polygons.
	}
	\label{fig:ip-sp200-all}
\end{figure}

The remaining test cases, \ie, the remaining polygon classes, confirm our interpretation.
We briefly summarize only the deviating observations.

For the \emph{Orthogonal}-type polygons with 1000 vertices in
Figure~\ref{fig:ip-or1000-all}, \acs{EC} and \acs{SC}3 separation yield an improvement
over using no separation:
The maximum relative gap drops and some gaps reach their \SI{600}{s} levels
earlier.
Joint application of \acs{SC}3 and \acs{EC} provides the best results.
\acs{SC}4 and \acs{SC}3 separation only differ in the extreme cases
of the minimum and the maximum relative gap.

The 1000-vertex \emph{Simple} polygons, see Figure~\ref{fig:ip-si1000-all},
allow a slight improvement of the relative gap with \acs{EC} as well as
\acs{SC}3 separation; joint application yields the best results.
A difference to the other experiments is that the \acs{SC}4 separator
performs slightly better than the \acs{SC}3 separator -- an isolated
observation.

Our separators have no measurable impact on the \emph{Spike}-type polygons,
see Figure~\ref{fig:ip-sp200-all}.
Larger instances of this type of polygon take much time when solving the
first couple of \acp{IP}, which means our separators are triggered late in
the \SI{600}{s} time limit -- or not at all.

\subsection{LP Mode}
\label{sec:lp}

Analogously to the IP mode, we test the separators in LP mode, \ie, Algorithm~\ref{alg:lp}.
The difference to Algorithm~\ref{alg:ip} is that in the primal phase, it solves the LP $\agr(G,W,A)$ instead of the IP.
If a solution of $\agr(G,W,A)$ is feasible for $\agr(G,P,A)$
and if it happens to be binary, it is an upper bound,
otherwise it is discarded and the primal phase is continued.

The challenge of the LP mode is to find a binary solution at all,
because the algorithm might stick to fractional optimal solutions
that are not handled by any cut separator.
Instances unsolved because of this are considered to have an infinite
gap; they result in diagrams in which only the lower quartiles are
visible.

\begin{figure}
	\subfigure[No Cuts]{
		\includegraphics[clip,width=.45\linewidth]%
				{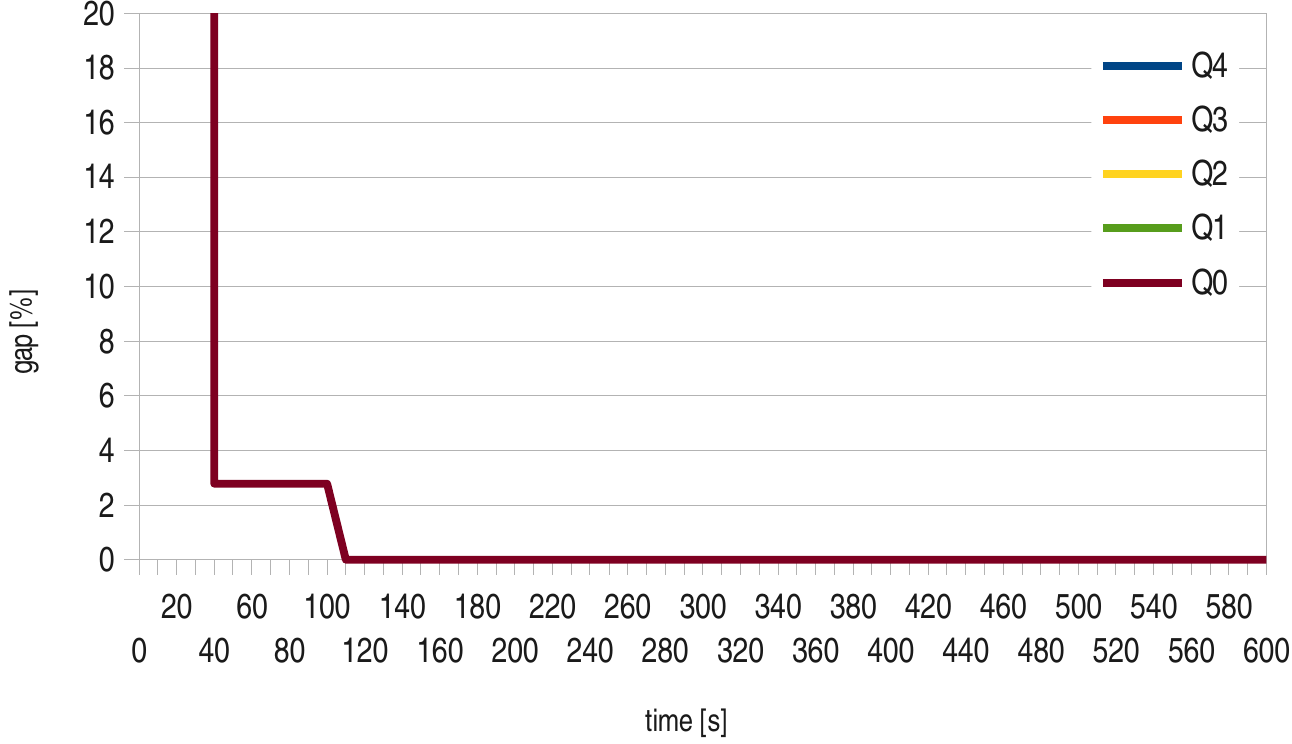}
		\label{fig:lp-ko500}
	}
	\subfigure[\acs{EC}]{
		\includegraphics[clip,width=.45\linewidth]%
				{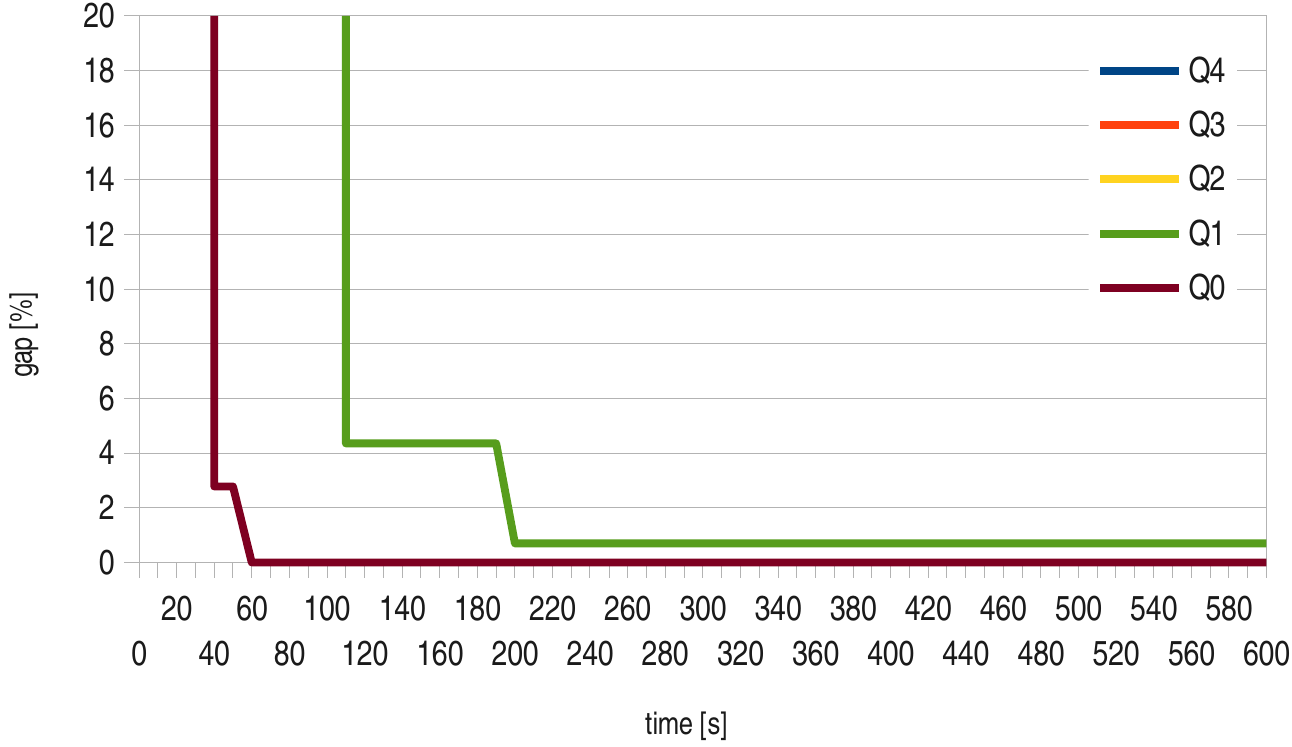}
		\label{fig:lpoc-ko500}
	}
	\subfigure[\acs{SC}3]{
		\includegraphics[clip,width=.45\linewidth]%
				{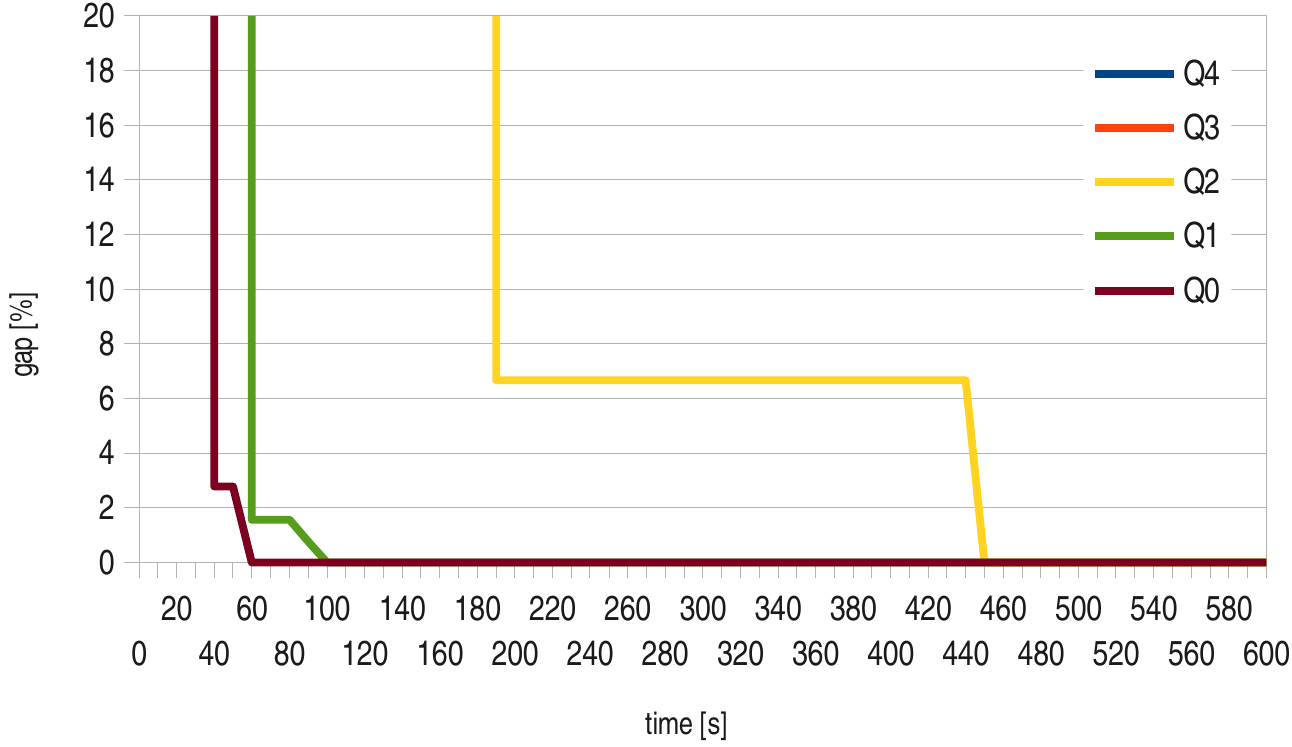}
		\label{fig:lpsc3-ko500}
	}
	\subfigure[\acs{SC}3 and \acs{EC}]{
		\includegraphics[clip,width=.45\linewidth]%
				{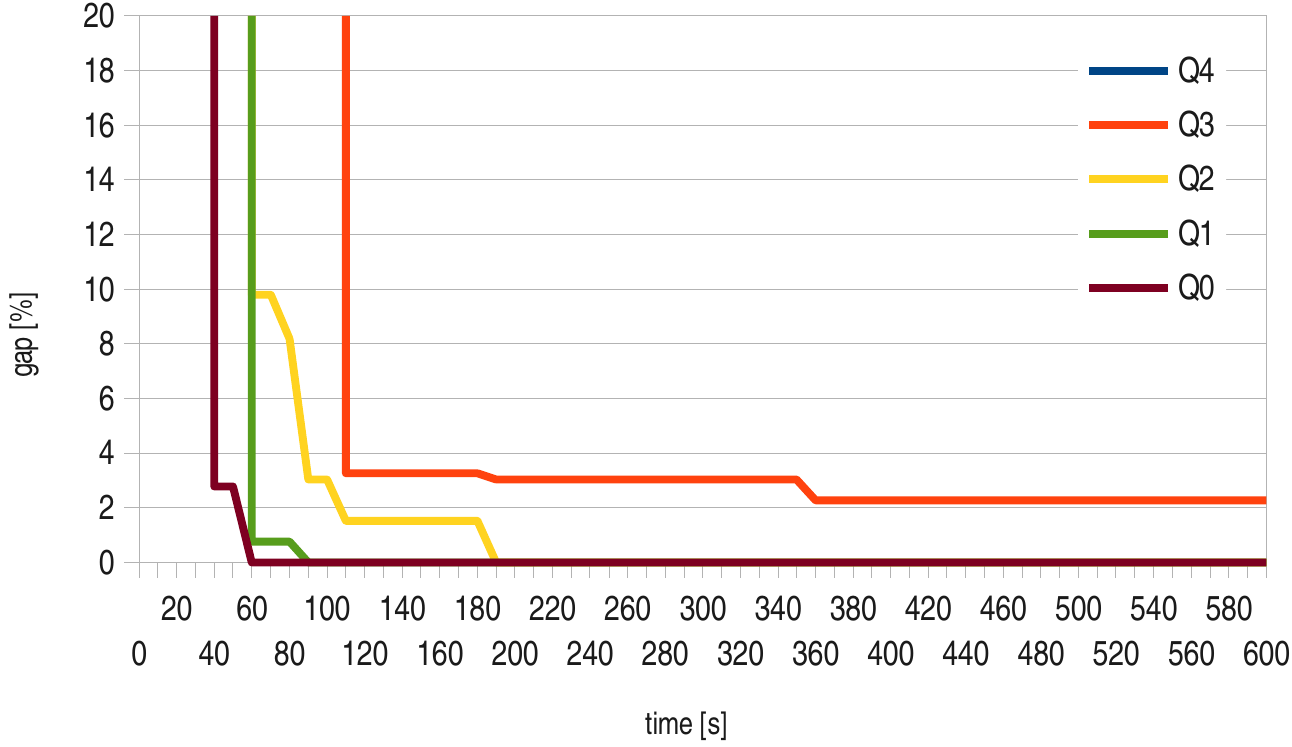}
		\label{fig:lpsc3oc-ko500}
	}
	\subfigure[\acs{SC}4]{
		\includegraphics[clip,width=.45\linewidth]%
				{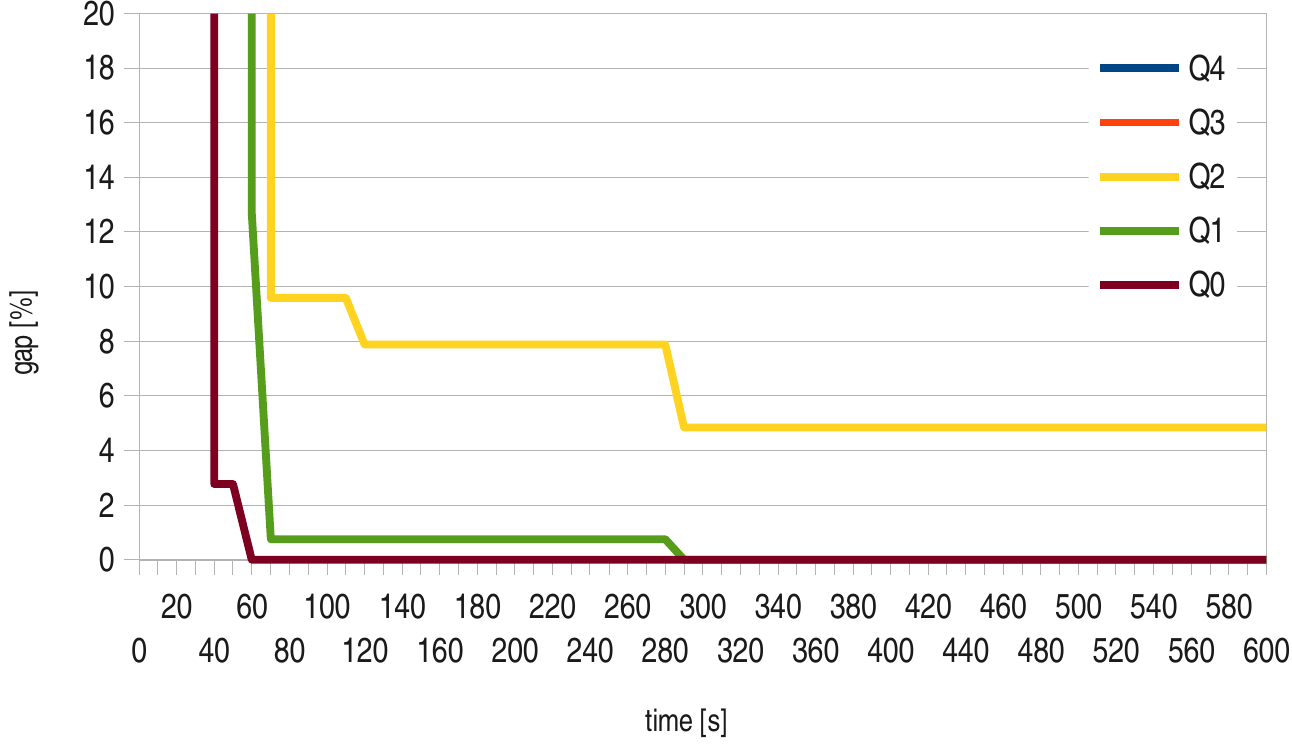}
		\label{fig:lpsc4-ko500}
	}
	\caption[Relative Gap over Time in \acs{LP}-Mode: von Koch]{%
	Relative gap over time in \acs{LP}-mode for the 500-vertex
	\emph{von Koch}-type polygons.
	}
	\label{fig:lp-ko500-all}
\end{figure}

\begin{figure}
	\subfigure[No Cuts]{
		\includegraphics[clip,width=.45\linewidth]%
				{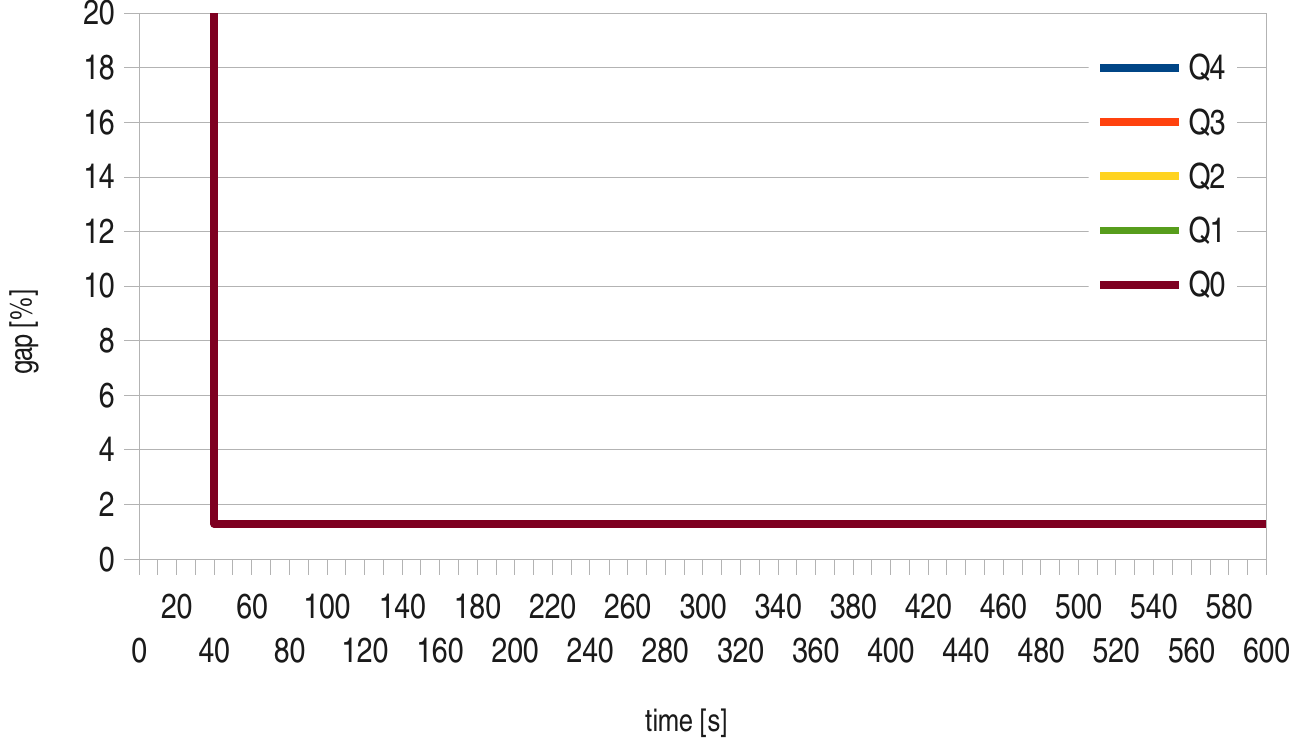}
		\label{fig:lp-or500}
	}
	\subfigure[\acs{EC}]{
		\includegraphics[clip,width=.45\linewidth]%
				{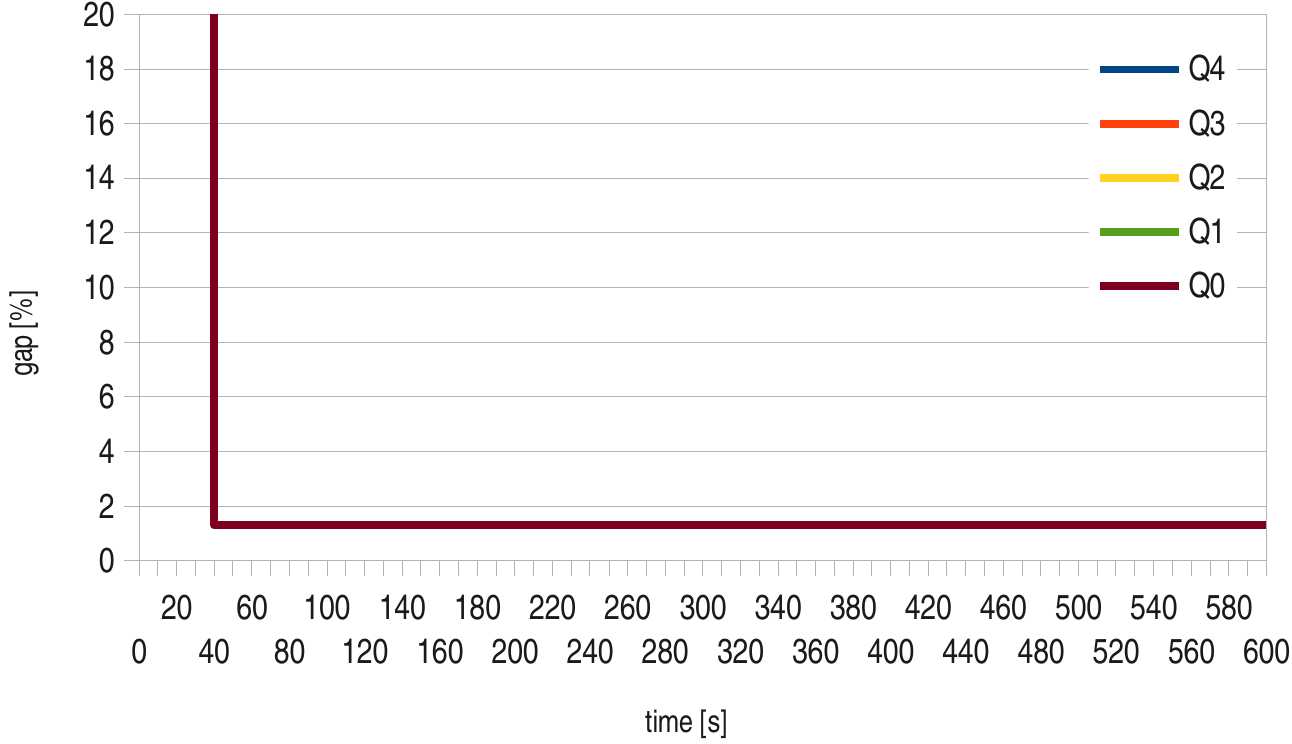}
		\label{fig:lpoc-or500}
	}
	\subfigure[\acs{SC}3]{
		\includegraphics[clip,width=.45\linewidth]%
				{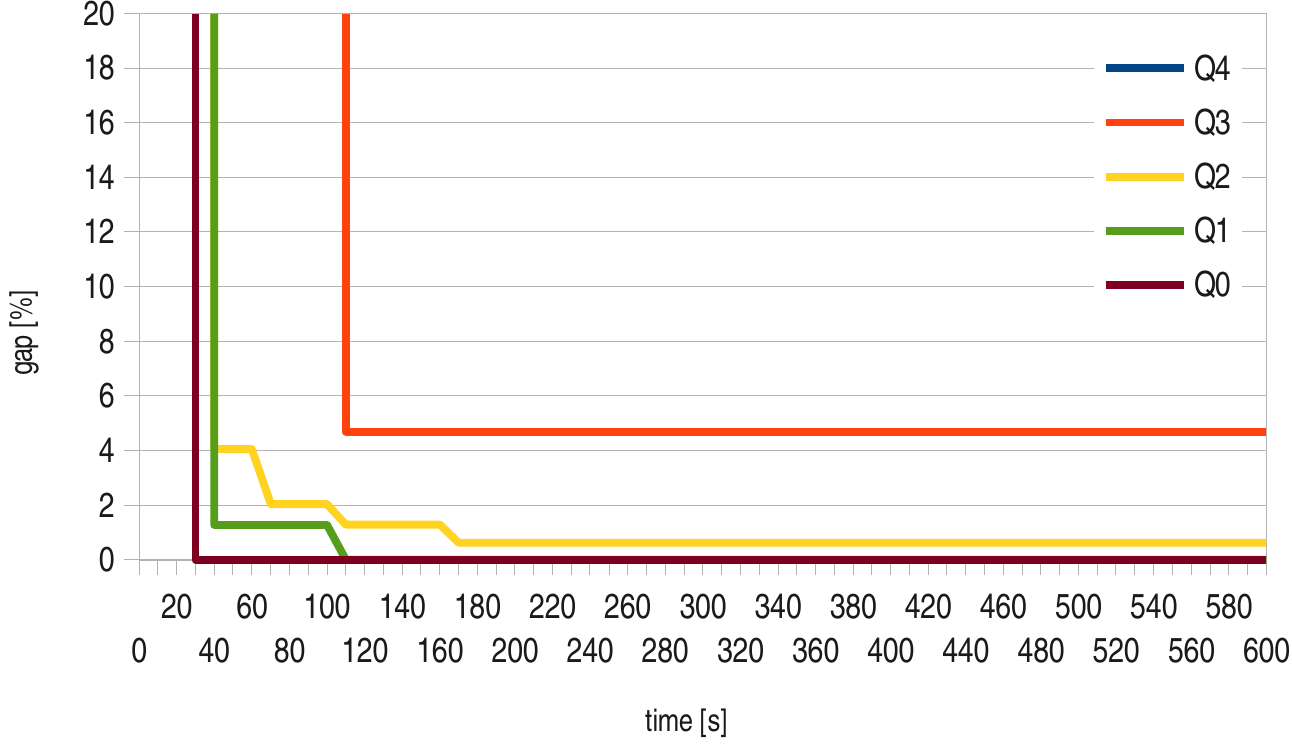}
		\label{fig:lpsc3-or500}
	}
	\subfigure[\acs{SC}3 and \acs{EC}]{
		\includegraphics[clip,width=.45\linewidth]%
				{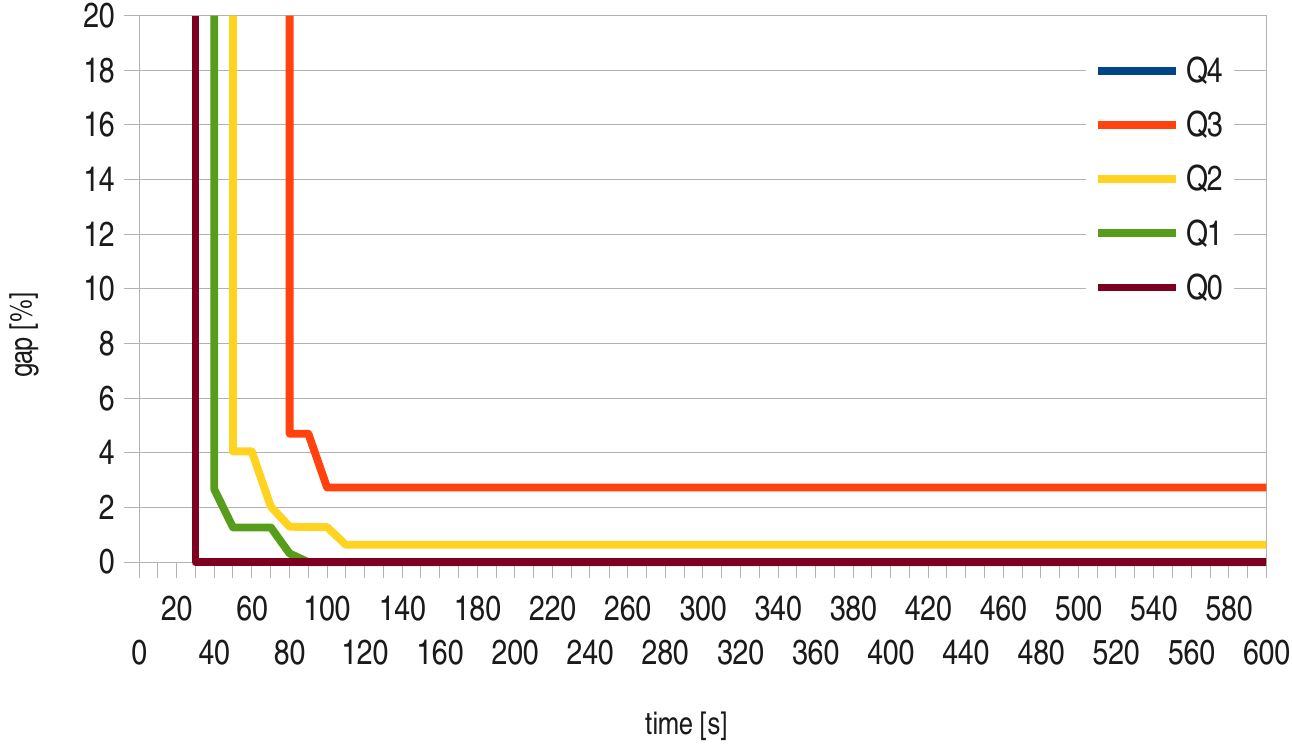}
		\label{fig:lpsc3oc-or500}
	}
	\subfigure[\acs{SC}4]{
		\includegraphics[clip,width=.45\linewidth]%
				{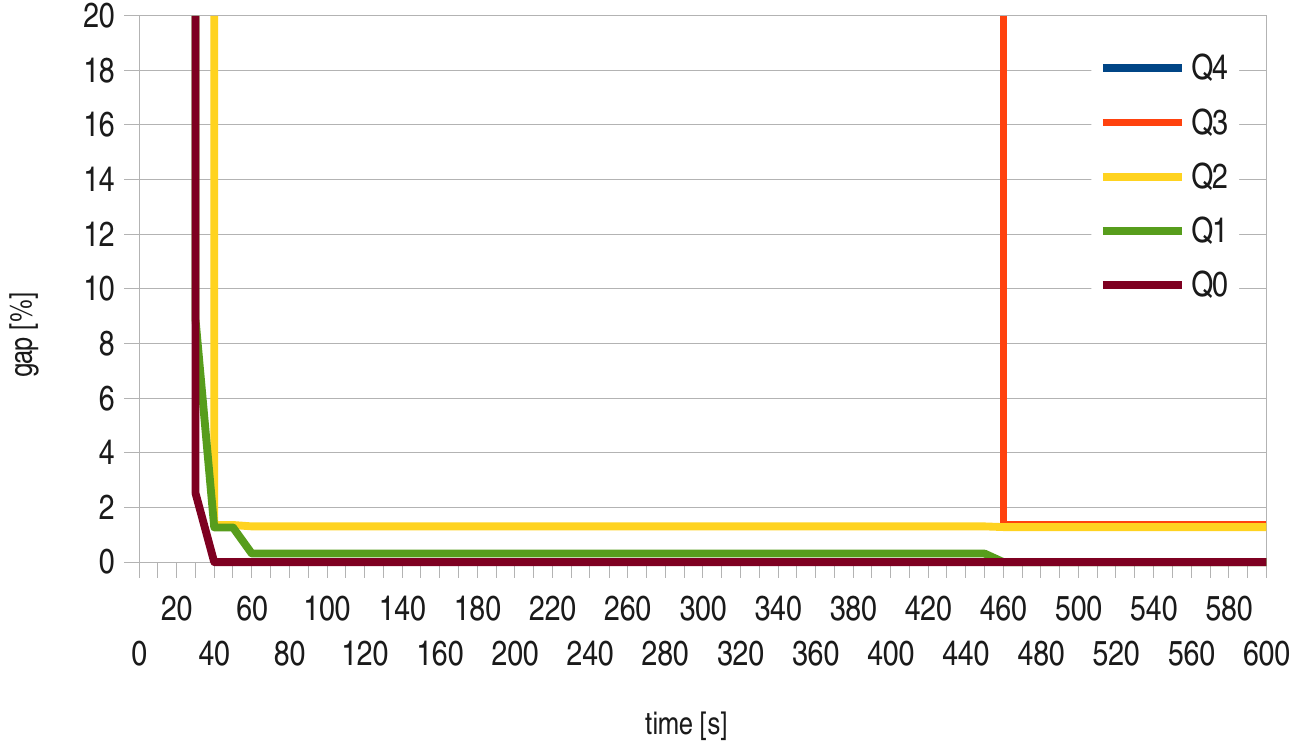}
		\label{fig:lpsc4-or500}
	}
	\caption[Relative Gap over Time in \acs{LP}-Mode: Orthogonal]{%
	Relative gap over time in \acs{LP}-mode for the 500-vertex
	\emph{Orthogonal}-type polygons.
	}
	\label{fig:lp-or500-all}
\end{figure}

\begin{figure}
	\subfigure[No Cuts]{
		\includegraphics[clip,width=.45\linewidth]%
				{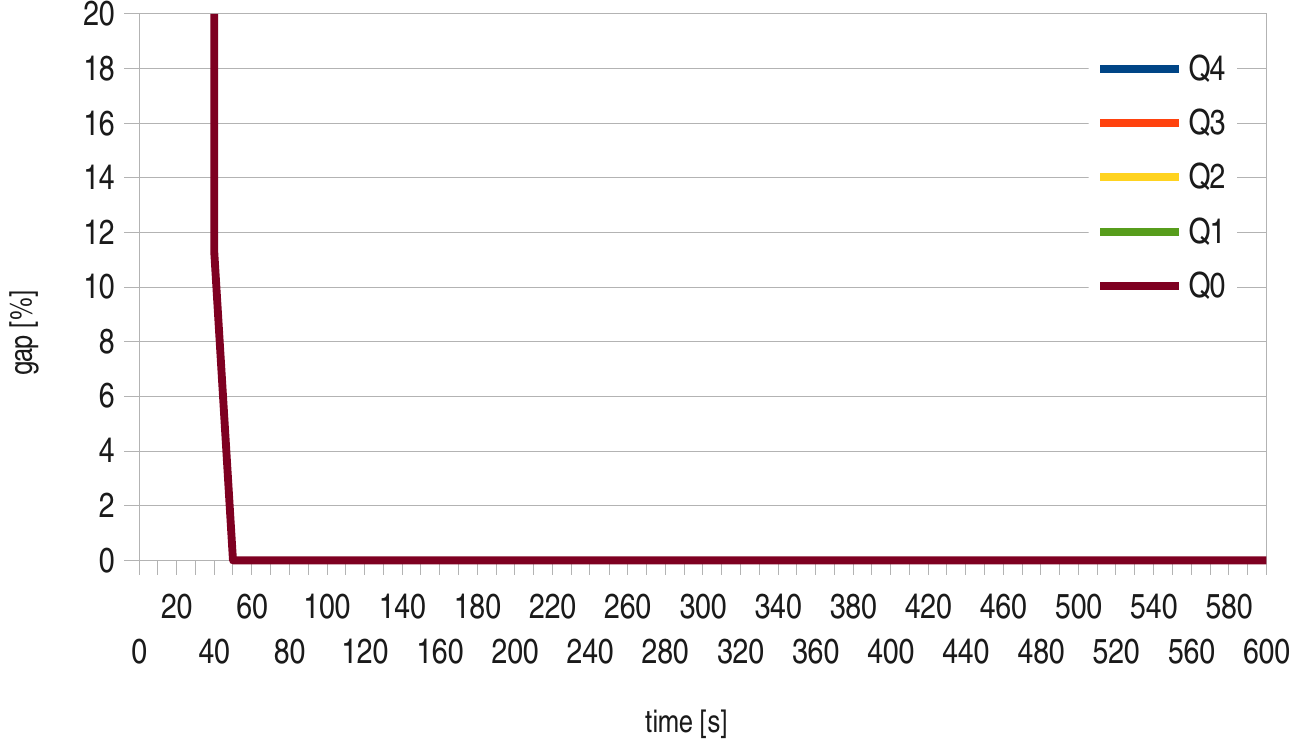}
		\label{fig:lp-si500}
	}
	\subfigure[\acs{EC}]{
		\includegraphics[clip,width=.45\linewidth]%
				{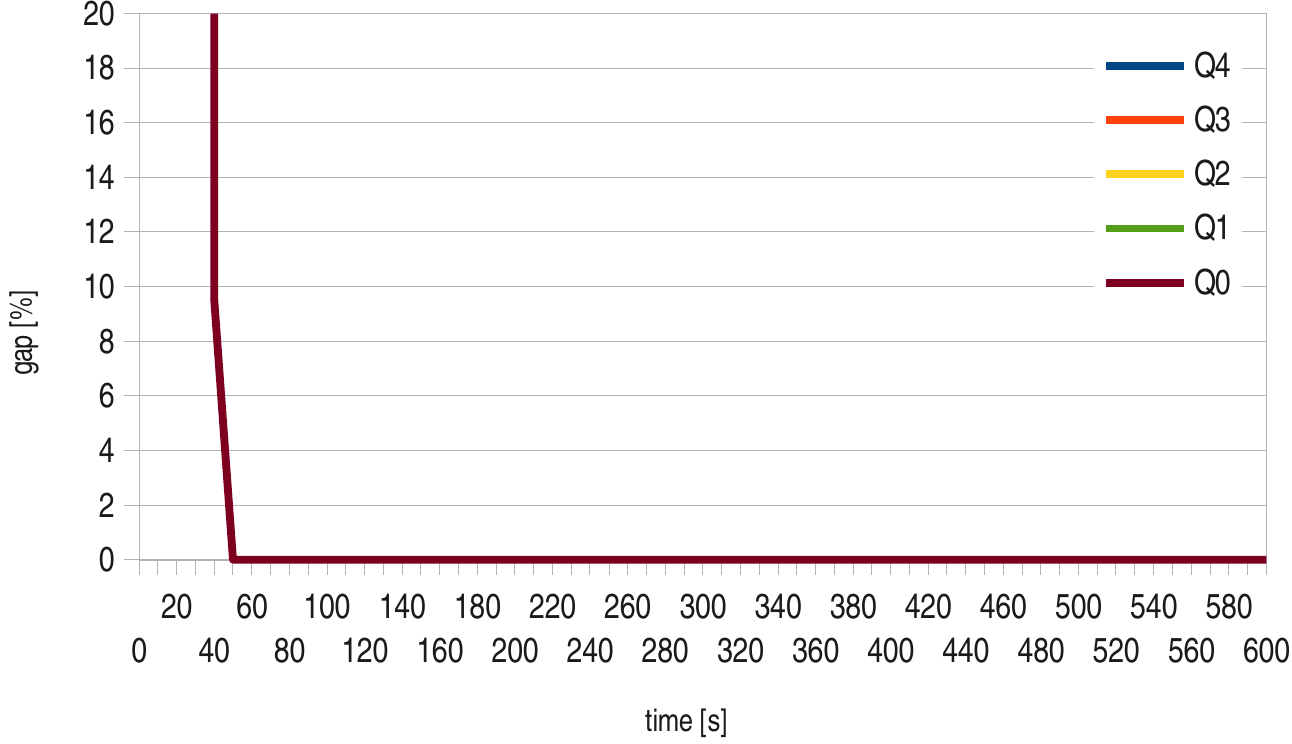}
		\label{fig:lpoc-si500}
	}
	\subfigure[\acs{SC}3]{
		\includegraphics[clip,width=.45\linewidth]%
				{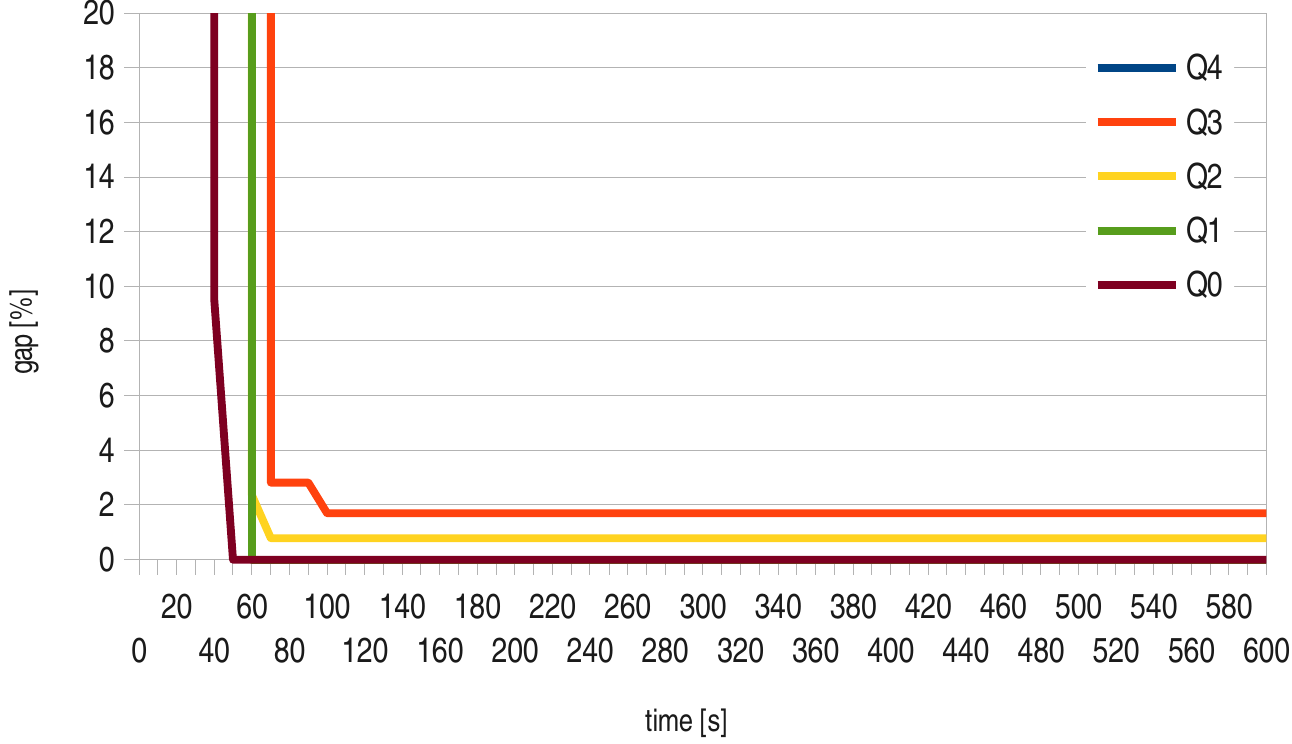}
		\label{fig:lpsc3-si500}
	}
	\subfigure[\acs{SC}3 and \acs{EC}]{
		\includegraphics[clip,width=.45\linewidth]%
				{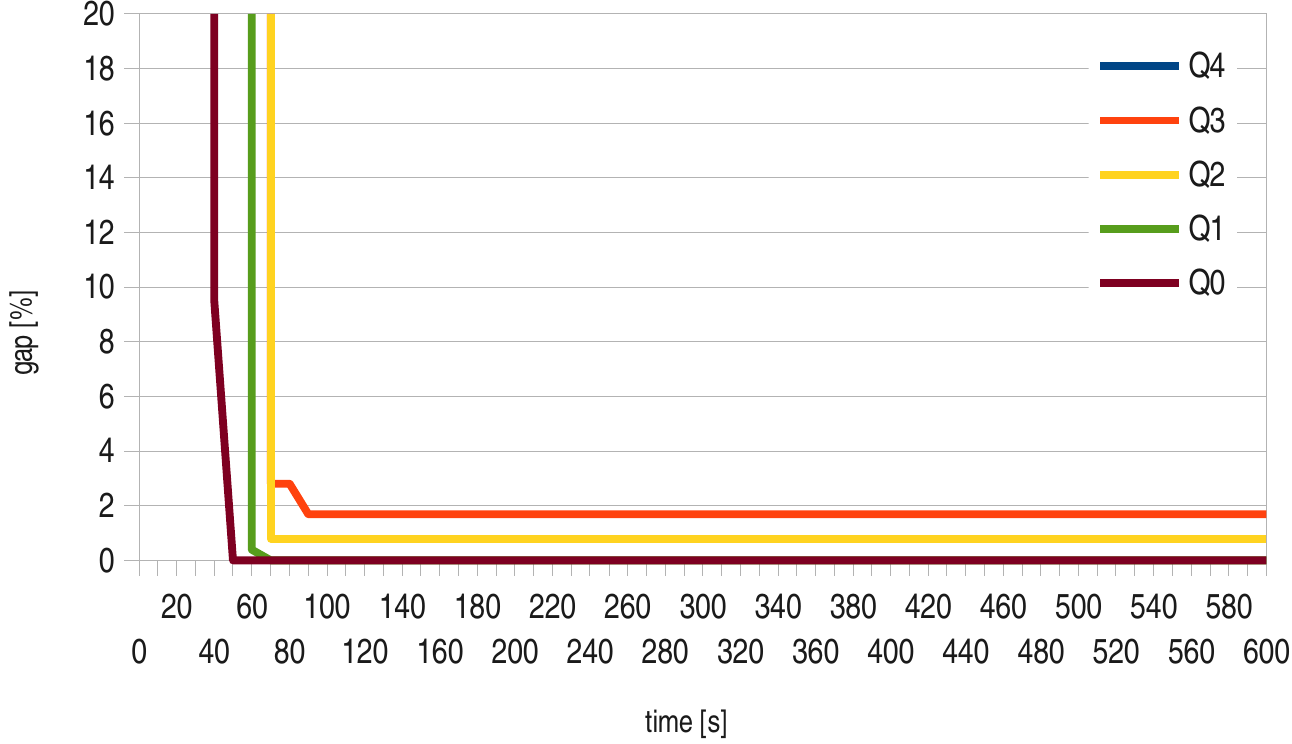}
		\label{fig:lpsc3oc-si500}
	}
	\subfigure[\acs{SC}4]{
		\includegraphics[clip,width=.45\linewidth]%
				{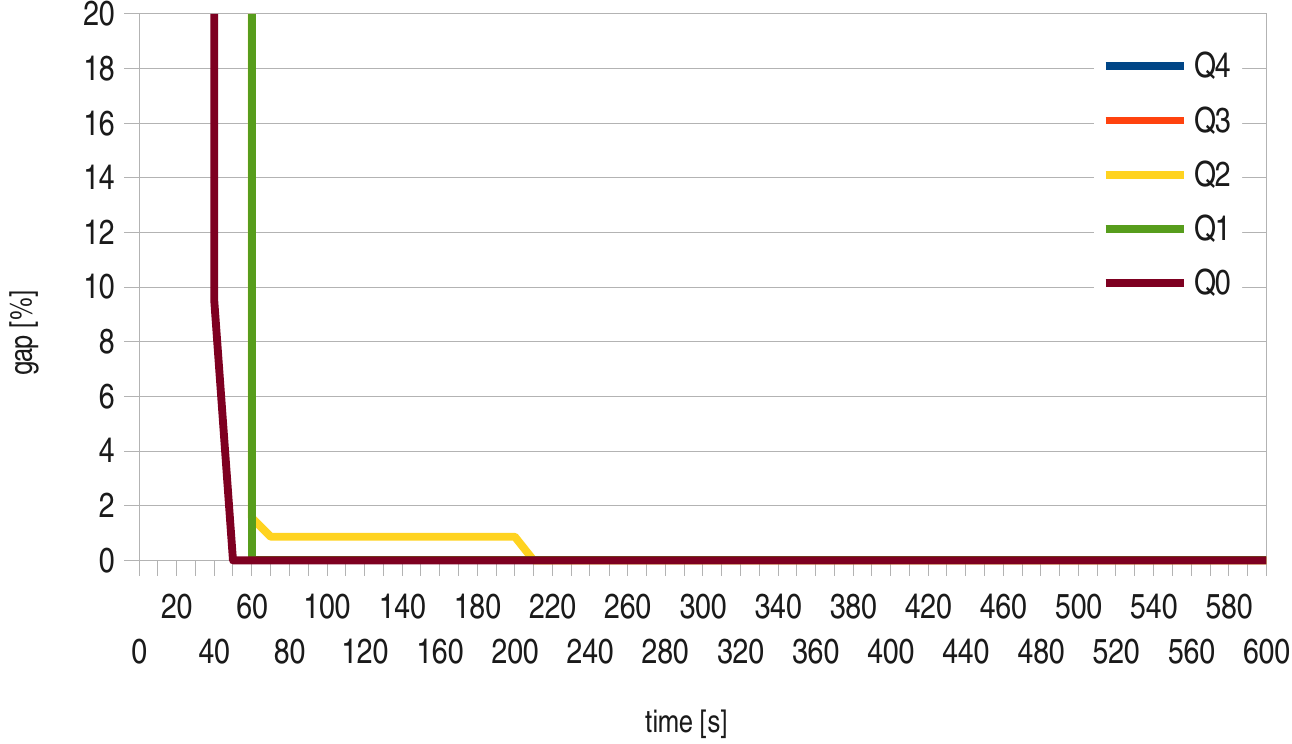}
		\label{fig:lpsc4-si500}
	}
	\caption[Relative Gap over Time in \acs{LP}-Mode: Simple]{%
	Relative gap over time in \acs{LP}-mode for the 500-vertex
	\emph{Simple}-type polygons.
	}
	\label{fig:lp-si500-all}
\end{figure}

\begin{figure}
	\subfigure[No Cuts]{
		\includegraphics[clip,width=.45\linewidth]%
				{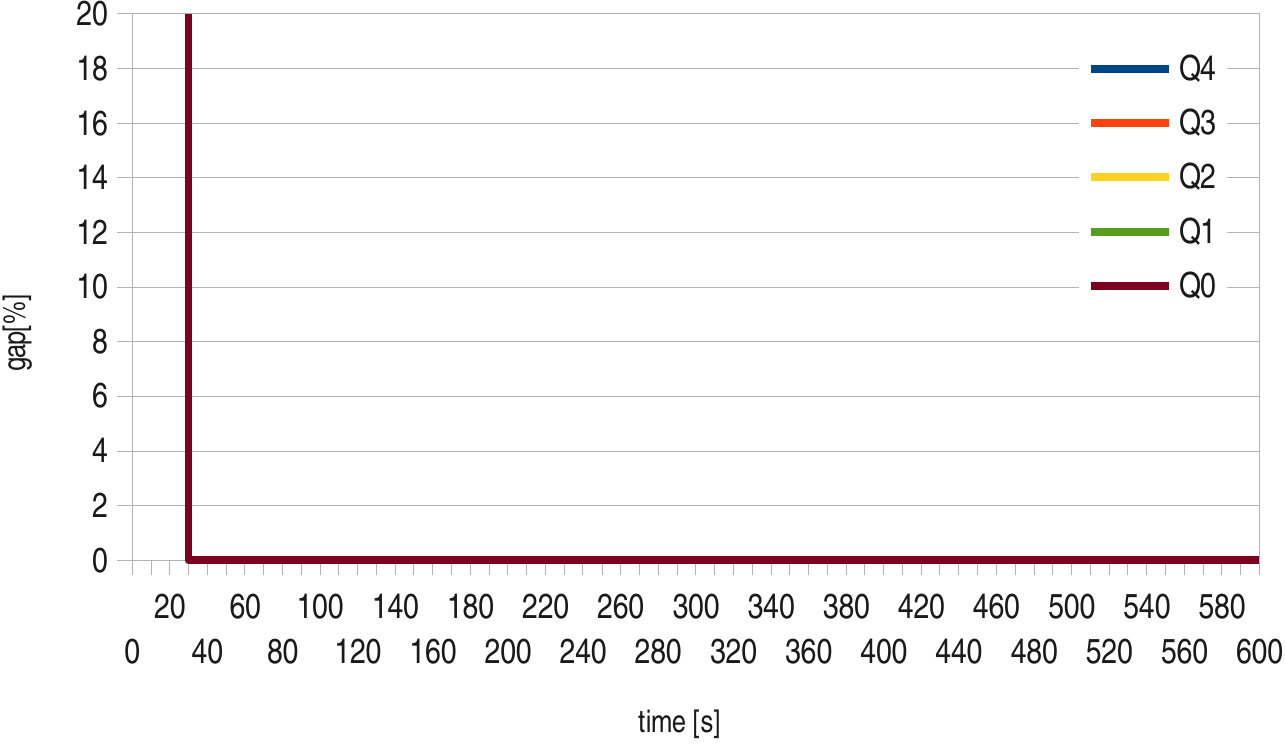}
		\label{fig:lp-sp200}
	}
	\subfigure[\acs{EC}]{
		\includegraphics[clip,width=.45\linewidth]%
				{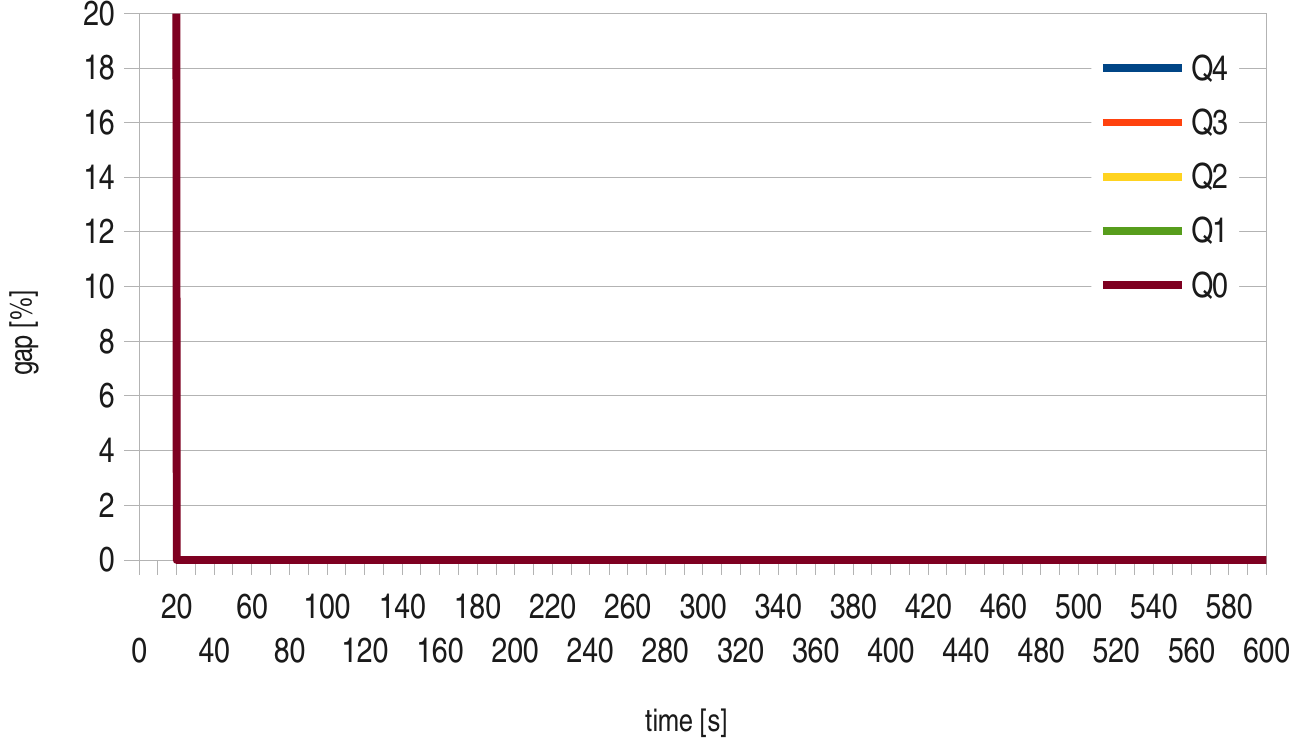}
		\label{fig:lpoc-sp200}
	}
	\subfigure[\acs{SC}3]{
		\includegraphics[clip,width=.45\linewidth]%
				{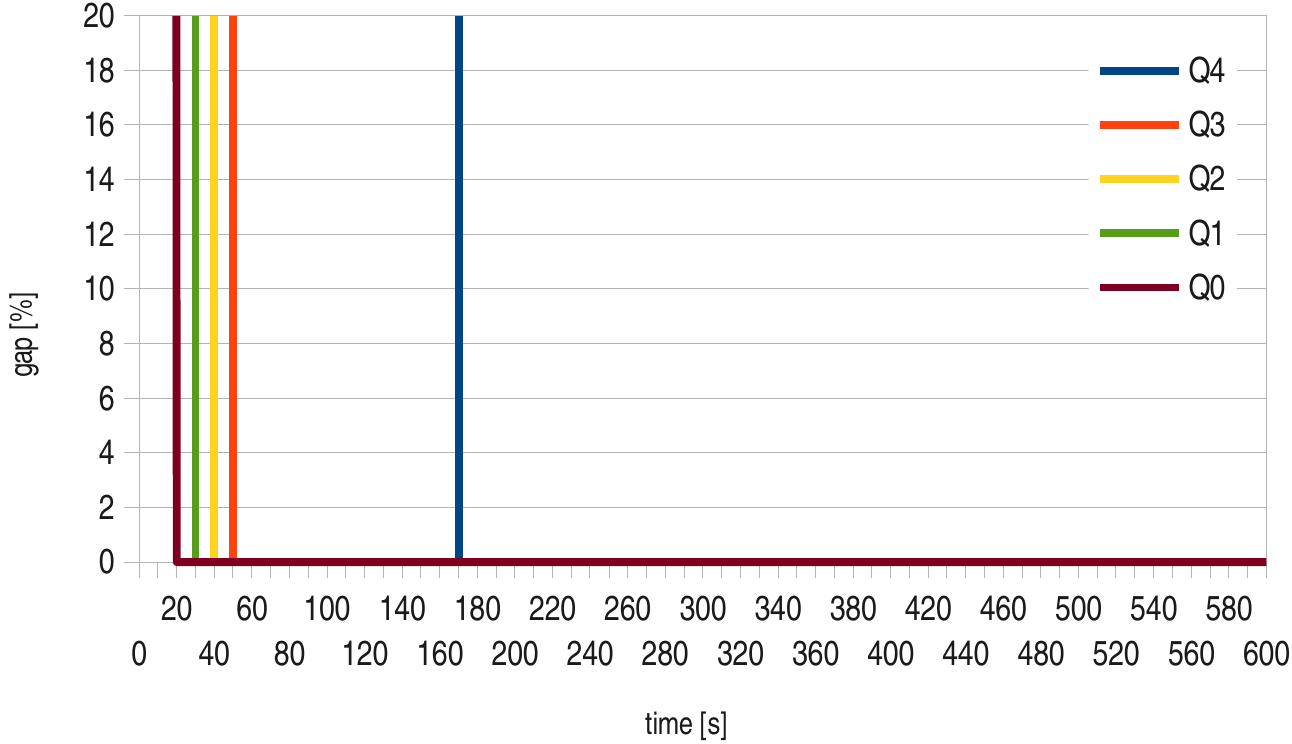}
		\label{fig:lpsc3-sp200}
	}
	\subfigure[\acs{SC}3 and \acs{EC}]{
		\includegraphics[clip,width=.45\linewidth]%
				{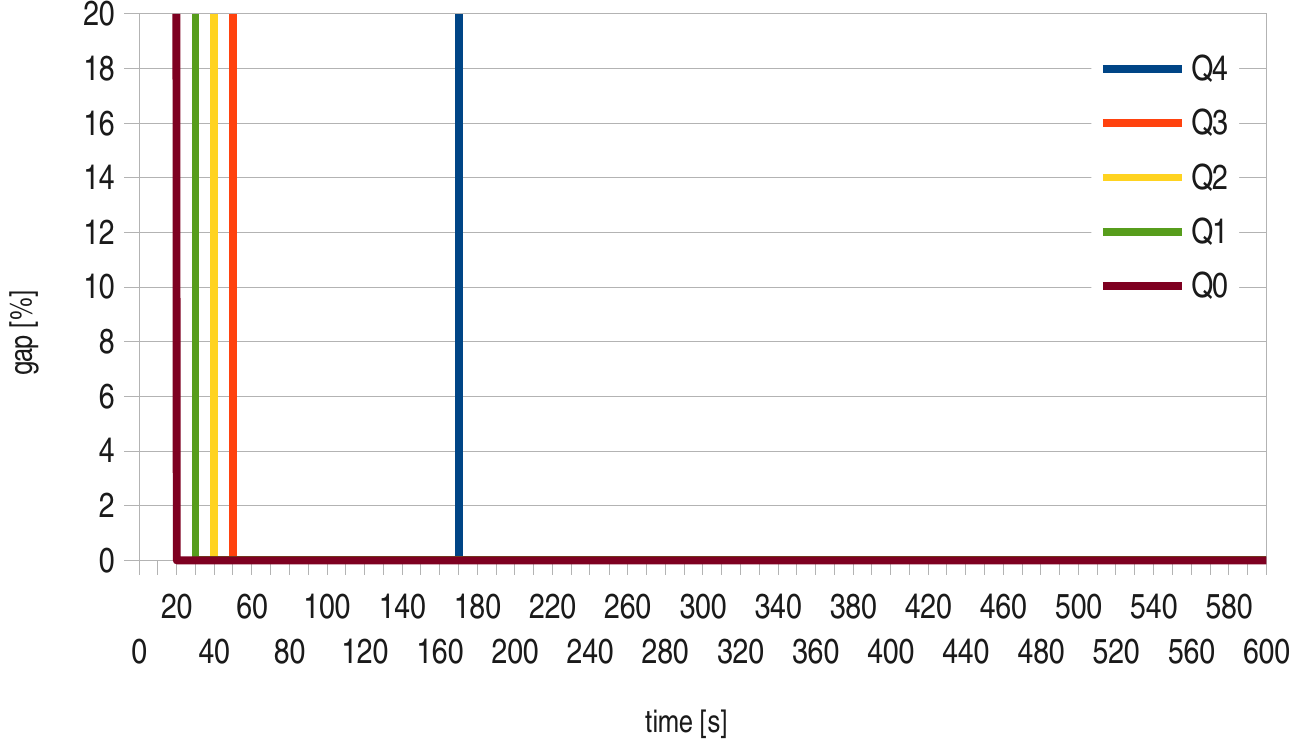}
		\label{fig:lpsc3oc-sp200}
	}
	\subfigure[\acs{SC}4]{
		\includegraphics[clip,width=.45\linewidth]%
				{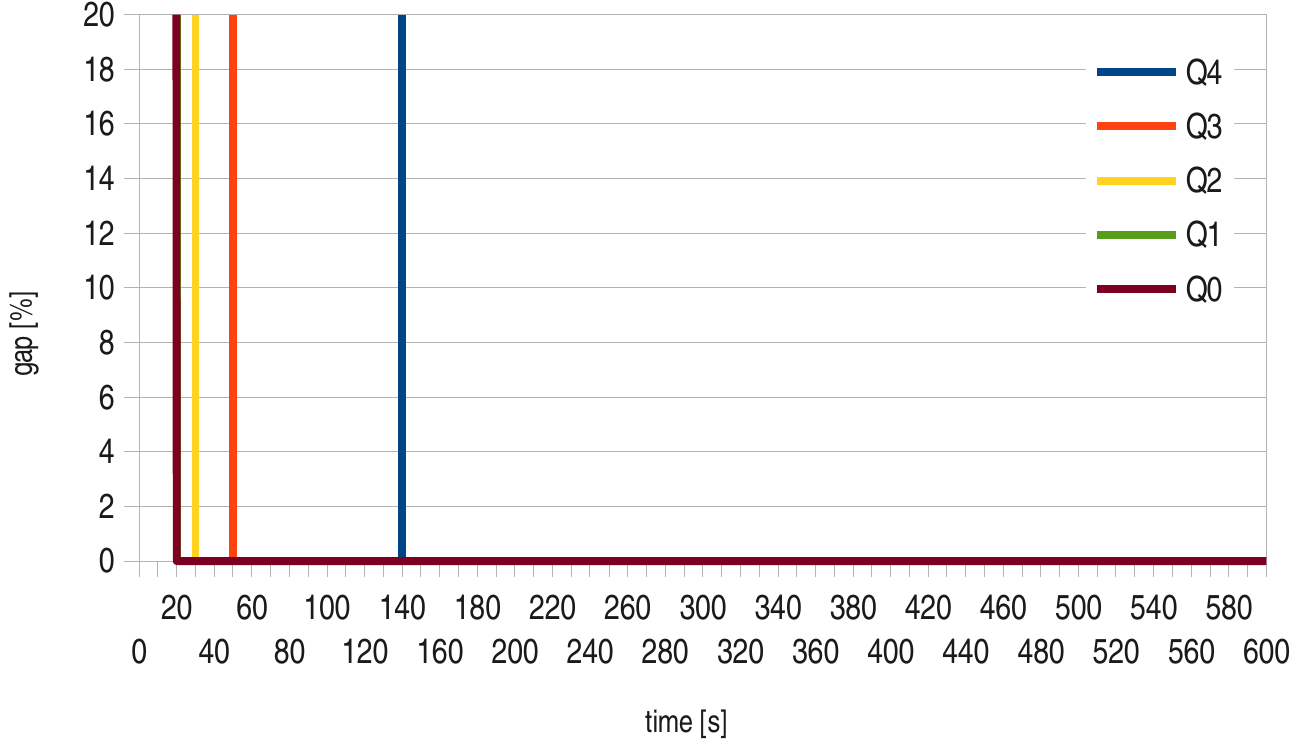}
		\label{fig:lpsc4-sp200}
	}
	\caption[Relative Gap over Time in \acs{LP}-Mode: Spike]{%
	Relative gap over time in \acs{LP}-mode for the 200-vertex
	\emph{Spike}-type polygons.
	}
	\label{fig:lp-sp200-all}
\end{figure}

Figures~\ref{fig:lp-ko500-all},~\ref{fig:lp-or500-all},~\ref{fig:lp-si500-all}
and~\ref{fig:lp-sp200-all} show
the relative gap over time diagrams for the 500-vertex \emph{von Koch},
\emph{Orthogonal} and \emph{Simple}, as well as the 200-vertex
\emph{Spike} polygons.

For the \emph{von Koch} polygons in Figure~\ref{fig:lp-ko500-all}, the \acs{EC}
separator provides a slight improvement, the \acs{SC}3 separator a stronger
improvement; the best result is obtained when using both of them.
\acs{SC}4 separation is weaker than \acs{SC}3 separation.

The EC separator does not improve the situation for the \emph{Orthogonal}
polygons, Figure~\ref{fig:lp-or500-all}, but the SC3 separator boosts solution percentage beyond 75\%.
Joint application of both separators results in even smaller gaps.
SC4 cut separation yields mixed results:
The median relative gap is larger, while the \ac{Q3} gap is smaller but
takes approximately \SI{350}{s} longer to reach its value.
In addition, the \ac{Q1} curve takes much longer to drop to zero as well.

The situation for the \emph{Simple} polygons in Figure~\ref{fig:lp-si500-all}
is as follows.
\acs{EC} separation has no impact on the results, when applied alone as
well as when used jointly.
As above, \acs{SC}3 separation is better than \acs{SC}4 separation,
it solves more instances.

\acs{EC} cuts have no impact on the \emph{Spike} polygons, Figure~\ref{fig:lp-sp200-all}, but \acs{SC}3
helps solving all of them instead of less than 25\%.
In this case, the \acs{SC}4 separator is approximately \SI{10}{s} faster in
the \ac{Q1}, \ac{Q2} and \ac{Q3} quartiles and \SI{30}{s} for the maximum.

We present Table~\ref{tab:lp} that summarizes the solution percentage after
\SI{600}{s} in LP mode as well as the median relative gap.

\begin{table}
	\centering
	\begin{tabular}{|r|c|c|c|c|c|}
		\hline
		Vertices & no cuts & \acs{EC} & \acs{SC}3 & \acs{EC} and \acs{SC}3 & \acs{SC}4 \\
		\hline
		\multicolumn{6}{c}{von Koch} \\
		\hline
		  60 &  90\%, 0.0\% &  90\%, 0.0\% & 100\%, 0.0\% & 100\%, 0.0\% & 100\%, 0.0\% \\
		 200 &  30\%, 0.0\% &  20\%, 0.0\% &  90\%, 0.0\% &  90\%, 0.0\% &  80\%, 0.0\% \\
		 500 &  20\%, 1.4\% &  50\%, 0.0\% &  60\%, 0.0\% &  90\%, 0.0\% &  70\%, 0.0\% \\
		1000 &   0\%, n/a   &  20\%, 0.0\% &  60\%, 0.0\% &  40\%, 0.0\% &  30\%, 0.0\% \\
		\hline
		\multicolumn{6}{c}{Orthogonal} \\
		\hline
		  60 &  80\%, 0.0\% &  80\%, 0.0\% & 100\%, 0.0\% & 100\%, 0.0\% & 100\%, 0.0\% \\
		 200 &  40\%, 7.5\% &  60\%, 1.7\% &  90\%, 0.0\% &  90\%, 0.0\% &  90\%, 0.0\% \\
		 500 &  10\%, 1.3\% &  10\%, 1.3\% &  80\%, 0.0\% &  90\%, 0.0\% &  80\%, 1.3\% \\
		1000 &   0\%, n/a   &   0\%, n/a   &  40\%, 2.0\% &  40\%, 2.1\% &  40\%, 2.0\% \\
		\hline
		\multicolumn{6}{c}{Simple} \\
		\hline
		  60 & 100\%, 0.0\% & 100\%, 0.0\% & 100\%, 0.0\% & 100\%, 0.0\% & 100\%, 0.0\% \\
		 200 &  40\%, 3.5\% &  50\%, 0.0\% & 100\%, 0.0\% &  90\%, 0.0\% & 100\%, 0.0\% \\
		 500 &  10\%, 0.0\% &  10\%, 0.0\% &  80\%, 0.0\% &  80\%, 0.0\% &  70\%, 0.0\% \\
		1000 &   0\%, n/a   &   0\%, n/a   &  50\%, 0.7\% &  50\%, 0.0\% &  50\%, 0.7\% \\
		\hline
		\multicolumn{6}{c}{Spike} \\
		\hline
		  60 & 100\%, 0.0\% & 100\%, 0.0\% & 100\%, 0.0\% & 100\%, 0.0\% & 100\%, 0.0\% \\
		 200 &  10\%, 0.0\% &  20\%, 0.0\% & 100\%, 0.0\% & 100\%, 0.0\% & 100\%, 0.0\% \\
		 500 &   0\%, n/a   &   0\%, n/a   &  20\%, 0.0\% &  20\%, 0.0\% &  10\%, 0.0\% \\
		1000 &   0\%, n/a   &   0\%, n/a   &   0\%, n/a   &   0\%, n/a   &   0\%, n/a   \\
		\hline
	\end{tabular}
	\caption[Percentage solved and Gap in \acs{LP}-Mode]{%
	After \SI{600}{s}, for each polygon/size combination and for every
	tested cut separator combination, this table shows
	for how many percent of the polygons a binary solution was found as well
	as their median relative gap.
	}
	\label{tab:lp}
\end{table}

\section{Conclusion}\label{sec:concl}

In this paper, we have shown how we can exploit both geometric properties and
polyhedral methods of mathematical programming to solve a classical and natural,
but highly challenging problem from computational geometry.

We have shown how to integrate cutting planes into linear programming formulations
of the Art Gallery Problem (AGP), a linear program with a potentially infinite
number of both variables and constraints.
Additionally, we provided three trivial and two non-trivial facets of the AGP
polytope based on {\sc Set Cover} and {\sc Edge Cover}, including a complete list of all AGP facets with coefficients in
$\{0,1,2\}$.

Furthermore, we have exploited the underlying geometric properties of the AGP to
identify a subset of one of our facet classes, that
\begin{enumerate}
\item can be separated in polynomial time, although the general separation problem
is \NP-complete.
\item is theoretically justified by showing that geometry behind the cutting planes
is star-shaped for the cases excluded in separation.
\item is justified by experimental data.
\end{enumerate}

This promises to pave the way for a range of practical AGP applications that
have to deal with additional real-life aspects.
We are optimistic that our basic approach can also be used for other geometric
optimization problems.


\bibliographystyle{spmpsci}      

\end{document}